\documentclass[11pt]{article}
\usepackage[left=0.75in,right=0.75in,top=0.5in,bottom=0.75in]{geometry}
\usepackage{scrlayer-scrpage}

\usepackage[svgnames,dvipsnames]{xcolor}
\usepackage{breakcites}
\definecolor{lightblue}{HTML}{044E9E}
\usepackage[breaklinks=true, hyperfootnotes = true,hypertexnames=false,colorlinks,citecolor=lightblue,urlcolor=lightblue, linkcolor=lightblue]{hyperref}

\usepackage{mathtools}
\mathtoolsset{showonlyrefs}
\numberwithin{equation}{section}

\usepackage{enumerate}
\usepackage{enumitem}
\usepackage{multirow}

\usepackage{bm}
\usepackage{amsmath}
\usepackage{graphicx}
\usepackage{booktabs}
\usepackage{psfrag}
\usepackage{epsf}
\usepackage[round,longnamesfirst]{natbib}
\usepackage{url} \usepackage{amsthm}
\usepackage{xpatch}
\usepackage[normalem]{ulem}

\usepackage{longtable}
\usepackage{mathrsfs}
\usepackage{tikz}
\usepackage{color}
\usepackage{amssymb}
\usepackage{latexsym}
\usepackage{xr}
\usepackage{epstopdf}
\usepackage{float}
\usepackage{dsfont}
\usepackage{mathabx}

\def\RR{\mathbb R}

\def\NN{\mathbb N}

\def\EE{\mathbb E}
\def\PP{\mathbb P}

\newcommand{\argmin}{{\rm{argmin}}}

\newcommand{\clh}{\mathcal{H}}

\newcommand{\clb}{\mathcal{B}}

\newcommand{\clk}{\mathcal{K}}

\newcommand{\cln}{\mathcal{N}}
\newcommand{\cls}{\mathcal{S}}

\def\calF{\mathcal F}

\def\calG{\mathcal G}
\def\clH{\mathcal H}

\def\calK{\mathcal K}

\def\clp{\mathcal{P}}

\def\calS{\mathcal S}

\newcommand{\distr}{\operatorname{d}}

\newcommand{\op}{\operatorname{op}}

\def\TV{\text{TV}}

\newtheorem{theorem}{Theorem}[section]
\newtheorem{proposition}{Proposition}[section]
\newtheorem{lemma}{Lemma}[section]
\newtheorem{corollary}{Corollary}[section]
\newtheorem{assumption}{Assumption}[section]

\theoremstyle{definition}
\newtheorem{remark}{Remark}[section]

\begin{document} 

\allowdisplaybreaks

\title{RKHS-Based Inference for Nonlinear Granger Causality \\ via Conditional Centering
\footnote{AMS subject classification.
Primary: 62M02, 62M10.
Secondary: 62G20.}
\footnote{Keywords: Granger causality, kernel ridge regression, reproducing kernel Hilbert space, regularization methods, nonlinear time series, Markov chains.}
}

\author{
Yuhan Tian\textsuperscript{a}
\and
Adam Waterbury\textsuperscript{b}
\and
Marie-Christine D\"uker\textsuperscript{a}\footnote{Corresponding author (\texttt{marie.dueker@tum.de})}
}

\date{}

\maketitle

\begin{center}
\textsuperscript{a}
Department of Mathematics, School of Computation, Information and Technology,\\
Technical University of Munich, Boltzmannstra{\ss}e 3,
85748 Garching bei M\"unchen, Germany

\vspace{0.5em}

\textsuperscript{b}
Department of Mathematics, Denison University,\\
100 West College Street, Granville, OH 43023, USA

\end{center}

\begin{abstract}
\noindent
Granger causality is commonly formulated through linear prediction in vector autoregressive models, limiting its ability to detect nonlinear predictive relationships. We propose a reproducing kernel Hilbert space (RKHS)-based test for nonlinear Granger non-causality in conditional mean for nonlinear autoregressive processes. The key idea is a conditional-centering decomposition of the target regression function into an own-history component and an orthogonal component capturing the additional predictive contribution of the potential source history. Non-causality is
characterized by the vanishing of the latter component. The own-history component is estimated by kernel ridge regression, and the residuals are embedded in a second RKHS using a conditionally centered kernel. This yields an RKHS-valued residual moment whose squared norm forms the test statistic.
We establish a weighted chi-square null limit and consistency against fixed alternatives for the population-centered statistic. An empirically centered version is shown to retain the null limit and enables spectral calibration of critical values and $p$-values without resampling.
Simulation studies demonstrate accurate size control and power against nonlinear alternatives, and real-data applications illustrate the usefulness of the method for detecting nonlinear predictive relationships in time series.
\end{abstract}

\section{Introduction}

Granger causality formalizes whether the past of one time series improves prediction of another
time series \citep{granger1969investigating,g81}. In linear vector autoregressions, this question
reduces to restrictions on finitely many regression coefficients, and the corresponding asymptotic
theory is classical \citep{Hannan1970:Multiple,geweke1982measurement,lutkepohl2013introduction}. Modern applications,
however, often involve feedback, threshold effects, and other nonlinear dependencies for which
linear restrictions do not provide a satisfactory description of predictive content. Recent reviews
and methodological developments emphasize this gap between the general predictive idea of
Granger causality and the linear models most commonly used for inference
\citep{tank2021neural,shojaie2022granger}.

This paper develops a novel RKHS-based statistical test for nonlinear Granger non-causality in conditional mean. To our knowledge, it is the first to combine a population-level orthogonal decomposition of the conditional mean with conditional centering to obtain an RKHS-valued moment condition and an asymptotic test for nonlinear Granger causality. Our approach proceeds from an oracle construction based on a population conditionally centered kernel to a feasible procedure that estimates the conditional-centering map from the observed data while retaining the same asymptotic null distribution. We further identify settings in which conditional centering can be weakened, including inference for nonlinear departures from linear autoregressive dependence and independence under stationarity. The resulting procedure admits spectral calibration without resampling and is illustrated through simulations and two real-data applications.

Several strands of the literature have addressed nonlinear Granger causality beyond the classical linear vector autoregressive framework. Early nonparametric approaches formulate nonlinear Granger non-causality through conditional distributional restrictions. For example, \citet{hiemstra1994testing} proposed a correlation-integral-based test, while \citet{diks2006new} developed a corrected density-based procedure for conditional dependence. These methods target a distributional notion of Granger non-causality and rely on nonparametric estimation of conditional dependence structures.

A related line of work studies nonlinear causality in conditional mean. \citet{nishiyama2011consistent} formulate the null hypothesis as a conditional moment restriction and construct a weighted chi-square test using orthogonal decompositions of function spaces and user-specified basis expansions. The present work is conceptually related in that nonlinear Granger non-causality is also characterized through an orthogonality condition. The novel step in our approach is to use conditional centering to obtain a kernel representation of the relevant orthogonal component, leading to a single RKHS-valued moment condition rather than a finite collection of scalar moment conditions (see \eqref{eq:ZT_def}). Section~\ref{se:comparison_nishiyama} provides a finite-sample comparison
with \citet{nishiyama2011consistent}. Related
specification-testing approaches use Gaussian-process and kernel
conditional-moment constructions \cite{Muandet2020,escanciano2024gaussian}.

Kernel methods for nonlinear Granger causality provide another
important connection. \citet{marinazzo2008kernel} introduced a nonlinear Granger causality measure based on geometric projections in reproducing kernel feature spaces, and \citet{marinazzo2008kernelgranger} extended this idea to dynamical networks. These methods quantify the additional predictive information contributed by a candidate causal series through empirical feature-space constructions, but do not formulate causality through a population-level decomposition of the conditional mean function or derive an asymptotic null distribution for the resulting causality measure.

More broadly, kernel methods have also been used to construct nonparametric tests of dependence through the Hilbert--Schmidt independence criterion (HSIC) \citep{GreBouSmoSchHSIC}, which compares RKHS embeddings of the joint distribution with those induced by the product of the marginals. While classical HSIC theory was developed for independent observations, several extensions have addressed dependent and time-series data \citep{NIPS2008_addfa9b7,pmlr-v32-chwialkowski14,WangLiZhuHSIC}. Similar to HSIC, the proposed statistic is based on RKHS-valued kernel moments and leads to a kernel-based quadratic-form test statistic. However, while HSIC tests dependence, our framework tests nonlinear Granger non-causality through an orthogonal decomposition of the conditional mean function. The resulting asymptotic analysis therefore relies on RKHS-valued moments associated with the autoregressive structure rather than on joint-distribution embeddings.

More recent work has emphasized flexible predictive models for nonlinear network recovery. Neural Granger causality methods identify causal relationships through structured sparsity in neural-network predictors \citep{tank2021neural,sultan2024granger}. Other approaches employ nonlinear state-space transformations \citep{wismuller2021large} or kernel ridge regression predictors \citep{fulmyk2023nonlinear} and assess causality through predictive improvement. These methods are primarily designed for edge recovery and prediction, rather than for inference.

In contrast, the present paper starts from a population-level orthogonal decomposition of the conditional mean and identifies nonlinear Granger causality with the component that cannot be explained by the response history alone. Our novel use of conditional centering provides a kernel representation of this component and leads naturally to an RKHS-valued moment whose squared norm forms the test statistic. Under nonlinear Granger non-causality, the relevant component vanishes and the moment is centered at zero. This yields a nonparametric inferential framework based on an RKHS-valued moment condition, rather than on predictive improvement, variable selection, or finite-dimensional basis expansions.

Our main technical contribution is to establish oracle equivalence for the empirically centered statistic, showing that replacing population conditional centering by its empirical counterpart does not change the asymptotic null distribution. The main difficulty arises from the dependence between the estimated centering map and the innovation-weighted statistic. We handle this dependence through an exact regularized-operator factorization that isolates the interaction between the centering error and the innovation term. These terms are controlled using geometric ergodicity, empirical operator bounds, and martingale arguments. Together with Hilbert-space weak-convergence results, this establishes the asymptotic equivalence between the feasible and oracle statistics.

The asymptotic results also lead to a practical calibration procedure. The
weighted chi-square distribution is approximated through the eigenvalues of
the corresponding empirically centered Gram matrix, yielding
critical values and $p$-values without resampling or repeated model fitting.
This spectral approximation is supported by
the established oracle-equivalence and empirical covariance-operator results.
Simulation studies show that the procedure achieves good empirical size
control and substantial power against nonlinear alternatives across a range
of regularization choices and innovation distributions. The proposed methodology is implemented in the R package \texttt{GrangerRKHS}.

We illustrate the proposed procedure in two real-data applications involving cardiorespiratory and solar--geomagnetic time series. The applications demonstrate its ability to detect nonlinear and potentially time-varying predictive relationships that may be only partially captured by classical linear Granger causality.

The rest of the paper is organized as follows.
Section~\ref{sec:modelandproblemformulation} introduces the nonlinear Granger
causality problem and the orthogonal decomposition used throughout the paper.
Section~\ref{se:RKHS-test} defines the kernel ridge regression residuals, the
oracle statistic based on a population conditionally centered kernel, and the empirical-centering construction.
Section~\ref{sec:theoreticalresults} establishes the asymptotic null
distribution and fixed-alternative behavior of the oracle statistic and proves the asymptotic validity of the feasible procedure under the null.
Section~\ref{se:practical} summarizes implementation and describes the
spectral calibration of the statistic.
Section \ref{subsec:weaken_conditional_centering} discusses alternative ways to handle the first-stage contribution, including inference for departures from a linear autoregressive baseline and a simpler centering construction
when the stationary law factorizes under the Granger non-causality null.
Simulations and data applications can be found in
Sections~\ref{sec:simulation} and~\ref{se:real_data}. Appendix \ref{app:notation_tables} collects the main notation used throughout
the paper in a series of reference tables.  Proofs are collected in
Appendices~\ref{app:appendixasymptotic_distribution_consistency}--\ref{appendix:proofSec6}. A comprehensive sensitivity analysis of the tuning
parameters, together with a discussion of their selection, is included in
Appendix \ref{sec:appendix_bandwidth}.

\textit{Notation:}
Let $\NN\doteq\{1,2,\ldots\}$ and
$\NN_0\doteq\NN\cup\{0\}$. Throughout, all random objects are
defined on a common probability space. Probabilities and expectations
are denoted by $\PP$ and $\EE$, respectively. For a probability
measure $\mu$, we write $\EE_\mu$ for expectation with respect to
$\mu$.
For measurable spaces $\calS_1$ and $\calS_2$, we write
$\clb(\calS_1:\calS_2)$ for the collection of measurable maps from
$\calS_1$ to $\calS_2$. For a Polish space $\calS$, let
$\mathcal B(\calS)$ denote its Borel $\sigma$-field and let
$\clp(\calS)$ denote the set of probability measures on $\calS$.
For $x\in\RR^m$, $m\in\NN$, we write $\|x\|^2 \doteq \sum_{i=1}^m x_i^2$.
For a bounded real-valued function $h$, set $\|h\|_\infty \doteq \sup_x |h(x)|$.
For probability measures $\mu,\nu\in\clp(\calS)$, the total variation distance is $\|\mu-\nu\|_{\TV}
\doteq
\sup_{A\in\mathcal B(\calS)}
|\mu(A)-\nu(A)|$.
For random elements $Z_T$ in a normed space, and for a deterministic
positive sequence $\{\alpha_T\}$, we write
$Z_T=O_{\PP}(\alpha_T)$ if
$\|Z_T\|/\alpha_T=O_{\PP}(1)$, and
$Z_T=o_{\PP}(\alpha_T)$ if
$\|Z_T\|/\alpha_T=o_{\PP}(1)$. In particular,
$Z_T=o_{\PP}(1)$ means that $Z_T\to0$ in probability. We write
$Z_T\xrightarrow{\PP}Z$ for convergence in probability,
$Z_T\xrightarrow{\distr}Z$ for convergence in distribution, and
$Z_T\overset{\distr}{=}Z$ for equality in distribution.
For a Hilbert space $\mathcal H$, let $I_{\mathcal H}$ denote the identity
operator on $\mathcal H$; for $n\in\NN$, let $I_n$ denote the $n\times n$
identity matrix.
For bounded self-adjoint operators $A,B:\mathcal H\to\mathcal H$, write
$A\preceq B$ if
$\langle h,Ah\rangle_{\mathcal H}\leq\langle h,Bh\rangle_{\mathcal H}$
for every $h\in\mathcal H$.
For Hilbert spaces $\clh_1$ and $\clh_2$, and a bounded linear
operator $A:\clh_1\to\clh_2$, define
\begin{equation}
\|A\|_{\op}
\doteq
\sup_{\|h\|_{\clh_1}\le 1}\|Ah\|_{\clh_2}.
\end{equation}
We denote by $\cls_2(\clh_1,\clh_2)$ the Hilbert space of
Hilbert--Schmidt operators from $\clh_1$ to $\clh_2$. Its inner
product and induced norm are
\begin{equation}
\langle A,B\rangle_{\cls_2(\clh_1,\clh_2)}
\doteq
\sum_{i=1}^{\infty}
\langle A e_i,B e_i\rangle_{\clh_2},
\qquad
\|A\|_{\cls_2(\clh_1,\clh_2)}^2
\doteq
\sum_{i=1}^{\infty}
\|A e_i\|_{\clh_2}^2,
\end{equation}
where $\{e_i\}$ is an orthonormal basis of $\clh_1$. When the
domains are clear from context, we suppress them and write
$\cls_2$ and $\|\cdot\|_{\cls_2}$.
For a Hilbert space $\mathcal H$, we say that $U$ is a
centered $\mathcal H$-valued Gaussian random variable with covariance
operator $\Gamma$ if $\Gamma:\mathcal H\to\mathcal H$ is positive,
self-adjoint, and trace-class, and, for every $h,g\in\mathcal H$,
\begin{equation}
\operatorname{Cov}\bigl(
\langle U,h\rangle_{\mathcal H},
\langle U,g\rangle_{\mathcal H}
\bigr)
=
\langle \Gamma h,g\rangle_{\mathcal H}.
\end{equation}
Given two Hilbert spaces $\clh_1$ and $\clh_2$, each pair of elements $h_1 \in \clh_1$ and $h_2 \in \clh_2$ defines a rank-one operator $h_1 \otimes h_2: \clh_2 \to \clh_1$ via $(h_1 \otimes h_2)h \doteq \langle h, h_2 \rangle_{\clh_2} h_1$ for all $h \in \clh_2$.

\section{Model and Problem Formulation}\label{sec:modelandproblemformulation}

Consider a nonlinear vector autoregressive model of the form
\begin{equation}
\begin{aligned}
X_t &= g_X(X_{t-1},Y_{t-1})+\varepsilon_{X,t}, \\
Y_t &= g_Y(X_{t-1},Y_{t-1})+\varepsilon_{Y,t},
\end{aligned}
\label{eq:nonlinearVAR}
\end{equation}
for $t\in\NN$, where $\{(X_t,Y_t)\}_{t\in\NN_0}$ is a bivariate time series taking values in $\RR^2$,  $g_X,g_Y\in\clb(\RR^2:\RR)$ are measurable functions, and $\{\varepsilon_{X,t}\}$ and $\{\varepsilon_{Y,t}\}$ are suitable innovation sequences. For each $t\in\NN_0$, let $\mathcal F_t\doteq\sigma((X_s,Y_s):0\le s\le t)$ denote the natural filtration generated by the process $\{(X_t,Y_t)\}_{t\in\NN_0}$. We assume that the innovations  $\varepsilon_{X,t}$ and $\varepsilon_{Y,t}$ are $\mathcal F_t$-measurable, independent of $\mathcal{F}_{t-1}$, and satisfy $\EE[\varepsilon_{X,t}\mid\mathcal F_{t-1}] = \EE[\varepsilon_{Y,t} \mid\mathcal F_{t-1}]=0$ for every $t\in\NN$. Under the assumptions stated below, the Markov chain $\{(X_t,Y_t)\}_{t\in\NN_0}$ admits a unique stationary distribution $\pi \in \clp(\RR^2)$.  The Hilbert space
\begin{equation}
L^2(\pi)
=
\left\{
g \in \clb(\RR^2:\RR) :
\EE_{\pi}[(g(X_0,Y_0))^2]
=
\int_{\RR^2} (g(x,y))^2   \pi(dx,dy)
< \infty
\right\}
\end{equation}
is equipped with the usual inner product. Throughout, we suppress the distinction between functions in the various RKHSs and equivalence classes in the corresponding \(L^2\) spaces, as it is generally immaterial for this work. Denote by $\pi_X \in \clp(\RR)$ the marginal stationary distribution of $X$. Similarly, $L^2(\pi_X)$ denotes the space of square-integrable functions $f \in \clb(\RR:\RR)$ with respect to $\pi_X$. Nonlinear Granger non-causality from $Y$ to $X$ is defined by the condition that the conditional mean of $X_t$ depends only on $X_{t-1}$, that is,
\begin{equation}\label{eq:H0_original}
H_0: g_X(x,y) = f_X(x)
\quad \text{for } \pi\text{-a.e. }(x,y)\in\RR^2
\end{equation}
for some function $f_X \in L^2(\pi_X)$. However, directly testing whether a bivariate function $g_X(x,y)$ is invariant in its second argument is not straightforward. A natural approach would be to test whether the derivative with respect to $y$ vanishes, but this typically requires additional smoothness assumptions. To obtain a formulation that is both well-defined and convenient for testing, we instead use an orthogonal decomposition based on conditional expectation.

Consider the conditional expectation operator $\Pi_X : L^2(\pi) \to L^2(\pi_X)$ defined by
\begin{equation}
(\Pi_X g)(x) = \EE_{\pi}[g(X_0,Y_0) \mid X_0 = x]
= \int_{\RR} g(x,y)  \pi_{Y\mid X}(dy \mid x),
\end{equation}
where $\pi_{Y\mid X}(\cdot \mid x)$ is a regular conditional probability distribution such that
\begin{equation}
\pi(dx,dy) = \pi_{Y\mid X}(dy \mid x)  \pi_X(dx).
\end{equation}
Its adjoint $\Pi_X^* : L^2(\pi_X) \to L^2(\pi)$ is given by $\Pi_X^* f(x,y) = f(x)$.
Define $R = \Pi_X^* \Pi_X$.
The operator $R$ is bounded, self-adjoint, and idempotent, and hence an orthogonal projection. In particular, its range is closed, which yields the decomposition
\begin{equation}\label{eq:decomposition_L2}
L^2(\pi) = L_X^2(\pi) \oplus L_{X,0}^2(\pi),
\end{equation}
where
\begin{align}
L_X^2(\pi)
\notag&=
\left\{
g \in L^2(\pi) : g(x,y) = f(x) \text{ for }\pi\text{-a.e. }(x,y)\in\RR^2 \text{ for some } f \in L^2(\pi_X)
\right\},\\
L_{X,0}^2(\pi)
&=
\left\{
g \in L^2(\pi) : \EE_{\pi}[g(X_0,Y_0) \mid X_0] = 0 \ \text{a.s.}
\right\}. \label{eq:L2x0pidefinition}
\end{align}

Throughout the paper, we use $f$ to denote functions of a single variable and $g$ to denote functions of two variables, unless otherwise stated. Whenever $g_X\in L^2(\pi)$, which is ensured by Assumption~\ref{assump:bounded_dynamics}, applying the decomposition in \eqref{eq:decomposition_L2} to the regression function $g_X$ in \eqref{eq:nonlinearVAR} yields
\begin{equation}\label{eq:obs_decomposed}
X_t = f_X(X_{t-1}) + g_{X,0}(X_{t-1}, Y_{t-1}) + \varepsilon_{X,t},
\end{equation}
where
$f_X = \Pi_X g_X \in L^2(\pi_X)$, 
$\Pi_X^* f_X = Rg_X \in L_X^2(\pi)$
and $g_{X,0} = (I_{L^2(\pi)} - R)g_X \in L_{X,0}^2(\pi)$. This decomposition is unique in $L^2(\pi)$. It follows from Lemma~\ref{lem:gc_equivalence} that the stationary $L^2(\pi)$ definition of nonlinear Granger non-causality \eqref{eq:H0_original} is equivalent to
\begin{equation}\label{eq:H0_reformulated}
H_0: g_{X,0} = 0
\quad \text{in } L^2(\pi),
\end{equation}
that is, testing whether the part of $g_X$ that depends on $Y$ is zero in the stationary $L^2(\pi)$ sense. This reformulation shows that the problem of testing for nonlinear Granger causality reduces to extracting information about the function $g_{X,0}$ from the observed data. The next section develops an approach based on reproducing kernel Hilbert spaces and kernel ridge regression for this purpose, under the assumption that the regression functions belong to the appropriate RKHS.

\begin{remark}
The decomposition in \eqref{eq:obs_decomposed} is not restricted to the bivariate lag-1 setting. In a general nonlinear VAR model, one may write
\begin{equation}
g_X(U, V) = f_X(U) + g_{X,0}(U, V),
\end{equation}
where $U$ denotes the collection of conditioning variables, including the lagged values of the target variable $X$ and all other variables conditioned upon, and $V$ denotes the collection of lagged variables corresponding to the potential source of Granger causality being tested. In this setting, the null hypothesis is that the collection of variables $V$ does not Granger cause $X$ conditional on $U$, which is equivalent to testing whether $g_{X,0}= 0$. The methodology developed in this paper applies in this more general setting, but the bivariate lag-1 case is adopted here to simplify notation and exposition.
\end{remark}

\section{RKHS-Based Test Construction}
\label{se:RKHS-test}

The previous section shows that nonlinear Granger causality can be formulated as testing whether the component $g_{X,0} \in L^2_{X,0}(\pi)$ in the decomposition \eqref{eq:obs_decomposed} is zero. This formulation suggests that, if $f_X$ were known, then subtracting $f_X(X_{t-1})$ from $X_t$ would isolate the contribution of $g_{X,0}$. In practice, however, $f_X$ is unknown and must be estimated from data. This naturally leads to a residual-based approach, where the signal of interest is extracted from the residuals obtained by  estimating $f_X$. Throughout the remainder of the paper, we assume that the observed sample $\{(X_t,Y_t)\}_{t=0}^T$ is drawn from the stationary process \eqref{eq:nonlinearVAR}, with $(X_0,Y_0) \sim \pi$. This yields an effective sample size of $T$ for the regression problem.

\subsection{Kernel Ridge Regression Estimation of \texorpdfstring{$f_X$}{fX}}

The function $f_X$ is estimated using kernel ridge regression. We briefly introduce the key components here and refer to \cite{duker2025kernel} for a more detailed treatment of KRR in nonlinear VAR models. Let $\clh_1 \subset L^2(\pi_X)$ be a reproducing kernel Hilbert space associated with a symmetric positive definite kernel $K_1 : \RR \times \RR \to \RR$, equipped with inner product $\langle \cdot,\cdot\rangle_{\clh_1}$, and satisfying the reproducing property $f(x) = \langle f, K_1(x,\cdot) \rangle_{\clh_1}$, $f \in \clh_1$.
The estimator $\widehat f \in \clh_1$ is defined as
\begin{equation}
\widehat f
=
\argmin_{f \in \clh_1} \left\{
\frac{1}{T} \sum_{t=1}^T \left(X_t - f(X_{t-1})\right)^2
+
\lambda_{T,f} \|f\|_{\clh_1}^2\right\},
\label{eq:krr_objective}
\end{equation}
where $\lambda_{T,f} > 0$ is a regularization parameter. By the representer theorem \citep{scholkopf2001generalized}, the estimator admits the representation
\begin{equation}
\widehat f(x)
=
\sum_{t=1}^T \alpha_t K_1(X_{t-1}, x),
\qquad
\alpha
=
(\alpha_1, \dots, \alpha_T)^\top
=
\left(K_1 + \lambda_{T,f} T I_T\right)^{-1} X,
\label{eq:fhat_repr}
\end{equation}
where $X = (X_1,\dots,X_T)^\top$ and $K_1 \in \RR^{T \times T}$ is the Gram matrix with entries
\begin{align}
(K_1)_{ij} = K_1(X_{i-1}, X_{j-1}), \qquad  i,j = 1,\dots,T.
\label{eq:K1_gram_matrix_def}
\end{align}

To isolate the contribution of $g_{X,0}$, we remove the estimated own-history component and define
\begin{equation}
\widehat e_t = X_t-\widehat f(X_{t-1}),
\qquad
\widehat e=(\widehat e_1,\ldots,\widehat e_T)^\top.
\label{eq:residuals}
\end{equation}
Under the null, $\widehat f$ is consistent for $f_X$ by \citet{duker2025kernel}; however, finite-sample estimation error may leave residual variation aligned with functions of $X$, motivating the conditional-centering step in Section \ref{se:conditional_centering}.

\subsection{Test Statistic via a Conditionally Centered Kernel} \label{se:conditional_centering}

To further extract information about $g_{X,0}$ from the residuals, we introduce a second symmetric positive definite kernel
$
K_2^c :\RR^2\times\RR^2\to\RR,
$
with associated reproducing kernel Hilbert space $\clh_2 \subseteq L^2(\pi)$.
The kernel is used to construct the second-stage test statistic. It is therefore convenient to work with an RKHS $\clh_2$ that possesses the orthogonality structure separating the component $g_{X,0}$ from functions depending only on $X$, while remaining sufficiently rich to represent $g_{X,0}$. The following assumptions formalize these requirements. Assumption \ref{ass:cond_center} below introduces a conditional-centering property that facilitates the separation of the nuisance component $f_X$ from the signal of interest. 
Assumption~\ref{ass:rkhs_dynamics} then links the decomposition \eqref{eq:obs_decomposed} to the RKHSs used in the procedure.

\begin{assumption}
\label{ass:cond_center}
The kernel $K_2^c$ satisfies
\begin{equation}
\EE_{\pi}\!\left[
K_2^c\big((X_0,Y_0),\cdot \big)
\mid X_0
\right]
=
0
\quad
\text{a.s.}
\label{eq:cond_center}
\end{equation}
\end{assumption}

By Lemma~\ref{lem:cond_center_consequences} in
Appendix \ref{app:auxiliary_results}, Assumption~\ref{ass:cond_center} and
boundedness of $K_2^c$ imply that every $h\in\clh_2$ is conditionally
mean-zero given $X_0$. In particular, Lemma~\ref{lem:cond_center_consequences} implies that $\clh_2 \subseteq L^2_{X,0}(\pi)$,
so every function in $\clh_2$ is orthogonal to functions depending only on
$X$. Consequently, when the residuals \eqref{eq:residuals} are embedded into
$\clh_2$, the influence of the component $f_X$ is further mitigated, allowing
the second-stage analysis to more directly extract information about
$g_{X,0}$. We note, however, that this is not the only possible route to
inference. Section~\ref{subsec:weaken_conditional_centering} discusses
settings in which the same contribution from estimating $f_X$ can be controlled through
a linear-functional central limit theorem or through model-implied independence
under the null hypothesis. These alternatives also support related inferential targets, including nonlinear
deviations from a linear autoregressive baseline and independence of the processes under stationarity.

The conditional-centering condition is related to the extensive literature on
RKHS conditional mean embeddings. \citet{Grnewlder2012ConditionalME} showed
that conditional mean embeddings can be interpreted through Hilbert space-valued
regularized least-squares regression, and studied an empirical method for
estimating such embeddings. \citet{NEURIPS2020_f340f1b1} developed a
measure-theoretic treatment of conditional mean embeddings as conditional
expectations taking values in the relevant RKHS and established consistency of a regularized empirical
estimator. \citet{mollenhauer2020nonparametric} studied the related problem of
nonparametrically approximating conditional expectation operators on RKHS via
finite-rank operator approximations. \citet{li2022optimal} established sharp
learning rates for regularized RKHS conditional mean embedding estimators in
i.i.d.\ settings.

At the same time, \citet{klebanov2020rigorous} clarified the
operator-theoretic foundations of RKHS conditional mean embeddings and
emphasized that the assumptions needed to justify such constructions are
strong and, in practice, often difficult to verify. This motivates, in part,
our treatment of Assumption~\ref{ass:cond_center} as an explicit
population-level conditional-centering condition, rather than as a
straightforward consequence of other model-level assumptions.

Since Assumption~\ref{ass:cond_center} is formulated with respect to the
unknown stationary distribution $\pi$, a kernel satisfying
\eqref{eq:cond_center} is generally not available in closed form. We therefore first develop the
asymptotic theory in
Section~\ref{subsec:asymptotic_distribution_consistency} for an oracle
statistic based on population conditional centering. The construction in
Section~\ref{sec:empirical_conditional_centering} then provides a feasible
procedure by estimating the conditional-centering map from the observed data
and using it to construct an empirically centered kernel.
Section~\ref{sec:full_sample_empirical_centering} establishes that the
resulting statistic retains the oracle asymptotic null distribution under
explicit rate and tuning conditions.

\begin{assumption}
\label{ass:rkhs_dynamics}
Recall the decomposition
\eqref{eq:obs_decomposed},
assume that
\begin{equation}
f_X\in\clh_1,
\qquad
g_{X,0}\in\clh_2.
\end{equation}
\end{assumption}

Assumption~\ref{ass:rkhs_dynamics} says that both components of the regression
function can be represented within suitable RKHSs. The first-stage estimator
$\widehat f$ introduced above is constructed in $\clh_1$, while the second-stage
analysis is carried out in $\clh_2$. Together with the conditional-centering condition in Assumption \ref{ass:cond_center}, the
assumption that $f_X, \widehat{f} \in \clh_1$ and that $g_{X,0}\in\clh_2$ ensures that the first-stage and
second-stage components are suitably separated, so that the second-stage statistic targets 
$g_{X,0}$ rather than the approximation error of the first-stage estimator.

Given Assumption~\ref{ass:rkhs_dynamics}, it is natural to expect that
estimating $g_{X,0}$ requires fitting a second-stage kernel ridge regression in $\clh_2$. However, for the purpose of hypothesis testing, such an estimation
step is unnecessary. As discussed in
Section~\ref{sec:motivationforteststat} below, it is sufficient to
work directly with the RKHS-valued residual moment.  Recalling the  residual estimator from \eqref{eq:residuals}, we define 
\begin{equation}
Z_T(x,y)
=
\frac{1}{\sqrt T}
\sum_{t=1}^T
\widehat e_t
K_2^c\big((X_{t-1},Y_{t-1}),(x,y)\big), \qquad x,y \in \RR.
\label{eq:ZT_def}
\end{equation}
Note that  $Z_T\in\clh_2$ a.s. In this setting, where the conditionally-centered kernel is assumed to be available, the proposed test statistic is given by 
\begin{equation}
Q_T
=
\|Z_T\|_{\clh_2}^2
=
\frac{1}{T} \widehat e^\top K_2^c \widehat e,
\label{eq:QT_def}
\end{equation}
where $K_2^c\in\RR^{T\times T}$ is the Gram matrix with entries
\begin{equation}
(K_2^c)_{ij}
=
K_2^c\big(
(X_{i-1},Y_{i-1}),
(X_{j-1},Y_{j-1})
\big),
\qquad
i,j=1,\ldots,T.
\label{eq:mathbfK2grammatrixdef}
\end{equation}

As shown in Theorem~\ref{thm:ZT_limit} and
Corollary~\ref{cor:QT_limit}, under the null hypothesis in \eqref{eq:H0_reformulated},
$Z_T$ converges weakly to a centered Gaussian element in $\clh_2$, and
$Q_T$ converges to a weighted chi-square distribution, with weights determined by the eigenvalues of the kernel integral operator associated with $K_2^c$. Under fixed
alternatives satisfying $g_{X,0} \in \clh_2$, the contribution of a nonzero $g_{X,0}$ causes $Q_T$ to diverge at rate $T$, yielding a consistent test for nonlinear Granger
causality.

\subsection{Operator Representation for Asymptotic Analysis}\label{sec:operatorrepresen}

For the asymptotic analysis developed in Section \ref{sec:theoreticalresults}, it is convenient to express the above quantities in operator form. To this end, the relevant operators and their empirical counterparts are introduced, and the test statistic is rewritten in this framework.

For any vector $v = (v_1,\dots,v_T)^\top \in \RR^T$, define empirical operators $\clk_1: \RR^T \to \clh_1$ and $\clk_2^c : \RR^T \to \clh_2$ by 
\begin{equation}
(\mathcal K_1 v)(x)
=
\sum_{t=1}^T v_t K_1(X_{t-1}, x),
\qquad
(\mathcal K_2^c v)(x,y)
=
\sum_{t=1}^T v_t K_2^c\big((X_{t-1},Y_{t-1}), (x,y)\big).
\end{equation}
Note that for $f \in \clh_1$ and $g \in \clh_2$, the adjoint operators $\clk_1^* : \clh_1 \to \RR^T$ and $(\clk_2^c)^* : \clh_2 \to \RR^T$ satisfy 
\begin{equation}
\mathcal K_1^* f = \big(f(X_{t-1})\big)_{t=1}^T,
\qquad
(\mathcal K_2^c)^* g = \big(g(X_{t-1},Y_{t-1})\big)_{t=1}^T.
\end{equation}
In particular, the Gram matrices $K_1$ and $K_2^c$ defined in \eqref{eq:K1_gram_matrix_def} and \eqref{eq:mathbfK2grammatrixdef}, respectively,   admit the representations $K_1 = \mathcal K_1^* \mathcal K_1$ and $K_2^c = (\mathcal K_2^c)^* \mathcal K_2^c$. 

We now define the population operators $L_{11} : \clh_1 \to \clh_1$, $L_{12} : \clh_2 \to \clh_1$, $L_{21}^c : \clh_1 \to \clh_2$, and $L_{22}^c : \clh_2 \to \clh_2$, together with their empirical counterparts. For $f \in \clh_1$, $g  \in \clh_2$, and evaluation points $x,y \in \RR$, define 
\begin{align}\label{eq:defineL12L21L11L22}
    (L_{11}f )(x) &= \EE_{\pi_X} [ K_1(X_0, x) f(X_0)] = \int K_1( x', x) f(x') \pi_X(dx')\\
    (L_{12}g )(x) &= \EE_{\pi} [ K_1(X_0, x) g(X_0, Y_0)] = \int K_1( x', x)  g(x', y') \pi( dx', dy')\\
    (L_{22}^c g)(x, y) &=  \EE_{\pi} [ K_2^c( (X_0, Y_0), (x,y)) g(X_0,Y_0)] = \int K_2^c ((x',y'), (x,y)) g(x',y') \pi(dx',dy')\\
    (L_{21}^c f) (x,y) &= \EE_{\pi} [ K_2^c ((X_0, Y_0), (x, y)) f(X_0)] = \int K_2^c((x', y'), (x, y)) f(x') \pi(dx' , dy').
\end{align}
The empirical analogues are
\begin{align}\label{eq:sampleLhatijoperator}
    \widehat L_{11}
    = \frac1T\mathcal K_1\mathcal K_1^*,
    \quad
    \widehat L_{12}
    = \frac1T\mathcal K_1(\mathcal K_2^c)^*,
    \quad
    \widehat L_{21}^c
    = \frac1T\mathcal K_2^c\mathcal K_1^*,
    \quad
    \widehat L_{22}^c
    = \frac1T\mathcal K_2^c(\mathcal K_2^c)^* .
\end{align}
Note that, for example,
\begin{equation} \label{eq:Lhat12def}
(\widehat L_{12} g)(x)
=
\frac{1}{T} \sum_{t=1}^T K_1(x, X_{t-1})  g(X_{t-1}, Y_{t-1}), \qquad g \in \clh_2, \; \; x \in \RR.
\end{equation}

Under suitable conditions, these empirical operators converge to their population counterparts in Hilbert--Schmidt norm at rate $1/\sqrt{T}$, as established in Proposition~\ref{prop:aux_operator_convergence} and Corollary \ref{cor:aux_operator_conv_prob}. The cross-operators $L_{12}$ and $L_{21}^c$ play an important role in the asymptotic analysis. In particular, under Assumption \ref{ass:cond_center}, the operator $L_{21}^c$ vanishes on functions of $X$. 
More precisely, for $f \in \clH_1$,
\begin{equation}\label{eq:L21orthogonality}
(L_{21}^c f)(x,y)
=
\EE_\pi\!\left[K_2^c((X_0,Y_0),(x,y)) f(X_0)\right]
=
\EE_\pi\!\left[\EE\!\left[K_2^c((X_0,Y_0),(x,y)) \mid X_0\right] f(X_0)\right]
= 0.
\end{equation}
This can be viewed as the operator-level analogue of the conditional centering condition in Assumption \ref{ass:cond_center}. Similarly, for each $g \in \clH_2$,
\begin{equation}
(L_{12} g)(x)
=
\EE_\pi\!\left[K_1(X_0,x) g(X_0,Y_0)\right]
=
\EE_\pi\!\left[K_1(X_0,x) \EE[g(X_0,Y_0)\mid X_0]\right]
= 0.
\end{equation}
These conditions are naturally aligned with the decomposition \eqref{eq:obs_decomposed} introduced in Section \ref{sec:modelandproblemformulation}, where $f_X \in L^2(\pi_X)$ and $g_{X,0} \in L_{X,0}^2(\pi)$. 

With the above operators properly defined, Lemma~\ref{lem:commute} implies that the first-stage KRR estimator $\widehat f$ \eqref{eq:fhat_repr} admits both a Gram-matrix representation and an
operator representation
\begin{equation}\label{eq:fhat_operator}
\widehat f
=
\mathcal K_1 \left(K_1 + \lambda_{T,f} T I_T\right)^{-1} X
=
(\widehat L_{11} + \lambda_{T,f} I_{\clh_1})^{-1}
\left(
\frac{1}{T} \mathcal K_1 X
\right),
\end{equation}
where  $X \doteq (X_1,\dots,X_T)^\top$. Observe that $X$ admits the representation
\begin{equation}
X = \mathcal K_1^* f_X + (\mathcal K_2^c)^* g_{X,0} + \boldsymbol{\varepsilon},
\label{eq:X_operator}
\end{equation}
where $\boldsymbol{\varepsilon} \doteq (\varepsilon_{X,1},\dots,\varepsilon_{X,T})^\top$. The residual vector $\widehat e$ defined in \eqref{eq:residuals} then admits the representation
\begin{equation}\label{eq:residuals_operatorform}
\widehat e = X - \mathcal K_1^* \widehat{f}
=
\mathcal K_1^*(f_X - \widehat f)
+
(\mathcal K_2^c)^* g_{X,0}
+
\boldsymbol{\varepsilon}.
\end{equation}
The $\clh_2$-valued element $Z_T$  defined in \eqref{eq:ZT_def} can be written in operator form as
\begin{equation}\label{eq:ZT_operator}
Z_T
=
\frac{1}{\sqrt{T}} \mathcal K_2^c \widehat e
=
\sqrt{T}\left[
\widehat L_{21}^c(f_X - \widehat f)
+
\widehat L_{22}^c g_{X,0}
+
\frac{1}{T}\mathcal K_2^c \boldsymbol{\varepsilon}
\right].
\end{equation}

{
Note that, under the alternative hypothesis, the first-stage estimator
$\widehat f$ is constructed from observations containing both $f_X$ and
$g_{X,0}$ through \eqref{eq:X_operator}. To isolate the estimation error associated with estimating $f_X$,
define
\begin{equation}\label{eq:tildefdef1}
\widetilde f
\doteq
(\widehat L_{11}+\lambda_{T,f}I_{\clh_1})^{-1}
\left(
\widehat L_{11}f_X
+
\frac{1}{T}\mathcal K_1\boldsymbol{\varepsilon}
\right),
\end{equation}
which corresponds to the first-stage KRR estimator that would be obtained if
the component $g_{X,0}$ were absent from the model
in \eqref{eq:obs_decomposed}, namely, if the null hypothesis in \eqref{eq:H0_reformulated} held. Replacing $X$ in  \eqref{eq:fhat_operator} by the decomposition in
\eqref{eq:X_operator}, we obtain
\begin{equation}\label{eq:tildeffhat}
\widehat f
=
\widetilde f
+
(\widehat L_{11}+\lambda_{T,f}I_{\clh_1})^{-1}
\widehat L_{12}g_{X,0}.
\end{equation}
Note in particular that, under the null hypothesis  in \eqref{eq:H0_reformulated}, we have
$\widehat f=\widetilde f$. Substituting the representation \eqref{eq:tildeffhat} into \eqref{eq:ZT_operator} yields
\begin{equation}\label{eq:ZT_operator_alternative}
Z_T
=
\sqrt{T}\left[
\widehat L_{21}^c(f_X-\widetilde f)
+
\widehat L_{22}^c g_{X,0}
-
\widehat L_{21}^c
(\widehat L_{11}+\lambda_{T,f}I_{\clh_1})^{-1}
\widehat L_{12}g_{X,0}
+
\frac{1}{T}\mathcal K_2^c\boldsymbol{\varepsilon}
\right].
\end{equation}
}
Under the null hypothesis, this simplifies to
\begin{equation}\label{eq:ZT_operator_null}
Z_T
=
\sqrt{T}\left[
\widehat L_{21}^c(f_X - \widehat f)
+
\frac{1}{T}\mathcal K_2^c \boldsymbol{\varepsilon}
\right].
\end{equation}
{The first term in \eqref{eq:ZT_operator_null} arises from the estimation error of the first-stage KRR estimator and reflects the remaining influence of the component $f_X$ on the test statistic. Removing the influence of $f_X$ from the test statistic is one of the main motivations for the residualization and conditional-centering steps introduced earlier. Under the assumptions adopted in this paper, residualization reduces the contribution of $f_X$ to the estimation-error term $\sqrt{T} \widehat L_{21}^c(f_X-\widehat f)$, while the conditional-centering condition provides a convenient mechanism for rendering this term asymptotically negligible. Alternative approaches for controlling the same term are discussed in Section~\ref{subsec:weaken_conditional_centering}. Consequently, the leading stochastic fluctuation is determined by
$
\frac{1}{\sqrt T}\mathcal K_2^c\boldsymbol{\varepsilon},
$
which, as shown in Proposition \ref{prop:aux_FCLT}, converges weakly to a centered Gaussian element in $\clh_2$. The decomposition \eqref{eq:ZT_operator_alternative} also provides intuition for
the behavior of the test under fixed alternatives. In particular, the term
$
\widehat L_{22}^c g_{X,0}
$
is the dominant term under fixed alternatives and, as shown in Theorem~\ref{thm:ZT_alt}, drives the divergence of the test statistic.}

\subsection{Motivation for the Test Statistic}\label{sec:motivationforteststat}
{The purpose of this subsection is to motivate the proposed test statistic. The objects introduced here are used only for exposition and do not enter the final test procedure or the main asymptotic results.}
{The decomposition in \eqref{eq:obs_decomposed} suggests an estimator-based approach to
testing the hypothesis in \eqref{eq:H0_reformulated}.  If  $f_X $ were known, then
\begin{equation}
X_t-f_X(X_{t-1})
=
g_{X,0}(X_{t-1},Y_{t-1})+\varepsilon_{X,t},
\end{equation}
so the problem would reduce to testing whether the regression function in this
second-stage equation is identically zero.  In this case, we could define the idealized residuals 
\begin{equation}
\widetilde{e}_t \doteq X_t- f_X(X_{t-1}),
\end{equation}
and then test the null hypothesis by estimating the second-stage regression function
$g_{X,0}$.  In particular, one could define the
idealized second-stage KRR estimator
\begin{align}
\widehat g_\lambda
=
\argmin_{g\in\clh_2}
\left\{
\frac{1}{T}\sum_{t=1}^T
\left(\widetilde{e}_t-g(X_{t-1},Y_{t-1})\right)^2
+
\lambda\|g\|_{\clh_2}^2
\right\}.
\label{eq:glambdaremarkmotivation}
\end{align}
A natural estimator-based test would reject  $H_0 $ for sufficiently large values of
 $\|\widehat g_\lambda\|_{\clh_2}^2 $. The population target of $\widehat{g}_{\lambda}$ is the regularized
version of  $g_{X,0} $ defined by
\begin{align}
g_{X,0,\lambda}
=
\argmin_{g\in\clh_2}
\left\{
\EE_{\pi}\left[
\left(
g_{X,0}(X_0,Y_0)-g(X_0,Y_0)
\right)^2
\right]
+
\lambda\|g\|_{\clh_2}^2
\right\}.
\label{eq:proximategxzerolambdaeqn1}
\end{align}
The corresponding normal equation is 
\begin{align}
(L_{22}^c+\lambda I_{\clh_2})g_{X,0,\lambda}
=
L_{22}^c g_{X,0},
\label{eq:populationproximatenormaleqn}
\end{align}
where $L_{22}^c$ is defined in \eqref{eq:defineL12L21L11L22}. 
By Lemma \ref{lem:injectivitycovarianceoperator}, for every $g\in\mathcal H_2$,
$L_{22}^c g=0$ in $\mathcal H_2$
if and only if 
$g=0$ in $L^2(\pi)$.
Consequently, under Assumption \ref{ass:rkhs_dynamics}, the null hypothesis \eqref{eq:H0_reformulated} is equivalent to $L_{22}^c g_{X,0}=0$.
The empirical analogue of the normal equation in
\eqref{eq:populationproximatenormaleqn} is
\begin{align}
(\widehat L_{22}^c+\lambda I_{\clh_2})\widehat g_\lambda
=
\frac{1}{T}\sum_{t=1}^T
\widetilde e_t K_2^c((X_{t-1},Y_{t-1}),\cdot).
\label{eq:empiricalnormaleqn}
\end{align}
The right-hand side of \eqref{eq:empiricalnormaleqn} is the empirical
$\mathcal{H}_2$-valued moment whose population analogue is
\begin{align}
\EE_{\pi}\left[
\widetilde e_1 K_2^c((X_0,Y_0),\cdot)
\right]
&=
\EE_{\pi}\left[
g_{X,0}(X_0,Y_0)K_2^c((X_0,Y_0),\cdot)
\right]    = L_{22}^c g_{X,0}.
\label{eq:populationmomentL22gx0}
\end{align}
Therefore, if the goal is to test whether $L_{22}^c g_{X,0}$ vanishes, it is natural
to use a suitably rescaled version of the right-hand side of
\eqref{eq:empiricalnormaleqn} directly, rather than first applying the
regularized inverse $(\widehat L_{22}^c+\lambda I_{\clh_2})^{-1}$ and then carrying out the test with the $\clh_2$-norm of the resulting KRR estimator. This motivates the idealized statistics
\begin{equation}
\widetilde Z_T
\doteq
\frac{1}{\sqrt T}\sum_{t=1}^T
\widetilde e_t K_2^c((X_{t-1},Y_{t-1}),\cdot),
\qquad
\widetilde Q_T
\doteq
\|\widetilde Z_T\|_{\clh_2}^2.
\end{equation}
As mentioned above, under the null hypothesis, $L_{22}^c g_{X,0} = 0$, so
$\widetilde Z_T$ is centered at zero. Consequently, large values of $\widetilde Q_T$ provide evidence against the null. The statistic in \eqref{eq:QT_def} is the analogue of this idealized
statistic, namely it replaces the unobservable idealized residuals $\widetilde{e}_t$
by the first-stage residual 
\begin{align}
\widehat e_t=X_t-\widehat f(X_{t-1}).
\label{eq:firststagetrueresiduals}
\end{align}
The kernel centering condition in \eqref{eq:cond_center} simply ensures that the first-stage estimation error in \eqref{eq:firststagetrueresiduals} does not affect the limiting distribution
of $Z_T$.
\begin{remark}
\label{remark:empiricalkernelsufficientfortesting}
A finite-dimensional analogue appears in  linear regression.  Consider
the model 
\begin{equation}
Y=X\beta+\epsilon,
\qquad
X\in\RR^{n\times p},
\qquad
\epsilon\sim \cln_n(0,\sigma^2 I_n),
\end{equation}
where $X$ is fixed and $X^\top X$ is invertible.  The least-squares
estimator is given by $
\widehat\beta=(X^\top X)^{-1}X^\top Y$, 
so $X^\top X\widehat\beta=X^\top Y$.
Since $X^\top X$ is injective, $\widehat\beta=0$ if and only if $
X^\top X\widehat\beta=X^\top Y=0$. Thus, the hypothesis $H_0:\beta=0$ may be tested either through the estimator
$\widehat\beta$ or through $X^\top Y$.
The latter approach is directly analogous to the statistic introduced in \eqref{eq:ZT_def} (or $\widetilde Z_T$). In particular,  $
\EE[X^\top Y]=X^\top X\beta,$
so, under $H_0$, $X^\top Y$ is centered at zero, while under the
alternative it is centered at $X^\top X\beta$.  Therefore, the suitably-rescaled empirical moment $X^\top Y/\sqrt{n}$ can be used to test $H_0$ without
first applying the inverse $(X^\top X)^{-1}$.  In this analogy, $X^\top X$ plays the role
of the regularized empirical covariance operator $ \widehat{L}_{22} + \lambda I_{\clh_2}$, while $X^\top Y/\sqrt{n}$ plays the role of the rescaled empirical
$\clh_2$-valued moment $ \widetilde{Z}_T$, and $\widehat\beta$ plays the role of $\widehat{g}_{\lambda}$.
\end{remark}
}
\subsection{Empirical Conditional Centering}
\label{sec:empirical_conditional_centering}

The statistic in \eqref{eq:QT_def} is defined with a population kernel
$K_2^c$ satisfying Assumption~\ref{ass:cond_center}. Since the stationary
distribution is unknown, this kernel is generally unavailable. We therefore
start from a bounded uncentered kernel
$\widetilde K_2:\RR^2\times\RR^2\to\RR$, with separable RKHS
$\widetilde\clh_2$, and estimate the conditional mean of its feature map.

Define the $\widetilde\clh_2$-valued conditional feature mean
\begin{equation}
m(x)
\doteq
\EE_\pi\!\left[
\widetilde K_2((X_0,Y_0),\cdot)\mid X_0=x
\right],
\qquad \pi_X\text{-a.e. }x\in\RR.
\label{eq:full_m_target}
\end{equation}
We fix a measurable version of $m$ on $\RR$. Boundedness of
$\widetilde K_2$ ensures that this Bochner conditional expectation is well
defined. The associated population-centered scalar kernel is
\begin{equation}
\begin{split}
K_2^c((x,y),(x',y'))
\doteq
\big\langle
\widetilde K_2((x,y),\cdot)-m(x),\widetilde K_2((x',y'),\cdot)-m(x')
\big\rangle_{\widetilde\clh_2}.
\end{split}
\label{eq:full_oracle_centered_kernel}
\end{equation}
By the definition of conditional expectation,
\begin{equation}
\EE_\pi\!\left[
\widetilde K_2((X_0,Y_0),\cdot)-m(X_0)\mid X_0
\right]
=0
\quad\text{a.s. in }\widetilde\clh_2.
\label{eq:full_population_centering_identity}
\end{equation}
Consequently, the kernel in \eqref{eq:full_oracle_centered_kernel} satisfies
Assumption~\ref{ass:cond_center}.

The map defined on kernel sections by
\begin{equation*}
K_2^c((x,y),\cdot)
\longmapsto
\widetilde K_2((x,y),\cdot)-m(x)
\end{equation*}
extends uniquely to a surjective linear isometry from the RKHS
$\clh_2$ induced by $K_2^c$ onto the closure in
$\widetilde\clh_2$ of
\begin{equation*}
\operatorname{span}\{
\widetilde K_2((x,y),\cdot)-m(x):(x,y)\in\RR^2
\}.
\end{equation*}
We use this isometry to identify $\clh_2$ with that closed subspace of
$\widetilde\clh_2$.

Let $\calG$ be the $\widetilde\clh_2$-valued RKHS (see e.g., \cite{micchelli2005learning}) induced by the
operator-valued kernel 
\begin{equation}
\Gamma(x,x')
=
K_1(x,x')I_{\widetilde\clh_2},
\qquad x,x'\in\RR.
\label{eq:full_operator_kernel}
\end{equation}
 For the empirical-centering tuning parameter $\lambda_{T,K}>0$, define
\begin{equation}
\widehat m_T
\in
\argmin_{g\in\calG}
\left\{
\frac1T\sum_{t=1}^T
\left\|
\widetilde K_2((X_{t-1},Y_{t-1}),\cdot)-g(X_{t-1})
\right\|_{\widetilde\clh_2}^2
+
\lambda_{T,K}\|g\|_{\calG}^2
\right\}.
\label{eq:full_mhat_def}
\end{equation}
For the operator representation, recall $L_{11}$ and $\widehat L_{11}$ from
\eqref{eq:defineL12L21L11L22} and \eqref{eq:sampleLhatijoperator}. Define the
uncentered cross-covariance operators by
\begin{align}
\widetilde L_{21}h
&\doteq
\EE_\pi\!\left[
\widetilde K_2((X_0,Y_0),\cdot)h(X_0)
\right],
\quad
\widehat{\widetilde L}_{21,T}h
\doteq
\frac1T\sum_{t=1}^T
\widetilde K_2((X_{t-1},Y_{t-1}),\cdot)h(X_{t-1}),
\quad
h\in\clh_1.
\label{eq:full_raw_L21hat}
\end{align}
For $\lambda>0$, set
\begin{equation}
m_\lambda(x)
\doteq
A_\lambda K_1(x,\cdot)
\quad
\text{ with }
\quad
A_\lambda
\doteq
\widetilde L_{21}(L_{11}+\lambda I_{\clh_1})^{-1}.
\label{eq:full_population_krr_centering}
\end{equation}
The normal equation for \eqref{eq:full_mhat_def} gives
\begin{equation}
\widehat m_T(x)
=
\widehat A_{\lambda_{T,K},T}K_1(x,\cdot),
\qquad x\in\RR,
\quad 
\text{ with }
\quad
\widehat A_{\lambda,T}
\doteq
\widehat{\widetilde L}_{21,T}
(\widehat L_{11}+\lambda I_{\clh_1})^{-1},
\label{eq:full_empirical_krr_operator}
\end{equation}
which  follows  from the representer
theorem for vector-valued RKHS (see e.g., Theorem 4 of \cite{micchelli2005learning}). Let
$\widetilde K_2\in\RR^{T\times T}$ denote the uncentered second-stage Gram matrix
with entries
\begin{equation}
(\widetilde K_2)_{ij}
=
\widetilde K_2((X_{i-1},Y_{i-1}),(X_{j-1},Y_{j-1})),
\qquad i,j=1,\ldots,T,
\label{eq:full_raw_gram}
\end{equation}
and define
\begin{equation}
S_K
\doteq
K_1(K_1+T\lambda_{T,K}I_T)^{-1}.
\label{eq:SK_residualizer_def}
\end{equation}
At the observed inputs, the
RKHS-valued KRR fitted values satisfy
\begin{equation*}
\widehat m_T(X_{i-1})
=
\sum_{j=1}^T (S_K)_{ij}
\widetilde K_2((X_{j-1},Y_{j-1}),\cdot),
\qquad i=1,\ldots,T.
\end{equation*}
Since $S_K$ is symmetric, the Gram matrix of the residual feature sections
\[
\widetilde K_2((X_{i-1},Y_{i-1}),\cdot)
-
\widehat m_T(X_{i-1}),
\qquad i=1,\ldots,T,
\]
is therefore
\begin{equation}
\widehat K_2^c
\doteq
(I_T-S_K)\widetilde K_2(I_T-S_K).
\label{eq:K2hatc_residualized_def}
\end{equation}

Define the population- and empirically centered kernel
sections
\begin{equation}
k_t^c
\doteq
\widetilde K_2((X_{t-1},Y_{t-1}),\cdot)-m(X_{t-1}),
\qquad
\widehat k_{t,T}^c
\doteq
\widetilde K_2((X_{t-1},Y_{t-1}),\cdot)-\widehat m_T(X_{t-1}).
\label{eq:full_centered_sections}
\end{equation}
Using the first-stage residuals from \eqref{eq:residuals}, the
feasible residual embedding and quadratic statistic are
\begin{align}
\widehat Z_T^{\mathrm{res}}
&\doteq
\frac1{\sqrt T}\sum_{t=1}^T\widehat e_t\widehat k_{t,T}^c,
\label{eq:full_residual_embedding}\\
\widehat Q_T^{\mathrm{res}}
&\doteq
\left\|\widehat Z_T^{\mathrm{res}}\right\|_{\widetilde\clh_2}^2
=
\frac1T\widehat e^\top\widehat K_2^c\widehat e.
\label{eq:full_residual_quadratic}
\end{align}
This is the procedure used throughout the simulations and applications. Its
asymptotic null validity is established in
Section~\ref{sec:full_sample_empirical_centering}.
The residual embedding is an RKHS-valued empirical moment that combines the
first-stage residuals with the empirically centered lagged-state features
$\widehat k_{t,T}^c$. The natural test statistic is its squared RKHS norm. At the population level, under the null hypothesis, the ideal
residual is the innovation and its population moment with $k_t^c$ is zero.
Moreover, since $\EE_\pi[k_t^c\mid X_{t-1}]=0$, conditional centering removes
the population contribution of first-stage estimation error that depends only
on $X_{t-1}$. Large values of $\widehat Q_T^{\mathrm{res}}$ therefore provide
evidence against the null.

\section{Theoretical Results}\label{sec:theoreticalresults}

The previous section developed the proposed test construction, including its
oracle formulation, operator representation, and feasible empirical-centering
implementation.
This section develops the corresponding
asymptotic theory in two stages. We first analyze the oracle statistic based
on a fixed population conditionally centered kernel satisfying
\eqref{eq:cond_center}, establishing the limiting behavior of the residual
embedding $Z_T$ and the test statistic $Q_T$, defined respectively in
\eqref{eq:ZT_def} and \eqref{eq:QT_def}. We then analyze the empirical conditional centering procedure introduced in
Section~\ref{sec:empirical_conditional_centering}. The principal difficulty arises from the dependence between the estimated centering map and the residual embedding. An exact regularized-operator factorization controls this
dependence and establishes the same asymptotic null distribution as for the
corresponding oracle statistic.

We begin by stating the assumptions used throughout the section.

\begin{assumption}
\label{assump:bounded_dynamics}
Recall the model in \eqref{eq:nonlinearVAR}. There exists a constant $M_g \in (0,\infty)$ such that
\begin{equation}
\sup_{(x,y)\in \RR^2} |g_X(x,y)| \le M_g, \qquad \sup_{(x,y)\in \RR^2} |g_Y(x,y)| \le M_g.
\end{equation}
\end{assumption}

\begin{assumption}
\label{assump:bounded_kernel}
{The kernels $K_1$ and $K_2^c$ are Borel measurable, bounded, symmetric positive definite kernels, and their RKHSs $\clh_1$ and $\clh_2$ are separable. That is, there exists $\kappa\in(0,\infty)$ such that}
\begin{equation}
\sup_{x\in\RR} K_1(x,x) \le \kappa^2,
\qquad
\sup_{(x,y)\in\RR^2} K_2^c((x,y),(x,y)) \le \kappa^2.
\end{equation}
{Whenever the empirical-centering construction is used, the uncentered kernel $\widetilde K_2$ is also assumed Borel measurable and bounded, its RKHS $\widetilde\clh_2$ is separable, and there exists $\kappa_2\in(0,\infty)$ such that}
\begin{equation}\sup_{(x,y)\in\RR^2}\widetilde K_2((x,y),(x,y))\le \kappa_2^2.
\end{equation}
\end{assumption}

\begin{assumption}
\label{assump:gaussian}
The innovation sequences
$\{\varepsilon_{X,t}\}_{t\in\NN}$ and
$\{\varepsilon_{Y,t}\}_{t\in\NN}$
are independent and identically distributed Gaussian random variables with mean zero and variance
$\sigma^2\in(0,\infty)$. {Moreover, for each $t \in \NN$, $(\varepsilon_{X,t}, \varepsilon_{Y,t})$ is independent of $\mathcal{F}_{t-1}$.}
\end{assumption}

The Gaussian innovation assumption in Assumption \ref{assump:gaussian} is adopted primarily for clarity of presentation and
to simplify the theoretical development. Most of the arguments are expected to extend to
more general independent innovations with mean zero and variance $\sigma^2$ with minimal changes, provided
appropriate regularity conditions are imposed to ensure the geometric ergodicity and the moment conditions required in the asymptotic analysis. Supporting this expectation, the
simulation study in Section~\ref{subsection:heavytail} demonstrates that the proposed
procedure remains robust under a range of heavy-tailed innovation distributions.

\begin{assumption}
\label{assump:first_stage_consistency}
The estimator $\widetilde f$ defined in \eqref{eq:tildefdef1} satisfies
\begin{equation}
\|\widetilde f-f_X\|_{\clh_1}
=
o_{\PP}(1).
\end{equation}
\end{assumption}

Assumption~\ref{assump:first_stage_consistency} ensures that the oracle first-stage estimator
consistently recovers the component $f_X$. Under the null hypothesis,
$\widetilde f=\widehat f$, and the model \eqref{eq:obs_decomposed} reduces to the
nonlinear autoregressive model
\begin{equation}
X_t=f_X(X_{t-1})+\varepsilon_{X,t}.
\end{equation}
Under Assumption~\ref{ass:rkhs_dynamics},
Assumptions~\ref{assump:bounded_dynamics}--\ref{assump:gaussian},
and the conditions on $K_1$ required by Theorem~3.1 of
\citet{duker2025kernel}, that theorem gives high-probability
supremum-norm  and RKHS-norm bounds for the first-stage KRR estimator.
For example, in  the fixed-dimensional setting with a Gaussian first-stage kernel $K_1$,
Theorem~3.1 of \citet{duker2025kernel} implies that, for some
$\zeta_f\in(0,\infty)$, the stochastic component of the supremum-norm
estimation error is of order
\begin{equation}
\frac{(\log T)^{\zeta_f}}{\lambda_{T,f}\sqrt T}.
\end{equation}
Thus, whenever
$\lambda_{T,f}^{-1}$ is at most polylogarithmic and the deterministic regularization bias is negligible at
that scale, the first-stage supremum-norm error can be
$O_{\PP}((\log T)^{\zeta_f}/\sqrt T)$ after enlarging $\zeta_f$ appropriately.  This is the
rate regime used explicitly in the empirical-centering discussion
below. The $\mathcal H_1$-consistency component of
Assumption~\ref{assump:first_stage_consistency} is retained as part of the
assumption unless it is obtained from a separate RKHS-norm consistency result.
Stating Assumption~\ref{assump:first_stage_consistency} also allows the
flexibility of using other estimation procedures. As an example, we discuss a
decomposition into linear and nonlinear functionals which allows using a linear 
regression estimator; see Section \ref{subsec:weaken_conditional_centering}.

The assumptions above separate the probabilistic requirements on the nonlinear vector
autoregressive model from the analytic requirements imposed by the RKHS construction.
The boundedness assumptions keep the kernel terms well behaved.
Assumption~\ref{assump:first_stage_consistency} controls the error from estimating $f_X$,
and conditional centering prevents this component from affecting the limit.

The subsequent asymptotic analysis relies on the geometric ergodicity of
$\{(X_t,Y_t)\}_{t\in\NN_0}$. Specifically, for $h \in \NN_0$, let
\begin{equation}
P^h((x,y),A)
=
\PP\big((X_h,Y_h)\in A \mid (X_0,Y_0)=(x,y)\big),
\qquad
A\in\mathcal B(\mathbb R^2), \quad (x,y) \in \RR^2,
\end{equation}
denote the $h$-step transition probability of the process
$\{(X_t,Y_t)\}_{t\in\NN_0}$. Throughout the proofs, we use the geometric convergence bound
\begin{equation}
\label{eq:geom_ergodicity}
\|P^h((x,y),\cdot)-\pi(\cdot)\|_{\TV}
\le
\rho^hJ(x,y),
\qquad
h\in\NN_0, \quad (x,y) \in \RR^2,
\end{equation}
for some $\rho\in(0,1)$ and some $\pi$-integrable 
$J \in \clb(\mathbb R^2 : [0,\infty))$.
Rather than imposing geometric ergodicity as a separate assumption, we note that the geometric convergence bound in \eqref{eq:geom_ergodicity} follows from Assumptions~\ref{assump:bounded_dynamics} and~\ref{assump:gaussian}; see Section~4.2 of \citet{duker2025kernel} for a detailed discussion.

\subsection{Asymptotic Distribution and Consistency of the Test Statistic}
\label{subsec:asymptotic_distribution_consistency}

This subsection treats the oracle setting in which the population
conditionally centered kernel $K_2^c$ is assumed to be known. Because $K_2^c$
depends on the unknown stationary distribution $\pi$, this setting is
generally infeasible in practice. Section~\ref{sec:full_sample_empirical_centering} develops a feasible empirical-centering procedure and establishes its asymptotic equivalence to the oracle construction.

In Theorem \ref{thm:ZT_limit} and Corollary \ref{cor:QT_limit}, we state the main asymptotic consequences for the residual embedding
$Z_T$ and the quadratic statistic $Q_T$ defined in
\eqref{eq:ZT_def} and \eqref{eq:QT_def}, respectively.
The proofs rely on two auxiliary
RKHS limit results: an $\clh_2$-valued martingale central limit theorem for
the stochastic term $T^{-1/2}\mathcal K_2^c\boldsymbol{\varepsilon}$, and
Hilbert--Schmidt convergence of the empirical RKHS covariance operators.
These auxiliary results are stated in Appendix \ref{appendix:proofs_auxiliary}.

Our first main result gives the weak limit of the residual embedding under the
null hypothesis.

\begin{theorem}
\label{thm:ZT_limit}
Suppose Assumptions \ref{ass:cond_center},  \ref{assump:bounded_dynamics}--\ref{assump:first_stage_consistency}, and recall $Z_T$ from
\eqref{eq:ZT_def}. Then, under the null hypothesis~\eqref{eq:H0_reformulated},
\begin{equation}
Z_T
\xrightarrow{\distr}
\mathcal G_2
\qquad
\text{in } \clh_2,
\end{equation}
where $\mathcal G_2$ is a centered Gaussian element in $\clh_2$ with
covariance operator $\sigma^2 L_{22}^c$.
\end{theorem}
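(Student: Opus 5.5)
The plan is to work from the null representation \eqref{eq:ZT_operator_null},
\begin{equation}
Z_T=\sqrt T\,\widehat L_{21}^c(f_X-\widehat f)+\frac{1}{\sqrt T}\mathcal K_2^c\boldsymbol{\varepsilon},
\end{equation}
and to show that the first term is $o_{\PP}(1)$ in $\clh_2$. The second term then converges weakly to $\mathcal G_2$, and Slutsky's lemma in the separable Hilbert space $\clh_2$ gives the claim.

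\textbf{Leading term.} The term $T^{-1/2}\mathcal K_2^c\boldsymbol{\varepsilon}=T^{-1/2}\sum_{t}\varepsilon_{X,t}K_2^c((X_{t-1},Y_{t-1}),\cdot)$ is a sum of $\clh_2$-valued martingale differences with respect to $\{\mathcal F_t\}$, because $\varepsilon_{X,t}$ is independent of $\mathcal F_{t-1}$ and centered (Assumption~\ref{assump:gaussian}). I would invoke the auxiliary $\clh_2$-valued martingale CLT (Proposition~\ref{prop:aux_FCLT}) directly. If I had to verify its hypotheses, there are two. First, the conditional covariance operators $T^{-1}\sum_t\sigma^2 k_t\otimes k_t=\sigma^2\widehat L_{22}^c$ must converge to $\sigma^2 L_{22}^c$ in trace/Hilbert--Schmidt norm; this follows from Proposition~\ref{prop:aux_operator_convergence}, geometric ergodicity \eqref{eq:geom_ergodicity}, and boundedness of $K_2^c$. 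Second, a conditional Lindeberg condition must hold, which is immediate since $\|\varepsilon_{X,t}k_t\|\le\kappa|\varepsilon_{X,t}|$ with Gaussian tails.

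\textbf{Nuisance term.} By \eqref{eq:L21orthogonality}, $L_{21}^c=0$, so I would write
\begin{equation}
\sqrt T\,\widehat L_{21}^c(f_X-\widehat f)=\sqrt T\,(\widehat L_{21}^c-L_{21}^c)(f_X-\widehat f),
\end{equation}
which gives
\begin{equation}
\|\sqrt T\,\widehat L_{21}^c(f_X-\widehat f)\|_{\clh_2}\le \sqrt T\,\|\widehat L_{21}^c-L_{21}^c\|_{\cls_2}\,\|f_X-\widehat f\|_{\clh_1}.
\end{equation}
Corollary~\ref{cor:aux_operator_conv_prob} gives $\sqrt T\|\widehat L_{21}^c-L_{21}^c\|_{\cls_2}=O_{\PP}(1)$. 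Under $H_0$ we have $\widehat f=\widetilde f$, so Assumption~\ref{assump:first_stage_consistency} gives $\|f_X-\widehat f\|_{\clh_1}=o_{\PP}(1)$. The product is therefore $o_{\PP}(1)$.

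\textbf{Main obstacle.} The main difficulty is this cross-term. The estimation error $\widehat f$ depends on the same sample as $\widehat L_{21}^c$, so one cannot condition on it. The operator-norm factorization above sidesteps the dependence, but only because consistency is assumed in the $\clh_1$-norm, which is stronger than sup-norm consistency; the $\sqrt T$ rate for $\widehat L_{21}^c$ relies on mixing of the stationary chain. I would double-check that the Hilbert--Schmidt rate result covers the cross-operator with the conditionally centered kernel. Its summands $K_1(X_{t-1},\cdot)\otimes K_2^c((X_{t-1},Y_{t-1}),\cdot)$ are bounded functionals of a geometrically ergodic chain with population mean $L_{21}^c=0$, so a variance bound summing covariances via \eqref{eq:geom_ergodicity} yields $\EE\|\widehat L_{21}^c\|_{\cls_2}^2=O(1/T)$.
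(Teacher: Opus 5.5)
Your proposal is correct and follows the paper's proof essentially step for step. You use the null decomposition, replace $\widehat L_{21}^c$ by $\widehat L_{21}^c-L_{21}^c$ via \eqref{eq:L21orthogonality}, bound the product using Corollary~\ref{cor:aux_operator_conv_prob} and Assumption~\ref{assump:first_stage_consistency} (with $\widehat f=\widetilde f$ under $H_0$), and finish with Proposition~\ref{prop:aux_FCLT} and Slutsky; using the Hilbert--Schmidt norm instead of the operator norm in the product bound is immaterial, since the corollary gives both rates.
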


Theorem~\ref{thm:ZT_limit} shows that, under the null hypothesis, the
first-stage residualization and the conditional-centering step remove the
own-history component from the limiting distribution. Under the null, we have 
the decomposition~\eqref{eq:ZT_operator_null}.
The second term converges to $\mathcal G_2$ by the auxiliary $\clh_2$-valued
central limit theorem, while the first term is asymptotically negligible
because conditional centering gives $L_{21}^c h=0$ for functions $h$ of
$X$ and the empirical operator $\widehat L_{21}^c$ converges to $L_{21}^c$
at the required rate. The proof is given in
Appendix \ref{app:appendixasymptotic_distribution_consistency}.

The corresponding scalar limit follows by applying the continuous mapping
theorem to the squared RKHS norm.

\begin{corollary}
\label{cor:QT_limit}
Suppose Assumptions \ref{ass:cond_center},  \ref{assump:bounded_dynamics}--\ref{assump:first_stage_consistency}, and recall the test
statistic $Q_T$ from \eqref{eq:QT_def}. Then, under the null hypothesis \eqref{eq:H0_reformulated},
\begin{equation}
Q_T
\xrightarrow{\distr}
\|\mathcal G_2\|_{\clh_2}^2.
\end{equation}
Moreover, if $\{\mu_j\}_{j=1}^\infty$ are the eigenvalues of $L_{22}^c$,
then
\begin{equation}
\|\mathcal G_2\|_{\clh_2}^2
\overset{\distr}{=}
\sigma^2
\sum_{j=1}^\infty
\mu_j N_j^2,
\end{equation}
where $\{N_j\}_{j=1}^\infty$ are independent standard normal random variables.
\end{corollary}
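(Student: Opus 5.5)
The plan is to derive the first claim from Theorem~\ref{thm:ZT_limit} via the continuous mapping theorem, and then identify the law of $\|\mathcal G_2\|_{\clh_2}^2$ through the spectral decomposition of its covariance operator.

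\emph{Convergence of $Q_T$.} By \eqref{eq:QT_def}, $Q_T=\|Z_T\|_{\clh_2}^2$. The map $h\mapsto\|h\|_{\clh_2}^2$ is continuous from $\clh_2$ to $\RR$. Theorem~\ref{thm:ZT_limit} gives $Z_T\xrightarrow{\distr}\mathcal G_2$ in $\clh_2$ under the null, so the continuous mapping theorem yields $Q_T\xrightarrow{\distr}\|\mathcal G_2\|_{\clh_2}^2$. Nothing beyond the theorem is needed here.

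\emph{Properties of $L_{22}^c$.} The operator $L_{22}^c=\EE_\pi[K_2^c((X_0,Y_0),\cdot)\otimes K_2^c((X_0,Y_0),\cdot)]$ is a Bochner expectation of rank-one positive operators. Each has trace $K_2^c((X_0,Y_0),(X_0,Y_0))\le\kappa^2$ by Assumption~\ref{assump:bounded_kernel}. Hence $L_{22}^c$ is positive, self-adjoint and trace class, with $\operatorname{tr}L_{22}^c\le\kappa^2$. Since $\clh_2$ is separable, the spectral theorem for compact self-adjoint operators provides an orthonormal basis $\{\phi_j\}$ of $\clh_2$ (completing by the kernel if necessary) with $L_{22}^c\phi_j=\mu_j\phi_j$, $\mu_j\ge0$ and $\sum_j\mu_j<\infty$.

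\emph{Expansion of the Gaussian limit.} Parseval gives $\|\mathcal G_2\|_{\clh_2}^2=\sum_j\langle\mathcal G_2,\phi_j\rangle_{\clh_2}^2$. By definition of a Gaussian element with covariance $\sigma^2L_{22}^c$, any finite family $(\langle\mathcal G_2,\phi_j\rangle)_{j\le n}$ is jointly centered Gaussian with
\begin{equation}
\operatorname{Cov}(\langle\mathcal G_2,\phi_j\rangle,\langle\mathcal G_2,\phi_k\rangle)=\sigma^2\langle L_{22}^c\phi_j,\phi_k\rangle=\sigma^2\mu_j\delta_{jk}.
\end{equation}
Joint Gaussianity plus zero covariance gives independence. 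So the coordinates are independent and distributed as $\sigma\sqrt{\mu_j}N_j$ with $N_j$ i.i.d.\ standard normal; for $\mu_j=0$ the coordinate is a.s.\ zero and any $N_j$ may be attached. Consequently, for every $n$,
\begin{equation}
\sum_{j\le n}\langle\mathcal G_2,\phi_j\rangle^2\overset{\distr}{=}\sigma^2\sum_{j\le n}\mu_jN_j^2.
\end{equation}

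\emph{Passage to the infinite sum.} Both sides are nondecreasing in $n$. The left side converges a.s.\ to $\|\mathcal G_2\|^2$. The right side converges a.s.\ because its expectation $\sigma^2\sum_j\mu_j$ is finite. Equality in law of the finite-dimensional sums, together with monotone a.s.\ convergence on both sides, passes equality in distribution to the limits, for example by comparing Laplace transforms via monotone convergence.

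The only step requiring care is this last one, together with justifying that $L_{22}^c$ is trace class so that the series is finite. Both follow directly from the bounded-kernel assumption, so no step is expected to be a genuine obstacle.
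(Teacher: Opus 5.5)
Your proposal is correct and follows essentially the same route as the paper: the first claim comes from Theorem~\ref{thm:ZT_limit} via the continuous mapping theorem. The weighted chi-square identification uses trace-class positivity of the covariance, an eigenbasis, independence of the Gaussian coordinates and Parseval, which the paper spells out for the analogous operator $L_c$ in the proof of Theorem~\ref{thm:full_oracle_equivalence}; your explicit monotone passage from finite to infinite sums is simply a more careful version of the paper's remark that the series has finite expectation.
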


Corollary~\ref{cor:QT_limit} provides the null distribution used to
calibrate the test. Since the population eigenvalues $\{\mu_j\}_{j=1}^{\infty}$ are
unknown, they are approximated in practice by the eigenvalues of the
empirical centered kernel matrix, as described in Section~\ref{sec:criticalvalue}.
The validity of this spectral approximation is supported by the empirical
operator convergence result stated in
Proposition~\ref{prop:aux_operator_convergence}.

The next theorem records the behavior of the statistic under fixed
alternatives. It shows that the statistic diverges at rate $T$ whenever
the $Y$-dependent component is nonzero in the second-stage RKHS.

\begin{theorem}
\label{thm:ZT_alt}
Suppose Assumptions \ref{ass:cond_center}, \ref{ass:rkhs_dynamics},  
\ref{assump:bounded_dynamics}--\ref{assump:first_stage_consistency},  and $\lambda_{T,f}\sqrt T \to \infty$.
Then,
\begin{equation}
Z_T-\sqrt T\,\widehat L_{22}^c g_{X,0}
\xrightarrow{\distr}
\mathcal G_2
\qquad
\text{in } \clh_2 .
\end{equation}
Consequently, if $L_{22}^c g_{X,0}\neq 0$, then
\begin{equation}
\frac{Q_T}{T}
\xrightarrow{\PP}
\|L_{22}^c g_{X,0}\|_{\clh_2}^2,
\end{equation}
and hence $Q_T$ diverges in probability at the linear rate $T$.
\end{theorem}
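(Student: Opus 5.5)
The plan is to start from the alternative-hypothesis decomposition \eqref{eq:ZT_operator_alternative}. Subtracting $\sqrt T\,\widehat L_{22}^c g_{X,0}$ leaves three terms:
\begin{equation}
Z_T-\sqrt T\,\widehat L_{22}^c g_{X,0}
=
\sqrt T\,\widehat L_{21}^c(f_X-\widetilde f)
-
\sqrt T\,\widehat L_{21}^c(\widehat L_{11}+\lambda_{T,f}I_{\clh_1})^{-1}\widehat L_{12}g_{X,0}
+
\frac{1}{\sqrt T}\mathcal K_2^c\boldsymbol{\varepsilon}.
\end{equation}
The last term converges weakly to $\mathcal G_2$ by the auxiliary $\clh_2$-valued martingale CLT (Proposition~\ref{prop:aux_FCLT}). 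This CLT uses only the innovation structure from Assumption~\ref{assump:gaussian} and boundedness of $K_2^c$, so it holds under the alternative as well. By Slutsky's lemma in the separable Hilbert space $\clh_2$, it then suffices to show that the first two terms are $o_{\PP}(1)$ in $\clh_2$.

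For the first term I would argue exactly as in the proof of Theorem~\ref{thm:ZT_limit}. By \eqref{eq:L21orthogonality}, $L_{21}^c=0$, so
\[
\|\sqrt T\,\widehat L_{21}^c(f_X-\widetilde f)\|_{\clh_2}\le \sqrt T\,\|\widehat L_{21}^c-L_{21}^c\|_{\cls_2}\,\|f_X-\widetilde f\|_{\clh_1}.
\]
Proposition~\ref{prop:aux_operator_convergence} and Corollary~\ref{cor:aux_operator_conv_prob} give $\sqrt T\|\widehat L_{21}^c-L_{21}^c\|_{\cls_2}=O_{\PP}(1)$, and Assumption~\ref{assump:first_stage_consistency} gives $\|f_X-\widetilde f\|_{\clh_1}=o_{\PP}(1)$, so the product is $o_{\PP}(1)$.

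The second term is the main obstacle, and it is where $\lambda_{T,f}\sqrt T\to\infty$ enters. Both $L_{21}^c=0$ and $L_{12}=0$, so $\widehat L_{21}^c=O_{\PP}(T^{-1/2})$ and $\widehat L_{12}=O_{\PP}(T^{-1/2})$ in Hilbert--Schmidt norm by the same operator convergence results. Since $\widehat L_{11}\succeq 0$, we have $\|(\widehat L_{11}+\lambda_{T,f}I_{\clh_1})^{-1}\|_{\op}\le\lambda_{T,f}^{-1}$, and $\|g_{X,0}\|_{\clh_2}<\infty$ by Assumption~\ref{ass:rkhs_dynamics}. Combining these,
\[
\sqrt T\,\|\widehat L_{21}^c\|_{\op}\,\lambda_{T,f}^{-1}\,\|\widehat L_{12}\|_{\op}\,\|g_{X,0}\|_{\clh_2}
=O_{\PP}\bigl(\sqrt T\cdot T^{-1/2}\cdot\lambda_{T,f}^{-1}\cdot T^{-1/2}\bigr)
=O_{\PP}\bigl((\lambda_{T,f}\sqrt T)^{-1}\bigr)=o_{\PP}(1).
\]
One point needs checking: the operator convergence results must hold under the alternative. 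They rely only on geometric ergodicity \eqref{eq:geom_ergodicity} and kernel boundedness, both of which are available, so I expect no difficulty there.

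For the consequence, the weak limit yields $Z_T-\sqrt T\,\widehat L_{22}^c g_{X,0}=O_{\PP}(1)$, and hence $Z_T/\sqrt T=\widehat L_{22}^c g_{X,0}+O_{\PP}(T^{-1/2})$. Operator convergence gives $\|\widehat L_{22}^c g_{X,0}-L_{22}^c g_{X,0}\|_{\clh_2}\le\|\widehat L_{22}^c-L_{22}^c\|_{\cls_2}\|g_{X,0}\|_{\clh_2}\xrightarrow{\PP}0$. Therefore $Z_T/\sqrt T\xrightarrow{\PP}L_{22}^c g_{X,0}$, and by continuity of the norm $Q_T/T=\|Z_T/\sqrt T\|_{\clh_2}^2\xrightarrow{\PP}\|L_{22}^c g_{X,0}\|_{\clh_2}^2>0$. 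This forces $Q_T\to\infty$ in probability at rate $T$.
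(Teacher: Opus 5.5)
Your proposal is correct and follows essentially the same route as the paper. Both start from the decomposition \eqref{eq:ZT_operator_alternative} and send the innovation term to $\mathcal G_2$ via Proposition~\ref{prop:aux_FCLT}. Both use the orthogonality relations $L_{21}^c=0$ and $L_{12}g_{X,0}=0$ together with the bound $\lambda_{T,f}^{-1}$ on the regularized inverse and the $O_\PP(T^{-1/2})$ operator rates to show that the remaining two terms are $o_\PP(1)$. The $Q_T/T$ limit then follows by the same Slutsky and continuous-mapping argument.
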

By Lemma \ref{lem:injectivitycovarianceoperator}, every fixed alternative
$g_{X,0}\in\mathcal H_2$ with
$\|g_{X,0}\|_{L^2(\pi)}>0$ satisfies
$L_{22}^c g_{X,0}\neq 0$.
Therefore, under the assumptions of Theorem \ref{thm:ZT_alt}, the proposed oracle test is consistent against all
such fixed alternatives.

Theorem~\ref{thm:ZT_alt} also clarifies the role of the second-stage kernel
in finite samples. Although every such fixed alternative has a nonzero
population moment, the strength of the deterministic signal is $\|L_{22}^c g_{X,0}\|_{\mathcal H_2}$.
Thus, alternatives that are represented more strongly by the second-stage
RKHS $\clh_2$ are expected to yield larger finite-sample power. The proof also
requires control of the term
\begin{equation}
\widehat L_{21}^c
(\widehat L_{11}+\lambda_{T,f}I_{\clh_1})^{-1}
\widehat L_{12}g_{X,0},
\end{equation}
which appears in the alternative decomposition in \eqref{eq:ZT_operator_alternative}.
The conditional-centering condition makes this term asymptotically
negligible. Details are provided in
Appendix \ref{app:appendixasymptotic_distribution_consistency}.

\subsection{Asymptotics after Empirical Conditional Centering}
\label{sec:full_sample_empirical_centering}

We establish the null theory for the statistics in
\eqref{eq:full_residual_embedding} and
\eqref{eq:full_residual_quadratic} that employ empirical centering. The main idea is to show that the
residual embedding $\widehat Z_T^{\mathrm{res}}$ has the same
asymptotic limit as its oracle counterpart $Z_T$ in \eqref{eq:ZT_def}.
As shown in \eqref{eq:full_app_three_remainders}, the proof involves three
remainder terms. These terms contain different combinations of the
innovations, the first-stage KRR estimation error, and the empirical
conditional-centering error. Their interaction is the main source of
difficulty, as the resulting terms involve both the temporal dependence of
the process and the dependence induced by estimating the relevant functions
and operators from the data.

The proof separates these sources of error through a sequence of
decompositions and operator factorizations and controls the resulting terms
individually. We first introduce the assumptions and intermediate results
for the first-stage KRR error and the empirical conditional-centering error,
before combining them in the main theorem.

The first source of error comes from estimating the own-history component
$f_X$.

\begin{assumption}
\label{ass:fhattau-f}
For the estimator $\widehat f$ in \eqref{eq:fhat_repr}, suppose that there is
a deterministic sequence $a_T\downarrow0$ such that
\begin{equation}
\|\widehat f-f_X\|_\infty
=O_\PP(a_T),
\qquad
\|\widehat f-f_X\|_{\clh_1}
=O_\PP(a_T).
\label{eq:full_first_stage_rate}
\end{equation}
\end{assumption}

Under the null, $\widehat f$ coincides with the oracle first-stage estimator
in Assumption~\ref{assump:first_stage_consistency}. The quantitative rate in
\eqref{eq:full_first_stage_rate} is needed because the first-stage estimation
error interacts with the empirical conditional-centering error in the
feasible residual embedding. Rates of this form can be obtained under the
null from the KRR theory developed in \cite{duker2025kernel} under suitable
regularity and tuning conditions. Moreover, by the reproducing property,
\[
\|h\|_\infty\leq\kappa\|h\|_{\clh_1},
\]
so an $\clh_1$-norm rate also implies the corresponding supremum-norm rate.

We next quantify the error introduced by estimating the conditional-centering
map. Recall $m_\lambda$ from
\eqref{eq:full_population_krr_centering} and define the squared population
centering bias
\begin{equation}
b_{\lambda,c}^2
\doteq
\int
\|m_\lambda(x)-m(x)\|_{\widetilde\clh_2}^2\,\pi_X(dx).
\label{eq:full_centering_bias}
\end{equation}
To control the regularized operator fluctuations arising from empirical
centering, we use the following regularized feature map. For
$\lambda\in(0,1)$, let $\psi_{1,\lambda}:\RR\to\clh_1$ be given by
\begin{equation}
\psi_{1,\lambda}(x)
\doteq
(L_{11}+\lambda I_{\clh_1})^{-1/2}K_1(x,\cdot).
\label{eq:full_regularized_feature}
\end{equation}
We assume that, for some $\gamma_\psi\in(0,1]$ and
$C_\psi\in(0,\infty)$,
\begin{equation}
\|\psi_{1,\lambda}(x)\|_{\clh_1}
\leq
C_\psi\lambda^{-\gamma_\psi/2}
\quad\text{for every }\lambda\in(0,1)
\quad\text{and }\pi_X\text{-a.e. }x.
\label{eq:full_embedding_bound}
\end{equation}
The inequality in \eqref{eq:full_embedding_bound} always holds with
$\gamma_\psi=1$ and $C_\psi=\kappa$, since
\[
(L_{11}+\lambda I_{\clh_1})^{-1}
\preceq
\lambda^{-1}I_{\clh_1}.
\]

Define the empirical mean squared centering error
\begin{equation}
\widehat R_{c,T}
\doteq
\frac1T\sum_{t=1}^T
\|\widehat m_T(X_{t-1})-m(X_{t-1})\|_{\widetilde\clh_2}^2.
\label{eq:full_centering_risk}
\end{equation}
A useful decomposition for controlling this quantity is given in
\eqref{eq:full_app_centering_error_decomposition}. It separates the empirical
centering error into a stochastic estimation component, arising from the
difference between the empirical and population regularized KRR operators,
and the deterministic regularization bias $m_\lambda-m$. These two
components are controlled separately throughout the proof.

The following assumption specifies how the first-stage rate, the centering
bias, and the regularization parameter must be balanced.

\begin{assumption}
\label{ass:full_empirical_centering_tuning}
Let $a_T$, $b_{\lambda,c}$, and $\gamma_\psi$ be as in
\eqref{eq:full_first_stage_rate}, \eqref{eq:full_centering_bias}, and
\eqref{eq:full_embedding_bound}. Assume that
$\lambda_{T,K}\downarrow0$, $Ta_T^2\to\infty$, and
\begin{equation}
b_{\lambda_{T,K},c}^2
=o\!\left((Ta_T^2)^{-1}\right),
\qquad
a_T^2\lambda_{T,K}^{-(\gamma_\psi+1)}\to0,
\qquad
T\lambda_{T,K}^{2\gamma_\psi+1}\to\infty.
\label{eq:full_tuning_conditions}
\end{equation}
\end{assumption}

The conditions in \eqref{eq:full_tuning_conditions} reflect the different
interactions that must be controlled to recover the oracle limit. The first
condition requires the deterministic centering bias to be sufficiently small
relative to the first-stage KRR error. In particular, a slower first-stage
rate, corresponding to a larger $a_T$, requires a smaller centering bias.
The second condition controls the interaction between the first-stage KRR
error and the stochastic component of empirical centering. The final
condition prevents the centering regularization parameter from decreasing
too quickly and is primarily needed to control the interaction between the
stochastic centering error and the innovations. The requirement
$Ta_T^2\to\infty$ determines the scale relative to which the centering error
must be negligible.

The next proposition makes the stochastic and deterministic contributions to
the empirical centering error explicit.

\begin{proposition}
\label{prop:full_centering_risk_rate}
Under Assumptions~\ref{assump:bounded_dynamics},
\ref{assump:bounded_kernel}, \ref{assump:gaussian}, and
\ref{ass:full_empirical_centering_tuning},
\begin{equation}
\widehat R_{c,T}
=
O_\PP\!\left(
T^{-1}\lambda_{T,K}^{-(\gamma_\psi+1)}
+b_{\lambda_{T,K},c}^2
\right)
=
o_\PP\!\left((Ta_T^2)^{-1}\right).
\label{eq:full_centering_risk_rate}
\end{equation}
\end{proposition}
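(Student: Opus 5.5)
We split the error as $\widehat m_T(x)-m(x)=\bigl(\widehat m_T(x)-m_\lambda(x)\bigr)+\bigl(m_\lambda(x)-m(x)\bigr)$ with $\lambda=\lambda_{T,K}$, and use $\|u+v\|^2\le 2\|u\|^2+2\|v\|^2$ to get
\[
\widehat R_{c,T}\le \frac2T\sum_{t=1}^T\|(\widehat A_{\lambda,T}-A_\lambda)K_1(X_{t-1},\cdot)\|_{\widetilde\clh_2}^2+\frac2T\sum_{t=1}^T\|m_\lambda(X_{t-1})-m(X_{t-1})\|_{\widetilde\clh_2}^2 .
\]
The second (bias) average is a stationary mean of a nonnegative function bounded by $4\kappa_2^2$; its expectation is exactly $b_{\lambda,c}^2$, since $(X_0,Y_0)\sim\pi$. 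Markov's inequality then gives $O_\PP(b_{\lambda,c}^2)$. Boundedness of $m_\lambda$ follows from $\|A_\lambda K_1(x,\cdot)\|\le \|\widetilde L_{21}(L_{11}+\lambda)^{-1/2}\|_{\op}\|\psi_{1,\lambda}(x)\|$, and we show below that the first operator norm is at most $\kappa_2$.

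For the stochastic term, write $\widetilde L_{21}(L_{11}+\lambda)^{-1/2}$ as a bounded operator: $\widetilde L_{21}=\EE[\widetilde K_2(\cdot)\otimes K_1(X_0,\cdot)]$. By a Cauchy--Schwarz (Baker) factorization $\widetilde L_{21}=\widetilde C^{1/2}VL_{11}^{1/2}$ with $\|V\|_{\op}\le1$, we get $\|\widetilde L_{21}(L_{11}+\lambda)^{-1/2}\|_{\op}\le\kappa_2$. Next use the exact factorization
\[
\widehat A_{\lambda,T}-A_\lambda=(\widehat{\widetilde L}_{21,T}-\widetilde L_{21})(\widehat L_{11}+\lambda)^{-1}+\widetilde L_{21}(L_{11}+\lambda)^{-1}(L_{11}-\widehat L_{11})(\widehat L_{11}+\lambda)^{-1}.
\]
Insert $(L_{11}+\lambda)^{\pm1/2}$ so that the evaluation vectors appear as $\psi_{1,\lambda}(X_{t-1})$. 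The key objects are then the whitened fluctuations
\[
\Delta_1=(\widehat{\widetilde L}_{21,T}-\widetilde L_{21})(L_{11}+\lambda)^{-1/2},\qquad \Delta_2=(L_{11}+\lambda)^{-1/2}(\widehat L_{11}-L_{11})(L_{11}+\lambda)^{-1/2},
\]
together with the operator $B=(L_{11}+\lambda)^{1/2}(\widehat L_{11}+\lambda)^{-1}(L_{11}+\lambda)^{1/2}=(I-\Delta_2)^{-1}$, which is $O_\PP(1)$ once $\|\Delta_2\|_{\op}\le 1/2$. Using $\frac1T\sum_t\|\Delta\,\psi_{1,\lambda}(X_{t-1})\|^2=\operatorname{tr}(\Delta\,\widehat\Sigma_\lambda\,\Delta^*)$ with $\widehat\Sigma_\lambda=(L_{11}+\lambda)^{-1/2}\widehat L_{11}(L_{11}+\lambda)^{-1/2}\preceq B^{-1}$, the stochastic average is bounded by a constant times $(\|\Delta_1\|_{\cls_2}^2+\kappa_2^2\|\Delta_2\|_{\cls_2}^2)\|B\|_{\op}^2$.

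It remains to show $\EE\|\Delta_i\|_{\cls_2}^2=O\bigl(T^{-1}\lambda^{-(\gamma_\psi+1)}\bigr)$. Each $\Delta_i$ is a centered average of $\cls_2$-valued functions of the stationary chain, for example $\xi_t=\widetilde K_2(\cdot)\otimes\psi_{1,\lambda}(X_{t-1})$. The per-term norms are bounded by $\kappa_2C_\psi\lambda^{-\gamma_\psi/2}$ for $\Delta_1$ and by $C_\psi^2\lambda^{-\gamma_\psi}$ for $\Delta_2$. Expanding the squared norm gives $T^{-1}\sum_h|\mathrm{Cov}\langle\xi_0,\xi_h\rangle|$, and the geometric ergodicity bound \eqref{eq:geom_ergodicity} with a standard covariance inequality for bounded functionals makes the lag sum converge. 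This produces $\lambda^{-\gamma_\psi}$ for $\Delta_1$ and $\lambda^{-2\gamma_\psi}$ for $\Delta_2$, both at most $\lambda^{-(\gamma_\psi+1)}$ since $\gamma_\psi\le1$. I would refine the diagonal term using $\EE\|\psi_{1,\lambda}(X_0)\|^2=\operatorname{tr}(L_{11}(L_{11}+\lambda)^{-1})\le C_\psi^2\lambda^{-\gamma_\psi}$.

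The main obstacle is $\|B\|_{\op}=O_\PP(1)$, which requires $\|\Delta_2\|_{\op}\to0$. From the Hilbert--Schmidt bound, $\|\Delta_2\|_{\op}=O_\PP(T^{-1/2}\lambda^{-\gamma_\psi})$. This tends to zero because $T\lambda^{2\gamma_\psi+1}\to\infty$ and $\lambda<1$ imply $T\lambda^{2\gamma_\psi}\to\infty$; this is exactly where Assumption~\ref{ass:full_empirical_centering_tuning} enters. On the event $\|\Delta_2\|_{\op}\le1/2$, a Neumann series gives $\|B\|_{\op}\le2$, and this event has probability tending to one.

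Combining the pieces yields the first equality in \eqref{eq:full_centering_risk_rate}. For the second, $T^{-1}\lambda^{-(\gamma_\psi+1)}\cdot Ta_T^2=a_T^2\lambda^{-(\gamma_\psi+1)}\to0$, and $b_{\lambda,c}^2=o((Ta_T^2)^{-1})$ holds by assumption.
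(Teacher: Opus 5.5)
Your proof is correct and is essentially the paper's argument: your $\Delta_1-\widetilde L_{21}(L_{11}+\lambda I_{\clh_1})^{-1/2}\Delta_2$ is exactly the paper's $R_{1,\lambda,T}$, and your $B$ is its $R_{2,\lambda,T}=(I_{\clh_1}+B_{\lambda,T})^{-1}$. There is a harmless sign slip in your version: $B=(I+\Delta_2)^{-1}$, not $(I-\Delta_2)^{-1}$, which does not affect the Neumann bound. The remaining steps also match the paper: the event $\{\|B_{\lambda,T}\|_{\op}\le 1/2\}$, the Loewner comparison, the geometric-ergodicity Hilbert--Schmidt bounds, and the Markov step for the bias.

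The one substantive difference is that you keep the two pieces of $R_{1,\lambda,T}$ separate. You then use the Cauchy--Schwarz bound $\|\widetilde L_{21}(L_{11}+\lambda I_{\clh_1})^{-1/2}\|_{\op}\le\kappa_2$ instead of the paper's cruder envelope $\|m_\lambda(x)\|_{\widetilde\clh_2}\le\kappa\kappa_2\lambda^{-1/2}$. This gives the sharper intermediate rate $T^{-1}\lambda^{-2\gamma_\psi}\le T^{-1}\lambda^{-(\gamma_\psi+1)}$.

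One small inaccuracy: the bias integrand is not bounded by $4\kappa_2^2$ uniformly in $\lambda$, since your own bound only gives $\|m_\lambda(x)\|_{\widetilde\clh_2}\le\kappa_2C_\psi\lambda^{-\gamma_\psi/2}$. This does not matter, because only its expectation $b_{\lambda,c}^2$ is needed for Markov's inequality.
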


The first term in \eqref{eq:full_centering_risk_rate} is the stochastic
estimation contribution, while the second is the deterministic
regularization bias. Under Assumption~\ref{ass:full_empirical_centering_tuning},
their combined rate is sufficiently small to control the remainder term
involving the product of the first-stage KRR error and the empirical
centering error as shown in \eqref{eq:full_app_third_remainder}. In particular, the second equality in
\eqref{eq:full_centering_risk_rate} gives precisely the scale needed after
combining the first-stage rate with the centering error through
Cauchy--Schwarz.

It is important that the appearance of $a_T$ in the second equality should
not be interpreted as an intrinsic relationship between the statistical
accuracy of the first-stage estimator and that of the centering estimator.
The first equality describes the centering error itself and does not involve
$a_T$. The dependence on $a_T$ enters through
Assumption~\ref{ass:full_empirical_centering_tuning}, which imposes the
relative rates required for the two estimation errors to interact
negligibly in the feasible residual embedding.

We can now state the asymptotic null distribution of the feasible procedure.

\begin{theorem}
\label{thm:full_oracle_equivalence}
Suppose Assumptions \ref{assump:bounded_dynamics}-\ref{assump:gaussian},
\ref{ass:fhattau-f},
\ref{ass:full_empirical_centering_tuning}, and recall
$\widehat Z_T^{\mathrm{res}}$ from
\eqref{eq:full_residual_embedding}. Under the null hypothesis
\eqref{eq:H0_reformulated},
\begin{equation}
\widehat Z_T^{\mathrm{res}}
\xrightarrow{\distr}
\calG_c
\quad\text{in }\widetilde\clh_2,
\label{eq:full_embedding_limit}
\end{equation}
where $\calG_c$ is centered Gaussian with covariance operator $\sigma^2L_c$,
with
\begin{equation}
L_ch
\doteq
\EE_\pi\!\left[
\langle h,k_1^c\rangle_{\widetilde\clh_2}k_1^c
\right],
\qquad h\in\widetilde\clh_2,
\label{eq:full_Lc_def}
\end{equation}
and $k_1^c$ as in \eqref{eq:full_centered_sections}. Recall
$\widehat Q_T^{\mathrm{res}}$ from
\eqref{eq:full_residual_quadratic}. Then,
\begin{equation}
\widehat Q_T^{\mathrm{res}}
\xrightarrow{\distr}
\|\calG_c\|_{\widetilde\clh_2}^2
\overset{\distr}{=}
\sigma^2\sum_{j=1}^\infty\mu_j^cN_j^2,
\label{eq:full_weighted_chisq}
\end{equation}
where $\{\mu_j^c\}_{j\geq1}$ are the eigenvalues of $L_c$ and
$\{N_j\}_{j\geq1}$ are independent standard normal random variables.
\end{theorem}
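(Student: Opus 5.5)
Under the null, $\widehat e_t=\varepsilon_{X,t}+\delta(X_{t-1})$ with $\delta\doteq f_X-\widehat f$, and $\widehat k_{t,T}^c=k_t^c-\Delta_t$ with $\Delta_t\doteq\widehat m_T(X_{t-1})-m(X_{t-1})$. Expanding \eqref{eq:full_residual_embedding} accordingly gives
\begin{equation*}
\widehat Z_T^{\mathrm{res}}
=Z_T^{\mathrm{or}}
+\frac1{\sqrt T}\sum_{t=1}^T\delta(X_{t-1})k_t^c
-\frac1{\sqrt T}\sum_{t=1}^T\varepsilon_{X,t}\Delta_t
-\frac1{\sqrt T}\sum_{t=1}^T\delta(X_{t-1})\Delta_t,
\end{equation*}
where $Z_T^{\mathrm{or}}\doteq T^{-1/2}\sum_t\varepsilon_{X,t}k_t^c$. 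The oracle term is handled first. The sections $k_t^c$ are bounded, $\mathcal F_{t-1}$-measurable elements of the closed subspace identified with $\clh_2$, and $\varepsilon_{X,t}$ is a martingale difference with variance $\sigma^2$. The $\clh_2$-valued martingale CLT (Proposition~\ref{prop:aux_FCLT}), applied with $K_2^c$ from \eqref{eq:full_oracle_centered_kernel}, which satisfies Assumption~\ref{ass:cond_center}, therefore gives $Z_T^{\mathrm{or}}\xrightarrow{\distr}\calG_c$ with covariance $\sigma^2L_c$. The quadratic statement \eqref{eq:full_weighted_chisq} then follows from the continuous mapping theorem and the Karhunen--Lo\`eve expansion, exactly as in Corollary~\ref{cor:QT_limit}. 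It remains to show that the three remainders are $o_\PP(1)$ in $\widetilde\clh_2$.

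\textbf{Second term.} This is the oracle first-stage term $\sqrt T\,\widehat L^c_{21}\delta$. I would write it as $\sqrt T(\widehat L_{21}^c-L_{21}^c)\delta$, using $L_{21}^c=0$ from \eqref{eq:L21orthogonality}. The Hilbert--Schmidt rate $\|\widehat L_{21}^c-L_{21}^c\|=O_\PP(T^{-1/2})$ from Corollary~\ref{cor:aux_operator_conv_prob}, combined with $\|\delta\|_{\clh_1}=O_\PP(a_T)\to0$ from Assumption~\ref{ass:fhattau-f}, then gives $O_\PP(a_T)=o_\PP(1)$. This is the same argument as in Theorem~\ref{thm:ZT_limit}.

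\textbf{Fourth term.} By Cauchy--Schwarz, its norm is at most $\sqrt T\,\|\delta\|_\infty\,\widehat R_{c,T}^{1/2}$. Proposition~\ref{prop:full_centering_risk_rate} gives $\widehat R_{c,T}=o_\PP((Ta_T^2)^{-1})$, so the bound is $\sqrt T\,O_\PP(a_T)\,o_\PP((\sqrt T a_T)^{-1})=o_\PP(1)$.

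\textbf{Third term (main obstacle).} The term $T^{-1/2}\sum_t\varepsilon_{X,t}\Delta_t$ is not a martingale, because $\widehat m_T$ depends on the whole sample and hence on future innovations through the $Y$'s. I would split $\Delta_t=(\widehat m_T-m_{\lambda})(X_{t-1})+(m_\lambda-m)(X_{t-1})$ with $\lambda=\lambda_{T,K}$.

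The bias part $T^{-1/2}\sum_t\varepsilon_{X,t}(m_\lambda-m)(X_{t-1})$ is a genuine martingale with deterministic integrand. Its second moment is $\sigma^2 b_{\lambda,c}^2$, which tends to zero since $Ta_T^2\to\infty$.

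For the stochastic part, write $(\widehat m_T-m_\lambda)(x)=(\widehat A_{\lambda,T}-A_\lambda)K_1(x,\cdot)$. The sum then becomes $(\widehat A_{\lambda,T}-A_\lambda)\,\xi_T$ with $\xi_T\doteq T^{-1/2}\sum_t\varepsilon_{X,t}K_1(X_{t-1},\cdot)$. I would use the exact factorization
\begin{equation*}
\widehat A_{\lambda,T}-A_\lambda=\bigl[(\widehat{\widetilde L}_{21,T}-\widetilde L_{21})-\widehat A_{\lambda,T}(\widehat L_{11}-L_{11})\bigr](L_{11}+\lambda I_{\clh_1})^{-1},
\end{equation*}
and insert $(L_{11}+\lambda I_{\clh_1})^{-1/2}$ on each side. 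The martingale $(L_{11}+\lambda I_{\clh_1})^{-1/2}\xi_T$ has second moment $\sigma^2\EE\|\psi_{1,\lambda}(X_0)\|^2\le C\lambda^{-\gamma_\psi}$. The regularized operator differences
\begin{equation*}
(\widehat{\widetilde L}_{21,T}-\widetilde L_{21})(L_{11}+\lambda I_{\clh_1})^{-1/2},
\qquad
(L_{11}+\lambda I_{\clh_1})^{-1/2}(\widehat L_{11}-L_{11})(L_{11}+\lambda I_{\clh_1})^{-1/2}
\end{equation*}
are $O_\PP(T^{-1/2}\lambda^{-\gamma_\psi/2})$ by geometric-ergodicity covariance bounds using \eqref{eq:full_embedding_bound}. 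In addition, $\|\widehat A_{\lambda,T}(\widehat L_{11}+\lambda I_{\clh_1})^{1/2}\|_{\op}=O(1)$.

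Combining these bounds gives roughly $O_\PP(T^{-1/2}\lambda^{-\gamma_\psi})$ times a factor $\lambda^{-1/2}$ from the mismatch between $(\widehat L_{11}+\lambda I_{\clh_1})$ and $(L_{11}+\lambda I_{\clh_1})$. This is $o_\PP(1)$ precisely when $T\lambda^{2\gamma_\psi+1}\to\infty$. The delicate point is that the product of the dependent random factors must be bounded in operator norm, not in expectation, so that their dependence is harmless. I would verify this carefully, falling back on a Neumann-series comparison of $\widehat L_{11}$ and $L_{11}$ in the regularized norm if needed.
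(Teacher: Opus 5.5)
Your decomposition into the oracle term plus three remainders matches the paper's proof. So does your treatment of the oracle term, the first-stage remainder $T^{-1/2}\sum_t\delta(X_{t-1})k_t^c$, the product remainder, and the bias part of $T^{-1/2}\sum_t\varepsilon_{X,t}\Delta_t$.

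For the stochastic part you factor $\widehat A_{\lambda,T}-A_\lambda$ differently (with $\lambda=\lambda_{T,K}$), and one rate you feed into it is wrong in a way that matters.

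\textbf{The incorrect rate.} The covariance argument does give $(\widehat{\widetilde L}_{21,T}-\widetilde L_{21})(L_{11}+\lambda I_{\clh_1})^{-1/2}=O_\PP(T^{-1/2}\lambda^{-\gamma_\psi/2})$, since each rank-one summand contains one factor $\psi_{1,\lambda}(X_{t-1})$. But $B_{\lambda,T}=(L_{11}+\lambda I_{\clh_1})^{-1/2}(\widehat L_{11}-L_{11})(L_{11}+\lambda I_{\clh_1})^{-1/2}$ is a centered average of $\psi_{1,\lambda}(X_{t-1})\otimes\psi_{1,\lambda}(X_{t-1})$, whose envelope is $C_\psi^2\lambda^{-\gamma_\psi}$. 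The same argument therefore only gives $\|B_{\lambda,T}\|_{\op}=O_\PP(T^{-1/2}\lambda^{-\gamma_\psi})$, as in \eqref{eq:full_app_D0_rate}.

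With this rate and your crude mismatch factor $\lambda^{-1/2}$, the term $\widehat A_{\lambda,T}(\widehat L_{11}-L_{11})(L_{11}+\lambda I_{\clh_1})^{-1}\xi_T$ is only $O_\PP(T^{-1/2}\lambda^{-(3\gamma_\psi+1)/2})$. Assumption~\ref{ass:full_empirical_centering_tuning} does not force this to zero. For example, the admissible choices $\lambda_{T,K}=T^{-q}$ of Proposition~\ref{prop:full_tuning_choices}(ii) with $(3\gamma_\psi+1)^{-1}\le q<(2\gamma_\psi+1)^{-1}$ violate it. You arrive at the right condition $T\lambda^{2\gamma_\psi+1}\to\infty$ only because the two inaccuracies offset each other.

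\textbf{The repair.} The Neumann-series comparison you list as a fallback is therefore required, and it fixes the argument. On $\{\|B_{\lambda,T}\|_{\op}\le1/2\}$, whose probability tends to one by Lemma~\ref{lem:B-control}, one has $\|(\widehat L_{11}+\lambda I_{\clh_1})^{-1/2}(L_{11}+\lambda I_{\clh_1})^{1/2}\|_{\op}^2=\|(I_{\clh_1}+B_{\lambda,T})^{-1}\|_{\op}\le2$. So the mismatch factor is $O(1)$, not $\lambda^{-1/2}$.

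Write the term as $\widehat A_{\lambda,T}(\widehat L_{11}+\lambda I_{\clh_1})^{1/2}\cdot(\widehat L_{11}+\lambda I_{\clh_1})^{-1/2}(L_{11}+\lambda I_{\clh_1})^{1/2}\cdot B_{\lambda,T}\cdot W_{\lambda,T}$, with $W_{\lambda,T}=(L_{11}+\lambda I_{\clh_1})^{-1/2}\xi_T=O_\PP(\lambda^{-\gamma_\psi/2})$. Your bound $\|\widehat A_{\lambda,T}(\widehat L_{11}+\lambda I_{\clh_1})^{1/2}\|_{\op}\le\kappa_2$ is correct and in fact holds deterministically. Together these give $O_\PP(T^{-1/2}\lambda^{-3\gamma_\psi/2})=o_\PP(1)$, because $3\gamma_\psi\le2\gamma_\psi+1$. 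Your other piece is $O_\PP(T^{-1/2}\lambda^{-\gamma_\psi})=o_\PP(1)$.

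\textbf{Comparison with the paper.} The paper uses the mirror-image identity
$\widehat A_{\lambda,T}-A_\lambda=\bigl[(\widehat{\widetilde L}_{21,T}-\widetilde L_{21})-A_\lambda(\widehat L_{11}-L_{11})\bigr](\widehat L_{11}+\lambda I_{\clh_1})^{-1}$.
Here the population $A_\lambda$ sits in front and the empirical inverse on the right. After multiplying by $(L_{11}+\lambda I_{\clh_1})^{1/2}$, this is exactly $R_{1,\lambda,T}R_{2,\lambda,T}$. Both fluctuations then merge into a single centered average of the operators $\Xi_\lambda$, and $R_{2,\lambda,T}=(I_{\clh_1}+B_{\lambda,T})^{-1}$.

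The envelope of $\Xi_\lambda$ carries an extra $\lambda^{-1/2}$ from $\|m_\lambda(x)\|_{\widetilde\clh_2}\le\kappa\kappa_2\lambda^{-1/2}$, which is where $T\lambda^{2\gamma_\psi+1}\to\infty$ comes from. In exchange, one Hilbert--Schmidt bound serves both this term and $\widehat R_{c,T}$ in Proposition~\ref{prop:full_centering_risk_rate}.

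Your corrected route avoids that $\lambda^{-1/2}$ through the empirical Cauchy--Schwarz bound on $\widehat A_{\lambda,T}(\widehat L_{11}+\lambda I_{\clh_1})^{1/2}$. It therefore controls the innovation-weighted term with slack when $\gamma_\psi<1$.
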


The proof of Theorem~\ref{thm:full_oracle_equivalence} in
Appendix~\ref{app:full_sample_empirical_centering} gives the skeleton of the
argument. The three remainder terms in \eqref{eq:full_app_three_remainders} correspond to the innovation-weighted centering error, the first-stage KRR error weighted by the population-centered features, and the interaction between the first-stage and empirical-centering errors. Each term is then controlled
using the bounds developed for the corresponding source of error.

Operator decompositions and factorizations play a central role in these
arguments. Empirical operator fluctuations are written as averages of
rank-one operators and compared with their population counterparts. Kernel
boundedness and the reproducing property provide the required envelope
bounds, while geometric ergodicity controls the temporal covariances in
these empirical averages. This general argument is used repeatedly for
different operators; representative examples are given in
Proposition~\ref{prop:aux_operator_convergence} and
Lemma~\ref{lem:full_fluctuation_rates} and their proofs.

The empirical-centering operator requires an additional step. Its
regularized KRR operator error cannot be controlled directly in the desired
form, and the exact factorization in
\eqref{eq:full_app_exact_factorization} separates it into operator
fluctuations that can be handled by the preceding argument. This
factorization is the key step for the innovation-weighted stochastic
centering error.

The innovation-weighted terms are then controlled using their martingale
structure. The relevant feature terms are measurable with respect to the
past, while the innovations have conditional mean zero. Consequently,
cross-time inner products vanish after conditioning, as illustrated in
\eqref{eq:full_app_W_cross_zero} and
\eqref{eq:expectation_H2_norm_squared_r_lambda_bias-2}. Combining these
martingale bounds with the operator fluctuation bounds, the centering-risk
bound in Proposition~\ref{prop:full_centering_risk_rate}, and the first-stage
rate in Assumption~\ref{ass:fhattau-f} shows that all three remainder terms
are $o_\PP(1)$. The feasible residual embedding therefore has the same null
limit as the oracle innovation embedding.

For the spectral calibration of critical values and $p$-values introduced in
Section~\ref{sec:criticalvalue}, define the feasible empirical covariance
operator
\begin{equation}
\widehat L_{c,T}h
\doteq
\frac1T\sum_{t=1}^T
\langle h,\widehat k_{t,T}^c\rangle_{\widetilde\clh_2}
\widehat k_{t,T}^c.
\label{eq:full_feasible_covariance}
\end{equation}

\begin{proposition}
\label{prop:full_covariance_convergence}
Under the assumptions of Proposition~\ref{prop:full_centering_risk_rate},
\begin{equation}
\|\widehat L_{c,T}-L_c\|_{\cls_2}
=o_\PP(1),
\qquad
\operatorname{tr}(\widehat L_{c,T})
\xrightarrow{\PP}
\operatorname{tr}(L_c).
\label{eq:full_covariance_convergence}
\end{equation}
The nonzero eigenvalues of $\widehat L_{c,T}$ are the
eigenvalues of $\widehat K_2^c/T$.
\end{proposition}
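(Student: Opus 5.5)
Throughout, write $\delta_t\doteq \widehat k_{t,T}^c-k_t^c = m(X_{t-1})-\widehat m_T(X_{t-1})$, let $L_{c,T}^{\circ}$ denote the oracle empirical operator $T^{-1}\sum_t k_t^c\otimes k_t^c$, and split
\[
\widehat L_{c,T}-L_c=(\widehat L_{c,T}-L_{c,T}^{\circ})+(L_{c,T}^{\circ}-L_c).
\]

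\textbf{Oracle fluctuation.} The summands $k_t^c\otimes k_t^c$ are bounded in $\cls_2$. Indeed, $\|k_t^c\|_{\widetilde\clh_2}\le 2\kappa_2$, because $\|m(x)\|_{\widetilde\clh_2}\le\kappa_2$ by Jensen's inequality for Bochner conditional expectations. Hence the same argument as in Proposition~\ref{prop:aux_operator_convergence} applies: expand $\EE\|L_{c,T}^{\circ}-L_c\|_{\cls_2}^2$ into a double sum of covariances and bound the lag-$h$ covariance by a constant times $\rho^h\EE_\pi J$ using \eqref{eq:geom_ergodicity}. This gives $O_\PP(T^{-1/2})$. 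The function $k^c$ depends on $(X_{t-1},Y_{t-1})$ only through the fixed measurable map $m$, so the stationary Markov structure is preserved and the mixing bound is legitimate.

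\textbf{Perturbation term.} Use the rank-one identity
\[
\widehat k\otimes\widehat k-k\otimes k=\delta\otimes k+k\otimes\delta+\delta\otimes\delta,
\]
together with $\|a\otimes b\|_{\cls_2}=\|a\|\,\|b\|$. By Cauchy--Schwarz over $t$,
\[
\|\widehat L_{c,T}-L_{c,T}^{\circ}\|_{\cls_2}\le 2\Big(\tfrac1T\sum_t\|k_t^c\|^2\Big)^{1/2}\widehat R_{c,T}^{1/2}+\widehat R_{c,T}\le 4\kappa_2\widehat R_{c,T}^{1/2}+\widehat R_{c,T}.
\]
Proposition~\ref{prop:full_centering_risk_rate} gives $\widehat R_{c,T}=o_\PP((Ta_T^2)^{-1})$, and $Ta_T^2\to\infty$ by Assumption~\ref{ass:full_empirical_centering_tuning}, so $\widehat R_{c,T}=o_\PP(1)$. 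This yields the Hilbert--Schmidt statement. This step is where all the difficulty lies, and it is already absorbed by Proposition~\ref{prop:full_centering_risk_rate}; only its weaker consequence $\widehat R_{c,T}\to0$ is needed here.

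\textbf{Trace.} Here $\operatorname{tr}(\widehat L_{c,T})=T^{-1}\sum_t\|\widehat k_{t,T}^c\|^2$ and $\operatorname{tr}(L_c)=\EE_\pi\|k_1^c\|^2$. We have
\[
\Big|\|\widehat k\|^2-\|k\|^2\Big|\le 2\|k\|\,\|\delta\|+\|\delta\|^2,
\]
so the same Cauchy--Schwarz bound controls the difference from $T^{-1}\sum_t\|k_t^c\|^2$. The latter converges to $\EE_\pi\|k_1^c\|^2$ by the ergodic theorem, or by the variance bound under geometric ergodicity. Hilbert--Schmidt convergence alone would not give trace convergence, which is why this separate argument is needed.

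\textbf{Eigenvalues.} Write $\widehat L_{c,T}=T^{-1}\mathcal A\mathcal A^*$, where $\mathcal A:\RR^T\to\widetilde\clh_2$ is given by $\mathcal A v=\sum_t v_t\widehat k_{t,T}^c$. Its adjoint is $\mathcal A^*h=(\langle h,\widehat k_{t,T}^c\rangle)_t$. Then $\mathcal A^*\mathcal A$ is the Gram matrix of the residual sections, which by \eqref{eq:K2hatc_residualized_def} equals $\widehat K_2^c$. The nonzero spectra of $\mathcal A\mathcal A^*$ and $\mathcal A^*\mathcal A$ coincide with multiplicities: if $\mathcal A^*\mathcal A v=\mu v$ with $\mu\ne0$, then $\mathcal A v\ne0$ is an eigenvector of $\mathcal A\mathcal A^*$ with eigenvalue $\mu$, and conversely. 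Dividing by $T$ gives the final claim.
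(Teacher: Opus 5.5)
Your proposal is correct and follows essentially the same route as the paper. Both arguments compare with the oracle operator $L^0_{c,T}=T^{-1}\sum_t k_t^c\otimes k_t^c$ using an $O_\PP(T^{-1/2})$ geometric-ergodicity bound, control the rank-one perturbation by $\widehat R_{c,T}^{1/2}$ via Cauchy--Schwarz together with $\widehat R_{c,T}=o_\PP(1)$, use the ergodic theorem for the trace, and use the $\Phi_T\Phi_T^*$ versus $\Phi_T^*\Phi_T$ spectral identity. The only cosmetic difference is the perturbation bound: you expand into three rank-one terms to get $4\kappa_2\widehat R_{c,T}^{1/2}+\widehat R_{c,T}$ directly, whereas the paper adds and subtracts $k_t^c\otimes\widehat k_{t,T}^c$ and then separately bounds $T^{-1}\sum_t(\|\widehat k_{t,T}^c\|+\|k_t^c\|)^2\le 32\kappa_2^2+2\widehat R_{c,T}$.
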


The proposition justifies using the spectrum of the normalized Gram matrix
to approximate the weights in \eqref{eq:full_weighted_chisq}.

\begin{proposition}
\label{prop:full_tuning_choices}
Suppose
\begin{equation}
a_T^2=\frac{(\log T)^{2\zeta_f}}{T},
\qquad
b_{\lambda,c}^2=O(\lambda^{\beta_c})
\label{eq:full_source_regime}
\end{equation}
for some $\zeta_f,\beta_c>0$. Each of the following choices satisfies
Assumption~\ref{ass:full_empirical_centering_tuning}:
\begin{enumerate}[label=(\roman*)]
\item $\lambda_{T,K}=(\log T)^{-r}$ with $r\beta_c>2\zeta_f$;
\item $\lambda_{T,K}=T^{-q}$ with
$0<q<(2\gamma_\psi+1)^{-1}$.
\end{enumerate}
\end{proposition}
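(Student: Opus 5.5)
Proposition~\ref{prop:full_tuning_choices} reduces to a direct verification of the five requirements of Assumption~\ref{ass:full_empirical_centering_tuning}: $\lambda_{T,K}\downarrow0$, $Ta_T^2\to\infty$, and the three conditions in \eqref{eq:full_tuning_conditions}. I would first substitute the regime \eqref{eq:full_source_regime}, so that $Ta_T^2=(\log T)^{2\zeta_f}\to\infty$ holds automatically for both choices since $\zeta_f>0$. The three remaining conditions become
\begin{equation}
\lambda_{T,K}^{\beta_c}(\log T)^{2\zeta_f}\to0,
\qquad
\frac{(\log T)^{2\zeta_f}}{T}\lambda_{T,K}^{-(\gamma_\psi+1)}\to0,
\qquad
T\lambda_{T,K}^{2\gamma_\psi+1}\to\infty .
\end{equation}
For the first of these, the bound $b^2_{\lambda,c}=O(\lambda^{\beta_c})$ suffices, since it gives $b^2_{\lambda_{T,K},c}\,Ta_T^2=O(\lambda_{T,K}^{\beta_c}(\log T)^{2\zeta_f})$. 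Then I would check each choice separately.

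For (i), $\lambda_{T,K}=(\log T)^{-r}$, where $r>0$ because $r\beta_c>2\zeta_f>0$, so $\lambda_{T,K}\downarrow0$. The bias condition becomes $(\log T)^{2\zeta_f-r\beta_c}\to0$, which is exactly the hypothesis $r\beta_c>2\zeta_f$. The second condition becomes $(\log T)^{2\zeta_f+r(\gamma_\psi+1)}/T\to0$, and the third becomes $T(\log T)^{-r(2\gamma_\psi+1)}\to\infty$. Both hold because any fixed power of $\log T$ is $o(T^\delta)$ for every $\delta>0$.

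For (ii), $\lambda_{T,K}=T^{-q}$ with $q>0$ decreases to zero. The bias condition becomes $T^{-q\beta_c}(\log T)^{2\zeta_f}\to0$, which holds since $q\beta_c>0$. The third condition is $T^{1-q(2\gamma_\psi+1)}\to\infty$, which is equivalent to the stated bound $q<(2\gamma_\psi+1)^{-1}$. The second condition is $(\log T)^{2\zeta_f}T^{q(\gamma_\psi+1)-1}\to0$. This needs $q(\gamma_\psi+1)<1$, which follows from $q<(2\gamma_\psi+1)^{-1}\le(\gamma_\psi+1)^{-1}$ since $\gamma_\psi\ge0$; the logarithmic factor is then absorbed by the strictly negative power of $T$.

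No step is a genuine obstacle; the argument is bookkeeping. The only point needing care is noticing that in (ii) the second condition is implied by the third via $2\gamma_\psi+1\ge\gamma_\psi+1$. I would also note explicitly that for (i) no relation between $r$ and $\gamma_\psi$ is needed, because polylogarithmic regularization is dominated by any power of $T$.
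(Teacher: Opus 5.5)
Your proposal is correct and follows the same route as the paper's proof: substitute $a_T^2=(\log T)^{2\zeta_f}/T$ and $b_{\lambda,c}^2=O(\lambda^{\beta_c})$, then check $Ta_T^2\to\infty$ and the three conditions in \eqref{eq:full_tuning_conditions} separately for the logarithmic and polynomial choices. In case (ii) you also use, as the paper does, that $q<(2\gamma_\psi+1)^{-1}$ implies $q<(\gamma_\psi+1)^{-1}$ to handle the second condition.
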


The rate for $a_T$ in \eqref{eq:full_source_regime} corresponds to the
first-stage KRR regime discussed after
Assumption~\ref{assump:first_stage_consistency}. The condition
$b_{\lambda,c}^2=O(\lambda^{\beta_c})$ assumes polynomial decay of the
regularization bias in the conditional-centering problem. Under these
representative rate regimes, Proposition~\ref{prop:full_tuning_choices}
shows that the tuning conditions in
Assumption~\ref{ass:full_empirical_centering_tuning} are achievable with
both logarithmically and polynomially decreasing regularization sequences.

The finite-sample choices $\lambda_{T,K}=1/T$ and
$1/(T\log T)$ used in some simulations are not covered by
Proposition~\ref{prop:full_tuning_choices}; they should be interpreted as
empirical implementation rules rather than admissible asymptotic sequences.

\section{Practical Considerations}
\label{se:practical}

This section summarizes the implementation and the spectral
approximation used for critical values and $p$-values under empirical centering.

\subsection{Empirical Centering Construction}
\label{sec:kernelcentering}

We first,
compute the KRR estimator $\widehat f$ in \eqref{eq:fhat_repr} and the
residual vector $\widehat e$ in \eqref{eq:residuals}. Second, form the uncentered
second-stage Gram matrix $\widetilde K_2$ in \eqref{eq:full_raw_gram} and the
matrix $S_K$ in \eqref{eq:SK_residualizer_def}. Third, compute
\begin{equation}
\widehat K_2^c=(I_T-S_K)\widetilde K_2(I_T-S_K)
\end{equation}
and the statistic $\widehat Q_T^{\mathrm{res}}$ from \eqref{eq:full_residual_quadratic}
The two regularization parameters play distinct roles:
$\lambda_{T,f}$ controls the first-stage regression, while
$\lambda_{T,K}$ controls the conditional-centering regression. The matrix
$I_T-S_K$ residualizes the uncentered second-stage feature map with respect to
functions of $X$ represented by $K_1$. Applying it on both sides of
$\widetilde K_2$ gives the Gram matrix of the residualized feature sections.

Theorem~\ref{thm:full_oracle_equivalence} establishes the asymptotic null
validity of this construction. 

\subsection{Approximation of the Null Distribution}
\label{sec:criticalvalue}

Let $\widehat\mu_1,\ldots,\widehat\mu_T$ be the eigenvalues of
$\widehat K_2^c/T$. By Proposition~\ref{prop:full_covariance_convergence},
these are the nonzero empirical covariance-operator eigenvalues associated
with the centered features. The weighted chi-square limit in
Theorem~\ref{thm:full_oracle_equivalence} is approximated by
\begin{equation}
\widehat Q_T^{\mathrm{res}}
\overset{\distr}{\approx}
\widehat\sigma^2
\sum_{j=1}^T\widehat\mu_jN_j^2,
\label{eq:empirical_weighted_chisq_approx}
\end{equation}
where $N_1,\ldots,N_T$ are independent standard normal random variables and
$\widehat\sigma^2$ is estimated from the first-stage residuals.
This approximation can be used directly by simulating the weighted sum.

For computational efficiency, we also use the Satterthwaite approximation
\citep{welch1938significance,satterthwaite1946approximate}. Put
$w_j=\widehat\sigma^2\widehat\mu_j$ and define
\begin{equation}
m_1=\sum_{j=1}^T w_j,
\qquad
m_2=2\sum_{j=1}^T w_j^2,
\qquad
\nu=\frac{2m_1^2}{m_2},
\qquad
c=\frac{m_2}{2m_1}.
\label{eq:satterthwaite_parameters}
\end{equation}
Moment matching gives
$\sum_{j=1}^T w_jN_j^2\overset{\distr}{\approx}c\chi_\nu^2$, and the
approximate $p$-value is
$1-F_{\chi_\nu^2}\!\left(
\widehat Q_T^{\mathrm{res}}/c
\right)$.
Section~\ref{sec:simulation_QTdist} examines this approximation in finite
samples.

\section{When Conditional Centering Can Be Weakened}
\label{subsec:weaken_conditional_centering}

In the preceding sections, the second-stage kernel $K_2^c$ was conditionally
centered in order to eliminate the first-stage estimation error from the null
limit. We now consider a generic bounded second-stage kernel $K_2^\circ$ for
which this orthogonality need not hold. To distinguish this construction from
the conditionally centered one, all second-stage objects introduced in this
section carry a superscript $\circ$.

Let $\clh_2^\circ$ denote the RKHS induced by $K_2^\circ$. For
$v=(v_1,\ldots,v_T)^\top\in\RR^T$, define the empirical evaluation operator
$\mathcal K_2^\circ:\RR^T\to\clh_2^\circ$ by
\begin{equation}
(\mathcal K_2^\circ v)(x,y)
\doteq
\sum_{t=1}^T
v_t K_2^\circ\bigl((X_{t-1},Y_{t-1}),(x,y)\bigr).
\label{eq:circ_eval_operator}
\end{equation}
Define the population operators
\begin{align}
L_{21}^\circ f
&\doteq
\EE_\pi\!\left[
K_2^\circ\bigl((X_0,Y_0),\cdot\bigr)f(X_0)
\right],
\quad f\in\clh_1,
\qquad
L_{22}^\circ g
\doteq
\EE_\pi\!\left[
K_2^\circ\bigl((X_0,Y_0),\cdot\bigr)g(X_0,Y_0)
\right],
\quad g\in\clh_2^\circ,
\end{align}
and their empirical analogues $\widehat L_{21}^\circ
\doteq
\frac1T\mathcal K_2^\circ\mathcal K_1^*$,
$\widehat L_{22}^\circ
\doteq
\frac1T\mathcal K_2^\circ(\mathcal K_2^\circ)^*$.
Finally, define the innovation embedding, residual embedding, and quadratic
statistic by
\begin{align}
G_T^\circ
\doteq
\frac1{\sqrt T}\mathcal K_2^\circ\boldsymbol{\varepsilon},
\qquad
Z_T^\circ
\doteq
\frac1{\sqrt T}\mathcal K_2^\circ\widehat e,
\qquad
Q_T^\circ
\doteq
\|Z_T^\circ\|_{\clh_2^\circ}^2.
\label{eq:circ_GT_def}
\end{align}

Under the null hypothesis~\eqref{eq:H0_reformulated}, the same residual
decomposition as in~\eqref{eq:ZT_operator_null}, now with $K_2^\circ$ in
place of $K_2^c$, gives
\begin{equation}
Z_T^\circ
=
G_T^\circ
+
\sqrt T\,\widehat L_{21}^\circ(f_X-\widehat f).
\label{eq:circ_null_decomposition}
\end{equation}
Adding and subtracting the population operator yields
\begin{equation}
\begin{split}
Z_T^\circ
=
G_T^\circ
&+
\sqrt T\,L_{21}^\circ(f_X-\widehat f)
+
\sqrt T\,(\widehat L_{21}^\circ-L_{21}^\circ)
(f_X-\widehat f).
\end{split}
\label{eq:circ_population_decomposition}
\end{equation}
This decomposition makes the role of conditional centering explicit. For the
conditionally centered kernel used in the preceding sections,
Assumption~\ref{ass:cond_center} implies $L_{21}^c f=0$ for every
$f\in\clh_1$; see~\eqref{eq:L21orthogonality}. For a generic kernel $K_2^\circ$,
there is no reason for $L_{21}^\circ$ to vanish. What matters for the null
limit is instead the transformed first-stage error
\begin{equation}
\sqrt T\,L_{21}^\circ(f_X-\widehat f).
\label{eq:circ_transformed_first_stage_error}
\end{equation}
Indeed, the proof of Corollary~\ref{cor:aux_operator_conv_prob} applies in the
same way to $K_2^\circ$ and gives
\begin{equation}
\sqrt T\,
\|\widehat L_{21}^\circ-L_{21}^\circ\|_{\op}
=
O_\PP(1)
\label{eq:circ_L21_operator_rate}
\end{equation}
under boundedness of $K_1$ and $K_2^\circ$ and the geometric-ergodicity
condition~\eqref{eq:geom_ergodicity}. Moreover, under the null,
$\widehat f=\widetilde f$ by~\eqref{eq:tildeffhat}, so
Assumption~\ref{assump:first_stage_consistency} gives
$\|\widehat f-f_X\|_{\clh_1}=o_\PP(1)$. Hence the final term in
\eqref{eq:circ_population_decomposition} is $o_\PP(1)$. Conditional
centering is therefore one sufficient way to remove the first-stage
contribution, but it is stronger than necessary: the null limit can also be
obtained by characterizing the weak limit of
\eqref{eq:circ_transformed_first_stage_error} and incorporating that limit
into the asymptotic distribution.

The innovation term itself is unaffected by this change in viewpoint. Under
the same boundedness, innovation, and geometric-ergodicity conditions, the
argument of Proposition~\ref{prop:aux_FCLT}, with $K_2^c$ and $\clh_2$
replaced by $K_2^\circ$ and $\clh_2^\circ$, gives
\begin{equation}
G_T^\circ
\xrightarrow{\distr}
\mathcal G_2^\circ
\qquad
\text{in }\clh_2^\circ,
\label{eq:circ_GT_limit}
\end{equation}
where $\mathcal G_2^\circ$ is centered Gaussian with covariance operator
$\sigma^2L_{22}^\circ$. The remaining issue is therefore the joint behavior
of this innovation term and the transformed first-stage error.

\begin{theorem}
\label{thm:ZT_limit_weak_centering}
Suppose Assumptions~\ref{assump:bounded_kernel},
\ref{assump:gaussian}, and
\ref{assump:first_stage_consistency} hold, with the requirements on
$K_2^c$ and $\clh_2$ in Assumption~\ref{assump:bounded_kernel} imposed instead
on $K_2^\circ$ and $\clh_2^\circ$. Suppose also that the process
$\{(X_t,Y_t)\}_{t\in\NN_0}$ is geometrically ergodic in the sense of
\eqref{eq:geom_ergodicity}. The kernel $K_2^\circ$ need not satisfy
Assumption~\ref{ass:cond_center}. Under the null hypothesis
\eqref{eq:H0_reformulated}, assume that, in
$\clh_2^\circ\times\clh_2^\circ$,
\begin{equation}
\left(
G_T^\circ,
\sqrt T\,L_{21}^\circ(f_X-\widehat f)
\right)
\xrightarrow{\distr}
\left(
\mathcal G_2^\circ,
\mathcal R^\circ
\right),
\label{eq:first_stage_functional_clt}
\end{equation}
for some $\clh_2^\circ$-valued random element $\mathcal R^\circ$. Then,
\begin{equation}
Z_T^\circ
\xrightarrow{\distr}
\mathcal G_2^\circ+\mathcal R^\circ
\quad
\text{in }\clh_2^\circ,
\quad
\text{ and }
\quad
Q_T^\circ
\xrightarrow{\distr}
\|\mathcal G_2^\circ+\mathcal R^\circ\|_{\clh_2^\circ}^2.
\label{eq:circ_Q_limit}
\end{equation}
\end{theorem}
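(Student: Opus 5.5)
The plan is to start from the three-term null decomposition \eqref{eq:circ_population_decomposition}. Under the null hypothesis \eqref{eq:H0_reformulated} we have $g_{X,0}=0$, so $\widehat e=\mathcal K_1^*(f_X-\widehat f)+\boldsymbol{\varepsilon}$ by \eqref{eq:residuals_operatorform}, and hence
\begin{equation}
Z_T^\circ
=
G_T^\circ
+
\sqrt T\,L_{21}^\circ(f_X-\widehat f)
+
\sqrt T\,(\widehat L_{21}^\circ-L_{21}^\circ)(f_X-\widehat f).
\end{equation}
The first two terms are exactly the pair appearing in the assumed joint convergence \eqref{eq:first_stage_functional_clt}. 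The proof therefore reduces to two tasks: showing that the third term is $o_\PP(1)$ in $\clh_2^\circ$, and then applying Slutsky's lemma together with the continuous mapping theorem.

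For the remainder term, I would bound
\begin{equation}
\bigl\|\sqrt T\,(\widehat L_{21}^\circ-L_{21}^\circ)(f_X-\widehat f)\bigr\|_{\clh_2^\circ}
\le
\sqrt T\,\|\widehat L_{21}^\circ-L_{21}^\circ\|_{\op}\,\|f_X-\widehat f\|_{\clh_1}.
\end{equation}
The first factor is $O_\PP(1)$ by \eqref{eq:circ_L21_operator_rate}. This is the analogue of Corollary~\ref{cor:aux_operator_conv_prob} for $K_2^\circ$, and it needs only boundedness of $K_1,K_2^\circ$, separability, and the geometric ergodicity bound \eqref{eq:geom_ergodicity}, all of which are assumed. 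I would check briefly that its proof never uses Assumption~\ref{ass:cond_center}. It should not: it writes $\widehat L_{21}^\circ-L_{21}^\circ$ as a centered average of rank-one operators $K_2^\circ((X_{t-1},Y_{t-1}),\cdot)\otimes K_1(X_{t-1},\cdot)$ in $\cls_2$ and bounds its second moment using the summable covariances given by geometric ergodicity. For the second factor, under the null $\widehat f=\widetilde f$ by \eqref{eq:tildeffhat}, so Assumption~\ref{assump:first_stage_consistency} gives $\|f_X-\widehat f\|_{\clh_1}=o_\PP(1)$. The product is then $O_\PP(1)\,o_\PP(1)=o_\PP(1)$.

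For the limit, the map $(a,b)\mapsto a+b$ from $\clh_2^\circ\times\clh_2^\circ$ to $\clh_2^\circ$ is continuous. By the continuous mapping theorem, \eqref{eq:first_stage_functional_clt} gives $G_T^\circ+\sqrt T L_{21}^\circ(f_X-\widehat f)\xrightarrow{\distr}\mathcal G_2^\circ+\mathcal R^\circ$. Slutsky's lemma in the separable Hilbert space $\clh_2^\circ$ then absorbs the $o_\PP(1)$ remainder, which yields the first claim in \eqref{eq:circ_Q_limit}. Applying the continuous mapping theorem once more, with the continuous map $h\mapsto\|h\|_{\clh_2^\circ}^2$, gives the limit of $Q_T^\circ$.

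The only step with real content is the operator-rate bound \eqref{eq:circ_L21_operator_rate}, since everything else is soft. Even there, the main risk is confirming that the earlier Hilbert--Schmidt moment computation transfers verbatim once conditional centering is removed. Measurability and separability of $\clh_2^\circ$ are needed for the random elements and for Slutsky's lemma to make sense, and they are supplied by the modified Assumption~\ref{assump:bounded_kernel}.
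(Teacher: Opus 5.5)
Your proposal is correct and follows the paper's proof essentially step for step. It uses the same three-term decomposition \eqref{eq:circ_population_decomposition}, bounds the remainder by operator norm via \eqref{eq:circ_L21_operator_rate} and Assumption~\ref{assump:first_stage_consistency} (using $\widehat f=\widetilde f$ under the null), and then applies Slutsky's theorem and the continuous mapping theorem; your check that the argument behind \eqref{eq:circ_L21_operator_rate} needs only kernel boundedness and geometric ergodicity, not conditional centering, is the same justification the paper gives just before the theorem.
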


Theorem~\ref{thm:ZT_limit_weak_centering} shows precisely what is lost when
conditional centering is removed. For $K_2^c$, the population transformed
first-stage error is identically zero and the theorem reduces to the null
limit in Theorem~\ref{thm:ZT_limit}. For a generic $K_2^\circ$, the additional
term $\mathcal R^\circ$ need not vanish and must be included in the null
distribution.

Condition~\eqref{eq:first_stage_functional_clt} is a requirement on a
particular linear transformation of the first-stage estimation error, not a
full $\sqrt T$-rate weak limit for $\widehat f-f_X$ in $\clh_1$. This
distinction is important in light of \citet{mas2007weak}. In the Hilbertian
autoregressive model, a growing-rank pseudo-inverse plays a role analogous to
a vanishing regularization parameter: Theorem~3.2 of \citet{mas2007weak}
gives a CLT for prediction error, whereas Theorem~3.1 shows that the full
operator estimator cannot have a non-degenerate weak limit in operator norm.
Thus, when $\lambda_{T,f}\downarrow0$, a full Hilbert-space CLT for
$\widehat f-f_X$ may be substantially stronger than necessary. Linear or
smoother functionals can have tractable asymptotics even when the full
estimator does not. Results on linear-functional asymptotics for KRR, such as
\citet{tuo2024asymptotic}, provide a natural template for studying
$L_{21}^\circ(f_X-\widehat f)$, although the joint
$\clh_2^\circ$-valued limit in~\eqref{eq:first_stage_functional_clt} must still
be established for the dependent-data setting considered here.

The same decomposition can also be used with first-stage estimators other than
KRR, provided the corresponding consistency and joint-limit conditions hold.
A particularly transparent example is inference against departures from a
linear autoregressive baseline. Consider
\begin{equation}
X_t
=
A X_{t-1}
+
r_0(X_{t-1},Y_{t-1})
+
\varepsilon_{X,t},
\qquad
H_0^{\mathrm{lin}}:r_0=0.
\label{eq:linear_baseline_section6}
\end{equation}
Let $\widehat A$ be a first-stage estimator of $A$ and set
$\widehat f_{\mathrm{lin}}(x)=\widehat A x$. Under the null,
$f_X(x)=Ax$, and hence
\begin{equation}
\sqrt T\,
L_{21}^\circ(f_X-\widehat f_{\mathrm{lin}})
=
\sqrt T\,(A-\widehat A)L_{21}^\circ\iota,
\qquad
\iota(x)\doteq x,
\label{eq:linear_baseline_functional}
\end{equation}
whenever the displayed quantities are well defined. The additional first-stage
term is therefore a fixed linear transformation of the finite-dimensional
estimation error $\sqrt T(\widehat A-A)$. A joint CLT for this estimation error
and $G_T^\circ$ can be incorporated exactly as in
Theorem~\ref{thm:ZT_limit_weak_centering}. The resulting residual-kernel
statistic targets departures $r_0\neq0$ from the linear autoregressive
baseline, including nonlinear own-history or cross-variable effects that are
represented by $K_2^\circ$. Conditional centering is not required for this
formulation, but the effect of estimating $A$ generally remains in the null
distribution unless an additional orthogonality condition makes
$L_{21}^\circ\iota=0$.

A different simplification arises when factorization of the
stationary law allows conditional centering to be obtained
by marginal centering. Suppose
that, under the null hypothesis, the stationary law factorizes as
$\pi=\pi_X\otimes\pi_Y$. This occurs, for example, in the autonomous system
\begin{equation}
X_t=f_X(X_{t-1})+\varepsilon_{X,t},
\qquad
Y_t=f_Y(Y_{t-1})+\varepsilon_{Y,t},
\label{eq:autonomous_product_example}
\end{equation}
when the innovations are independent and the stationary initial distribution
factorizes as a product. Let the second-stage kernel have the product form
\begin{equation}
K_2^\circ((x,y),(x',y'))
=
\bar K_X(x,x')\bar K_Y(y,y'),
\label{eq:circ_product_kernel}
\end{equation}
where $\bar K_X$ and $\bar K_Y$ are bounded kernels and $\bar K_Y$ is
marginally centered,
\begin{equation}
\int \bar K_Y(y,y')\,\pi_Y(dy)=0
\qquad
\text{for every }y'\in\RR.
\label{eq:marginal_y_centering}
\end{equation}
Then $L_{21}^\circ h=0$ for every admissible function $h$ of $X$. Thus, under
this product-form stationary null, the full conditional-centering condition
\eqref{eq:cond_center} can be replaced by the simpler marginal-centering
condition~\eqref{eq:marginal_y_centering}. In finite samples, the analogous
construction is to center the Gram matrix associated with the $Y$-factor
before forming the product-kernel Gram matrix.

\begin{proposition}
\label{prop:autonomous_null}
Assume that the null stationary distribution satisfies
$\pi=\pi_X\otimes\pi_Y$, and suppose that $K_2^\circ$ has the product form
\eqref{eq:circ_product_kernel}, with $\bar K_Y$ satisfying
\eqref{eq:marginal_y_centering}. Then
\begin{equation}
L_{21}^\circ h=0
\qquad
\text{for every }h\in L^2(\pi_X)
\end{equation}
for which the expectation defining $L_{21}^\circ h$ is well defined. If, in
addition, Assumptions~\ref{assump:bounded_kernel},
\ref{assump:gaussian}, and \ref{assump:first_stage_consistency} hold, with
$K_2^c$ and $\clh_2$ in Assumption~\ref{assump:bounded_kernel} replaced by
$K_2^\circ$ and $\clh_2^\circ$, and the process is geometrically ergodic in
the sense of~\eqref{eq:geom_ergodicity}, then under the null hypothesis
\begin{equation}
Z_T^\circ
\xrightarrow{\distr}
\mathcal G_2^\circ
\qquad
\text{in }\clh_2^\circ,
\qquad
Q_T^\circ
\xrightarrow{\distr}
\|\mathcal G_2^\circ\|_{\clh_2^\circ}^2.
\label{eq:product_null_circ_limit}
\end{equation}
\end{proposition}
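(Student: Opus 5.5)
The plan has two parts: first an exact computation showing that $L_{21}^\circ$ annihilates functions of $X$, then a reduction of the limit statements to Theorem~\ref{thm:ZT_limit_weak_centering} with $\mathcal R^\circ=0$.

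\textbf{Step 1: $L_{21}^\circ h=0$.} Fix $h\in L^2(\pi_X)$ for which the Bochner expectation defining $L_{21}^\circ h$ exists. Since $\pi=\pi_X\otimes\pi_Y$ under the null, Fubini gives, for each $(x',y')\in\RR^2$,
\begin{equation}
(L_{21}^\circ h)(x',y')
=\int\!\!\int \bar K_X(x,x')\bar K_Y(y,y')h(x)\,\pi_Y(dy)\,\pi_X(dx)
=\int \bar K_X(x,x')h(x)\Bigl[\int \bar K_Y(y,y')\,\pi_Y(dy)\Bigr]\pi_X(dx)=0,
\end{equation}
by the marginal centering condition \eqref{eq:marginal_y_centering}. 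Fubini is justified because $\bar K_X$ and $\bar K_Y$ are bounded and $h\in L^2(\pi_X)\subset L^1(\pi_X)$. Evaluation at $(x',y')$ is a continuous linear functional on $\clh_2^\circ$, so it commutes with the Bochner integral. Hence every pointwise value of the $\clh_2^\circ$-element $L_{21}^\circ h$ vanishes, and $L_{21}^\circ h=0$ in $\clh_2^\circ$.

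\textbf{Step 2: limit of $Z_T^\circ$.} Under the null, $\widehat f=\widetilde f\in\clh_1$, and $\clh_1\subset L^2(\pi_X)$ by boundedness of $K_1$. So $f_X-\widehat f$ is a function of $X$ alone, and Step~1 gives $\sqrt T\,L_{21}^\circ(f_X-\widehat f)=0$ identically. The decomposition \eqref{eq:circ_population_decomposition} then reduces to
\begin{equation}
Z_T^\circ=G_T^\circ+\sqrt T\,(\widehat L_{21}^\circ-L_{21}^\circ)(f_X-\widehat f).
\end{equation}
For the remainder, use the operator rate \eqref{eq:circ_L21_operator_rate}, which the paper notes holds under boundedness of $K_1$, $K_2^\circ$ and geometric ergodicity \eqref{eq:geom_ergodicity}. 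Together with $\|\widehat f-f_X\|_{\clh_1}=o_\PP(1)$ from Assumption~\ref{assump:first_stage_consistency}, this gives
\begin{equation}
\bigl\|\sqrt T\,(\widehat L_{21}^\circ-L_{21}^\circ)(f_X-\widehat f)\bigr\|_{\clh_2^\circ}\le \sqrt T\,\|\widehat L_{21}^\circ-L_{21}^\circ\|_{\op}\,\|f_X-\widehat f\|_{\clh_1}=O_\PP(1)\,o_\PP(1)=o_\PP(1).
\end{equation}
By \eqref{eq:circ_GT_limit}, $G_T^\circ\xrightarrow{\distr}\mathcal G_2^\circ$. Slutsky's lemma in the separable Hilbert space $\clh_2^\circ$ then yields $Z_T^\circ\xrightarrow{\distr}\mathcal G_2^\circ$. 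Equivalently, the joint condition \eqref{eq:first_stage_functional_clt} holds trivially with $\mathcal R^\circ=0$, and Theorem~\ref{thm:ZT_limit_weak_centering} applies directly.

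\textbf{Step 3: limit of $Q_T^\circ$.} The map $z\mapsto\|z\|_{\clh_2^\circ}^2$ is continuous, so the continuous mapping theorem gives $Q_T^\circ\xrightarrow{\distr}\|\mathcal G_2^\circ\|_{\clh_2^\circ}^2$.

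\textbf{Main obstacle.} The only delicate point is transferring the auxiliary results to $K_2^\circ$: the martingale CLT for $G_T^\circ$ and the bound \eqref{eq:circ_L21_operator_rate}. Both were obtained for $K_2^c$ using only boundedness, separability and geometric ergodicity, not conditional centering, and the paper already asserts they carry over. I would rely on those statements, checking only that no step in their proofs invoked Assumption~\ref{ass:cond_center}.
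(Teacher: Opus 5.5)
Your proposal is correct and matches the paper's proof. The paper also factorizes $(L_{21}^\circ h)(x',y')$ under $\pi=\pi_X\otimes\pi_Y$, so that the marginal-centering condition annihilates the $Y$-factor, and then applies Theorem~\ref{thm:ZT_limit_weak_centering} with $\mathcal R^\circ=0$. Your inline bound on $\sqrt T\,(\widehat L_{21}^\circ-L_{21}^\circ)(f_X-\widehat f)$ just reproduces that theorem's proof, and your added justifications (Fubini, and evaluation functionals commuting with the Bochner integral) are sound.
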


Proposition~\ref{prop:autonomous_null} gives a setting in which
conditional centering can be achieved by marginal centering.
Indeed, when $\pi=\pi_X\otimes\pi_Y$ and $\bar K_Y$ is
marginally centered, the product kernel satisfies, for every
$(x',y')$,
\[
\mathbb E_\pi\!\left[
K_2^\circ((X_0,Y_0),(x',y')) \mid X_0
\right]
=
\bar K_X(X_0,x')
\int \bar K_Y(y,y')\,\pi_Y(dy)
=0
\quad\text{a.s.}
\]
Thus $K_2^\circ$ is conditionally centered under the factorized
stationary law, and
$L_{21}^\circ(f_X-\widehat f)=0$ identically.
Under the Granger non-causality null $g_{X,0}=0$ in $L^2(\pi)$
and the remaining assumptions of the proposition, the empirical
first-stage contribution is asymptotically negligible, so no
additional joint-limit assumption for the first-stage estimation
error is needed.
Here, factorization of the stationary law is an additional
assumption used to simplify centering under the Granger
non-causality null.

The discussion above concerns the null distribution of the residual moment.
Other inferential formulations are also possible. One option is to keep a
fixed amount of second-stage smoothing. Suppose $g_{X,0}\in\clh_2^\circ$ and,
for $\lambda>0$, define
\begin{equation}
g_\lambda^\circ
\doteq
(L_{22}^\circ+\lambda I_{\clh_2^\circ})^{-1}
L_{22}^\circ g_{X,0}.
\label{eq:circ_smoothed_target}
\end{equation}
When $\lambda\to\lambda_0>0$, the inverse remains bounded, so inference for
$g_\lambda^\circ$ avoids the instability associated with sending the
regularization parameter to zero. Under $g_{X,0}=0$, the smoothing bias is
zero for every $\lambda$. Under alternatives, however, the target is the
smoothed component $g_\lambda^\circ$ rather than $g_{X,0}$ itself, and
recovering the latter as $\lambda\downarrow0$ requires separate approximation
or source conditions. This modification changes the inferential target; it
does not by itself remove the first-stage contribution in
\eqref{eq:circ_population_decomposition}.

A second option is to work directly with the scalar quadratic form. Under the
null, the innovation-only component satisfies
\begin{equation}
\|G_T^\circ\|_{\clh_2^\circ}^2
=
\frac1T
\sum_{s=1}^T\sum_{t=1}^T
\varepsilon_{X,s}\varepsilon_{X,t}
K_2^\circ\bigl(
(X_{s-1},Y_{s-1}),(X_{t-1},Y_{t-1})
\bigr).
\label{eq:circ_direct_quadratic}
\end{equation}
Thus, under alternative regularity regimes, one may seek a direct scalar limit
theorem for this quadratic form rather than first proving an
$\clh_2^\circ$-valued weak limit. Such an argument changes the route to the
null distribution, not the role of the first-stage error: the contribution in
\eqref{eq:circ_transformed_first_stage_error} must still either be shown to
vanish or be incorporated into the limiting quadratic statistic.

\section{Simulation Studies}\label{sec:simulation}

In this section, the finite-sample performance of the proposed nonlinear
Granger causality test is investigated through several simulation studies. We
examine the finite-sample distribution of the test statistic and the
Satterthwaite approximation in Section~\ref{sec:simulation_QTdist}, its
sensitivity to the regularization parameters $\lambda_{T,f}$ and
$\lambda_{T,K}$ in Section~\ref{sec:sensitivity}, and its robustness under
heavy-tailed innovations in Section~\ref{subsection:heavytail}. A
finite-sample comparison with the nonparametric causality test of
\citet{nishiyama2011consistent} is provided in
Section~\ref{se:comparison_nishiyama}. These simulation studies, as well as
the real-data applications, use the empirical-centering procedure
introduced in Section~\ref{sec:kernelcentering}.

Throughout all simulations, the Gaussian kernel
\begin{equation}
K(x,y)
=
\exp\left(
-\frac{\|x-y\|^2}{\tau^2}
\right),
\end{equation}
is used for the first-stage kernel $K_1$ and as the uncentered second-stage kernel $\widetilde K_2$; the second-stage Gram matrix is then centered using \eqref{eq:K2hatc_residualized_def}. The default bandwidth is $\tau=\sqrt{2}$. As discussed after Theorem~\ref{thm:ZT_alt}, the proposed procedure relies on the $Y$-dependent component $g_{X,0}$ being well approximated within $\clh_2$. The Gaussian kernel induces a very rich reproducing kernel Hilbert space and is therefore well suited for detecting general nonlinear Granger-causal relationships when no prior structural information is available.

\begin{remark}
The choice of kernel bandwidth becomes increasingly important as the dimension of the lagged state vector grows. In higher-dimensional settings, pairwise Euclidean distances tend to increase with the number of variables and lags, which may substantially alter the effective behavior of the Gaussian kernel. Consequently, larger bandwidths may be preferable when many variables or large lag orders are considered. In practice, when no prior bandwidth information is available, a common heuristic is to choose the bandwidth based on the median pairwise distance between observations. In the low-dimensional settings considered here, especially when the component
series have variances close to 1 or are standardized separately, the default
choice $\tau=\sqrt{2}$ performs well empirically.
\end{remark}

Unless otherwise specified, the regularization parameters are chosen as
\begin{equation}
\lambda_{T,f}
=
\frac{\log(T)}{\sqrt{T}},
\qquad
\lambda_{T,K}
=
\frac{1}{T\log(T)}.
\end{equation}
The choice of $\lambda_{T,f}$ is motivated by the first-stage consistency
condition in Assumption~\ref{assump:first_stage_consistency} and the
asymptotic theory of Theorem~\ref{thm:ZT_alt}. In particular,
$\lambda_{T,f}\to0$ and
$\lambda_{T,f}\sqrt{T}=\log(T)\to\infty$, so the explicit regularization
condition in Theorem~\ref{thm:ZT_alt} is satisfied. First-stage consistency
is imposed separately through Assumption~\ref{assump:first_stage_consistency};
Theorem~3.1 of \citet{duker2025kernel} provides sufficient conditions under
which this assumption can be verified.

The choice of $\lambda_{T,K}$ is a practical tuning rule for the empirical-centering procedure. Since the matrix inverse in
\eqref{eq:SK_residualizer_def} depends on $T\lambda_{T,K}$, the choice $\lambda_{T,K}=1/(T\log(T))$ gives
$T\lambda_{T,K}=1/\log(T)\to0$. Thus, the centering fit becomes progressively
less regularized as $T$ increases, while retaining a small amount of
finite-sample stabilization. This choice is not covered by the asymptotic
tuning conditions in Proposition~\ref{prop:full_tuning_choices}; its
finite-sample performance is examined in Section~\ref{sec:sensitivity}.

Throughout the simulations, the bivariate lag-1 nonlinear VAR model~\eqref{eq:nonlinearVAR} is considered. The model is specified in Table \ref{tab:gx}. 
\begin{table}[h]
\centering
\begin{tabular}{ll}
\toprule
Model & $g_X(x,y)$ \\
\midrule
Null & $\sin(x)$ \\
Alternative 1 & $\sin(x)+\cos(y)$ \\
Alternative 2 & $\sin(x)\cos(y)$ \\
\bottomrule
\end{tabular}
\caption{Model specifications. In all cases,
$g_Y(x,y)=\cos(x)+\tanh(y)$.}
\label{tab:gx}
\end{table}
Unless otherwise mentioned, the innovations are generated from independent standard normal distributions. {The noise variance $\sigma^2$ is estimated from the residuals of a first-stage kernel ridge regression fit based only on $X_{t-1}$, using the fixed regularization parameter $\lambda=1/T$. This choice is adopted throughout the simulations to isolate the effect of the first-stage tuning parameter $\lambda_{T,f}$ on the proposed test statistic and to avoid confounding it with the quality of the variance estimation. Since variance estimation is not the primary focus of the present work, other consistent estimators of $\sigma^2$ could also be used.}

Unless otherwise specified, rejection rates are reported as the proportion of
1000 Monte Carlo repetitions in which the null hypothesis of no Granger
causality from $Y$ to $X$ is rejected. Under the null model, the rejection rate should be close to $0.05$, while under the alternative models it should be close to $1$.

\subsection{Finite-Sample Distribution of \texorpdfstring{$Q_T$}{QT} and Satterthwaite Approximation}\label{sec:simulation_QTdist}

In this subsection, the finite-sample distribution of the test statistic $Q_T$, together with the corresponding Satterthwaite approximation introduced in Section~\ref{sec:criticalvalue}, is investigated. The null model and the two nonlinear alternatives introduced previously are considered, and the empirical distribution of the test statistic is examined under different sample sizes.

For each setting, we plot histograms of $Q_T$ based on 1000 Monte Carlo repetitions. The black curve shown in each panel corresponds to the Satterthwaite approximation
$
c\chi_\nu^2,
$
where $c$ and $\nu$ are obtained by averaging the Satterthwaite parameters over all repetitions. Hence, the curve represents the average finite-sample approximation of the distribution of the test statistic. The averaged approximation is used only for visualization, while the rejection decisions are computed using the Satterthwaite approximation obtained separately within each repetition.

\begin{figure}[!ht]
    \centering
    \IfFileExists{Figure_1.png}{
    \IfFileExists{Figure_1.png}
    {\includegraphics[width=0.8\textwidth]{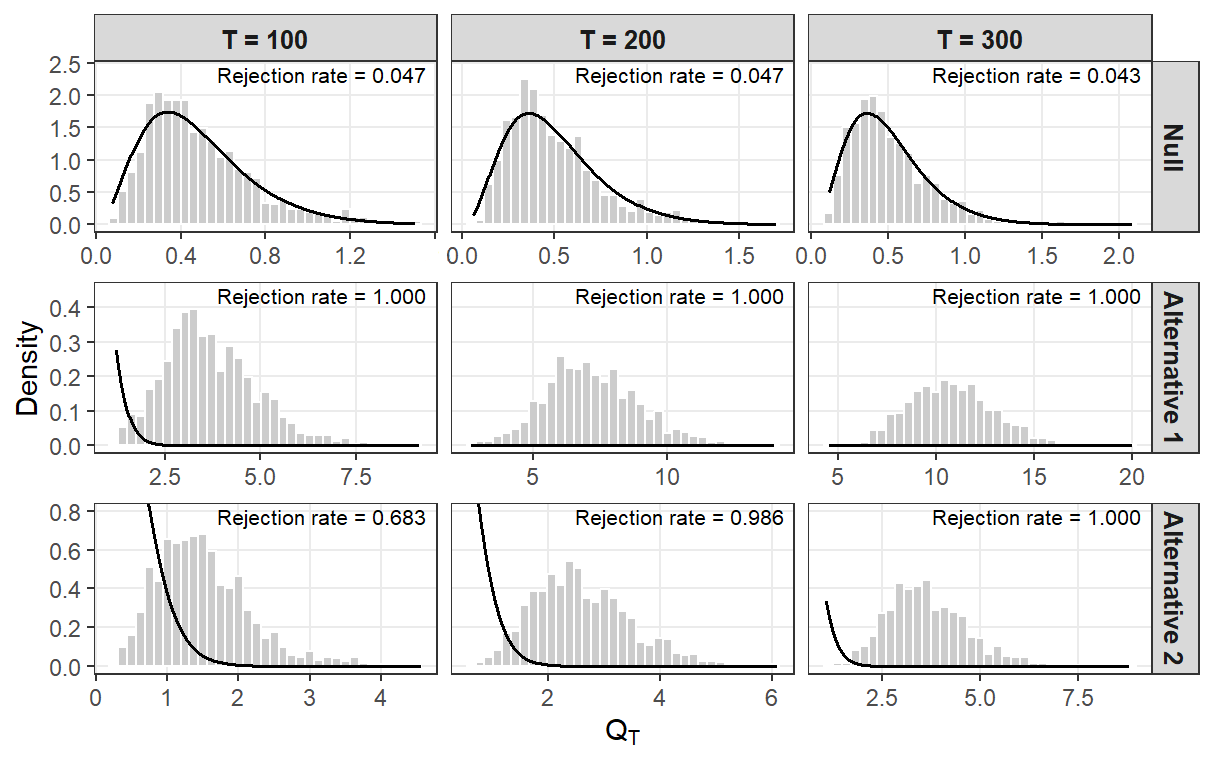}}
    {\fbox{\parbox[c][0.24\textheight][c]{0.78\textwidth}{\centering
    Simulation figure unavailable in the supplied source files.}}}
    }{
    \fbox{\parbox[c][0.28\textheight][c]{0.8\textwidth}{\centering
    Simulation figure omitted because \texttt{arXiv/plots/dist\_QT.png} was not found.}}
    }
    \caption{Empirical distributions of the proposed test statistic $Q_T$ under the null model and two nonlinear alternatives for different sample sizes. The black curves correspond to the averaged Satterthwaite approximations $c\chi_\nu^2$. Rejection rates are computed at significance level $\alpha=0.05$.}
\label{fig:simulation_distribution}
\end{figure}

Figure~\ref{fig:simulation_distribution} suggests that under the null model, the empirical distribution of the proposed statistic aligns well with the Satterthwaite approximation even for relatively small sample sizes. Moreover, the empirical rejection rates remain close to the nominal significance level $0.05$, indicating good finite-sample calibration of the procedure. Under both alternative models, the test statistic shifts increasingly to the
right of the corresponding Satterthwaite reference distribution as the sample
size grows, which is consistent with the divergence behavior established in
the theoretical analysis. At the same time, the Satterthwaite reference
distribution remains concentrated at substantially smaller values than the
empirical statistic. It therefore continues to provide a useful reference for
critical-value selection and $p$-value computation in the presence of
nonlinear Granger-causal effects.

\subsection{Sensitivity to the Regularization Parameters \texorpdfstring{$\lambda_{T,f}$}{lambda Tf} and \texorpdfstring{$\lambda_{T,K}$}{lambda TK}}\label{sec:sensitivity}

In this subsection, the sensitivity of our test to the regularization parameters $\lambda_{T,f}$ and $\lambda_{T,K}$ is investigated. The sample size is fixed at $T=100$, and only Null and Alternative 2 of Table \ref{tab:gx} are considered. The parameter grids are chosen as
\begin{equation}
\lambda_{T,f}
\in
\left\{
\frac{\log(T)}{\sqrt{T}},
\frac{1}{\sqrt{T}},
\frac{1}{T},
\frac{1}{T^2}
\right\},
\qquad
\lambda_{T,K}
\in
\left\{
\frac{1}{\sqrt{T}},
\frac{1}{T},
\frac{1}{T\log(T)},
\frac{1}{T^2}
\right\}.
\end{equation}

\begin{table}[!ht]
\centering
\begin{minipage}{0.48\textwidth}
\centering
\caption{Rejection rates under Null.}
\label{tab:lambda_null}
\begin{tabular}{c|cccc}
\hline
$\lambda_{T,f}\backslash \lambda_{T,K}$
& $1/\sqrt{T}$ & $1/T$ & $1/(T\log T)$ & $1/T^2$ \\
\hline
$\log(T)/\sqrt{T}$ & 0.136 & 0.046 & 0.036 & 0.049 \\
$1/\sqrt{T}$       & 0.056 & 0.052 & 0.041 & 0.063 \\
$1/T$              & 0.033 & 0.040 & 0.051 & 0.044 \\
$1/T^2$            & 0.030 & 0.035 & 0.029 & 0.042 \\
\hline
\end{tabular}
\end{minipage}
\hfill
\begin{minipage}{0.48\textwidth}
\centering
\caption{Rejection rates under Alternative 2.}
\label{tab:lambda_alt2}
\begin{tabular}{c|cccc}
\hline
$\lambda_{T,f}\backslash \lambda_{T,K}$
& $1/\sqrt{T}$ & $1/T$ & $1/(T\log T)$ & $1/T^2$ \\
\hline
$\log(T)/\sqrt{T}$ & 0.728 & 0.699 & 0.697 & 0.684 \\
$1/\sqrt{T}$       & 0.696 & 0.706 & 0.676 & 0.677 \\
$1/T$              & 0.642 & 0.694 & 0.661 & 0.664 \\
$1/T^2$            & 0.574 & 0.621 & 0.647 & 0.668 \\
\hline
\end{tabular}
\end{minipage}
\end{table}

Tables~\ref{tab:lambda_null} and~\ref{tab:lambda_alt2} report the empirical rejection rates under Null and Alternative 2, respectively. Overall, the test demonstrates relatively stable performance across a broad range of regularization choices, provided that the regularization parameters are not chosen excessively large simultaneously. Under Null, the empirical rejection rates remain reasonably close to the nominal significance level, while under Alternative 2 the procedure consistently achieves substantial power. The results also suggest that the effect of $\lambda_{T,f}$ is more pronounced than that of $\lambda_{T,K}$. In particular, overly small values of $\lambda_{T,f}$ tend to make the procedure slightly undersized and slightly reduce power. A more extensive sensitivity analysis, including the effects of bandwidth selection, signal strength, and noise level, is provided in Appendix \ref{sec:appendix_bandwidth}. The implications of these results for the choice of kernel, bandwidth, and regularization parameters are discussed in Section~\ref{sec:tuning_discussion}.

\subsection{Robustness Under Heavy-Tailed Innovations}\label{subsection:heavytail}
In this subsection, the robustness of the test to deviations from Gaussianity is investigated. Only Null and Alternative 2 of Table \ref{tab:gx} are considered. To assess robustness under different tail behaviors, the innovation terms are generated from a symmetrized Weibull distribution. Specifically, each innovation is obtained by drawing a Weibull random variable and randomly assigning it a positive or negative sign with equal probability, resulting in a symmetric distribution centered at zero. The scale parameter is chosen so that the innovations have unit variance. Three values of the Weibull shape parameter are considered. A shape parameter of $2$ corresponds to sub-Gaussian noise, a shape parameter of $1$ produces sub-exponential tails, and a shape parameter of $0.5$ yields heavy-tailed noise.

\begin{table}[!ht]
\centering
\begin{minipage}{0.48\textwidth}
\centering
\caption{Rejection rates under Null.}
\label{tab:tail_null}
\begin{tabular}{c|ccc}
\hline
Shape & $T=100$ & $T=200$ & $T=300$ \\
\hline
0.5 & 0.058 & 0.063 & 0.050 \\
1   & 0.046 & 0.046 & 0.049 \\
2   & 0.046 & 0.049 & 0.049 \\
\hline
\end{tabular}
\end{minipage}
\hfill
\begin{minipage}{0.48\textwidth}
\centering
\caption{Rejection rates under Alternative 2.}
\label{tab:tail_alt2}
\begin{tabular}{c|ccc}
\hline
Shape & $T=100$ & $T=200$ & $T=300$ \\
\hline
0.5 & 0.117 & 0.172 & 0.251 \\
1   & 0.450 & 0.857 & 0.988 \\
2   & 0.857 & 0.999 & 1.000 \\
\hline
\end{tabular}
\end{minipage}
\end{table}

Tables~\ref{tab:tail_null} and~\ref{tab:tail_alt2} report the empirical rejection rates under Null and Alternative 2, respectively. Under Null, the empirical rejection rates remain close to the nominal significance level across all tail settings, indicating that the test maintains good finite-sample calibration even under heavy-tailed innovations. Under Alternative 2, the procedure continues to achieve substantial power under sub-Gaussian and sub-exponential noise, with power approaching one as the sample size increases. As expected, the presence of heavy-tailed noise with shape parameter $0.5$ substantially reduces power due to the increased variability of the innovations. Nevertheless, the rejection rates still increase with the sample size, demonstrating a degree of robustness of the proposed procedure under heavy-tailed innovations.

\subsection{Comparison}
\label{se:comparison_nishiyama}

This subsection compares the RKHS test with the nonparametric causality test of \citet{nishiyama2011consistent}, which we refer to as ``NHKJ-test'', alluding to the authors' last names. Their procedure
formulates nonlinear non-causality in conditional mean as a collection of
unconditional moment restrictions. Specifically, after estimating the null regression of $X_t$ on $X_{t-1}$, the residualized component is tested
against a user-specified collection of basis functions of $(X_{t-1},Y_{t-1})$. The resulting test statistic is a weighted sum of squared standardized sample moments,
\begin{equation}
    \widehat S_T = \sum_{i=1}^{k_T} w_i \widehat a_i^2 .
\end{equation}
The basis functions and the weights determine the directions of departure from the null to which the test is most sensitive. In particular, basis
functions appearing earlier in the ordering receive larger weights when the decreasing sequence $w_i=0.9^i$ is used. Thus, prior knowledge about
the likely alternative can be incorporated by placing the corresponding basis directions earlier in the expansion.

Following the simulation design of \citet{nishiyama2011consistent}, the
competitor is implemented using eight trigonometric basis functions and weights $w_i=0.9^i$, together with their reported asymptotic 5\% critical value. The basis functions are ordered as
\begin{equation}
\cos(y),\quad
\sin(y),\quad
\sin(y)\sin(x),\quad
\sin(y)\cos(x),\quad
\cos(y)\sin(x),\quad
\cos(y)\cos(x),\quad
\sin(2y),\quad
\cos(2y).
\end{equation}
This ordering is deliberately favorable to the competitor under Alternative 1, since the Granger-causal component $\cos(y)$ is exactly the first
basis direction and receives the largest weight. Alternative 2 is also
partly favorable to the competitor, since the interaction
$\sin(x)\cos(y)$ appears among the chosen basis functions, although with a
smaller weight. To further examine the effect of using a finite,
user-specified basis, an additional interaction alternative is considered
only in this subsection:
\begin{equation}
    g_X(x,y)=\sin(x)+0.3xy,
    \qquad
    g_Y(x,y)=\cos(x)+\tanh(y).
\end{equation}
This interaction is not directly represented by the finite trigonometric
basis above.

The bandwidth constant in the NHKJ-test is another important tuning parameter. Their simulation study uses a bandwidth of the form
$h=C T^{-0.3}$ and states that $C$ is approximately 7, with a different
value used for $T=100$, but the exact value is not reported. In finite samples, the empirical size of their test is sensitive to the choice of bandwidth. To make the comparison size-controlled and favorable to
the competitor, an oracle calibration is used: for each sample size, $C$
is selected from a grid under the null model so that the empirical
rejection rate is closest to the nominal level $0.05$. The selected values
are $C=7.00$ for $T=100$ and $C=6.25$ for $T=200$. The same selected value
of $C$ is then used for all alternatives at the corresponding sample size.

\begin{table}[h]
\centering
\begin{tabular}{lcccc}
\toprule
& \multicolumn{2}{c}{$T=100$} & \multicolumn{2}{c}{$T=200$} \\
\cmidrule(lr){2-3}\cmidrule(lr){4-5}
Model & NHKJ-test & RKHS & NHKJ-test & RKHS \\
\midrule
Null                    & 0.054 & 0.049 & 0.056 & 0.049 \\
Alternative 1           & 0.909 & 0.997 & 1.000 & 1.000 \\
Alternative 2           & 0.185 & 0.696 & 0.826 & 0.982 \\
Interaction alternative & 0.287 & 0.760 & 0.606 & 0.994 \\
\bottomrule
\end{tabular}
\caption{Rejection rates for the RKHS test and the test of
\citet{nishiyama2011consistent}. The NHKJ-test is reported using
oracle size-calibrated bandwidth constants, with $C=7.00$ for $T=100$
and $C=6.25$ for $T=200$. The interaction alternative corresponds to
$g_X(x,y)=\sin(x)+0.3xy$.}
\label{tab:nishiyama_comparison}
\end{table}

After bandwidth calibration, the two methods have comparable empirical size. It is worth noting that NHKJ's method is reported using oracle bandwidth calibration, whereas our method is implemented using the default tuning parameters described at the beginning of this section, without any additional calibration. Under Alternative 1, the Granger-causal component is exactly the leading basis function used by NHKJ's method and therefore represents a favorable setting for that procedure. Nevertheless, our method still achieves slightly higher power when $T=100$ and identical power when $T=200$. Under Alternative 2, the Granger-causal component is still represented by the chosen basis functions but receives a smaller weight, leading to a noticeable loss of power for NHKJ's method, whereas our method continues to perform well. Under the interaction alternative, the Granger-causal component is not directly represented by the finite trigonometric basis used by NHKJ's method, and the power difference becomes even more pronounced. Overall, these results demonstrate that our method provides consistently high power across a range of nonlinear alternatives without requiring the nonlinear Granger-causal component to be specified in advance through a finite collection of basis functions.

\section{Real Data Analysis}
\label{se:real_data}

This section illustrates the nonlinear Granger causality test using two real datasets. The examples represent distinct application domains and are used to demonstrate the ability of the test to detect nonlinear predictive relationships in real-world time series.

\subsection{Cardiorespiratory Data from a Sleep Apnea Subject}

The benchmark cardiorespiratory dataset consists of two sequential recordings, \texttt{b1} and \texttt{b2}, each containing 17,000 observations of heart rate, chest volume (respiration force), and blood oxygen concentration collected at 2 Hz from a subject with sleep apnea \citep{dr1994multichannel}. The data are available through PhysioNet \citep{goldberger2000physiobank}. The present analysis uses only the heart rate (HR) and breathing (B) signals from \texttt{b1}. The dataset has been widely used as a benchmark for Granger causality analysis in physiological time series.

The physiological relationship between B and HR is commonly associated with respiratory sinus arrhythmia (RSA), a phenomenon in which heart rate tends to accelerate during inhalation and decelerate during exhalation due to autonomic regulation \citep{saul1989transfer}. From a predictive perspective, this mechanism suggests that past breathing patterns may contain information about future HR dynamics, making cardiorespiratory recordings a natural setting for studying Granger-causal relationships. Furthermore, previous studies have suggested that cardiorespiratory interactions exhibit substantial nonlinear behavior, motivating the use of nonlinear models for their analysis \citep{jo2007nonlinear}.

Despite its widespread use, previous studies have reported differing conclusions regarding the dominant direction of interaction in this dataset. A summary of these findings is provided by \citet{bahrami2023investigating}. As discussed therein, some analyses were based on selected portions of the record, whereas others used relatively long segments. Since the recording spans several hours and exhibits slow nonstationary dynamics, inferred directional relationships may depend on the segment under investigation. To address this issue, \citet{bahrami2023investigating} analyzed the entire record using a rolling-window approach and concluded that (B $\rightarrow$ HR) is the dominant direction of interaction.

Following \citet{bahrami2023investigating}, the \texttt{b1} record is divided into consecutive non-overlapping windows of length 200, corresponding to 1 minute and 40 seconds of observations. Within each window, the two series are standardized separately, and the proposed nonlinear Granger causality test is applied in both directions, (B $\rightarrow$ HR) and (HR $\rightarrow$ B), using the default tuning-parameter choices described in Section~\ref{sec:simulation}.

\begin{figure}[!ht]
\centering
\IfFileExists{Figure_2.png}{
\IfFileExists{Figure_2.png}
{\includegraphics[width=0.8\textwidth]{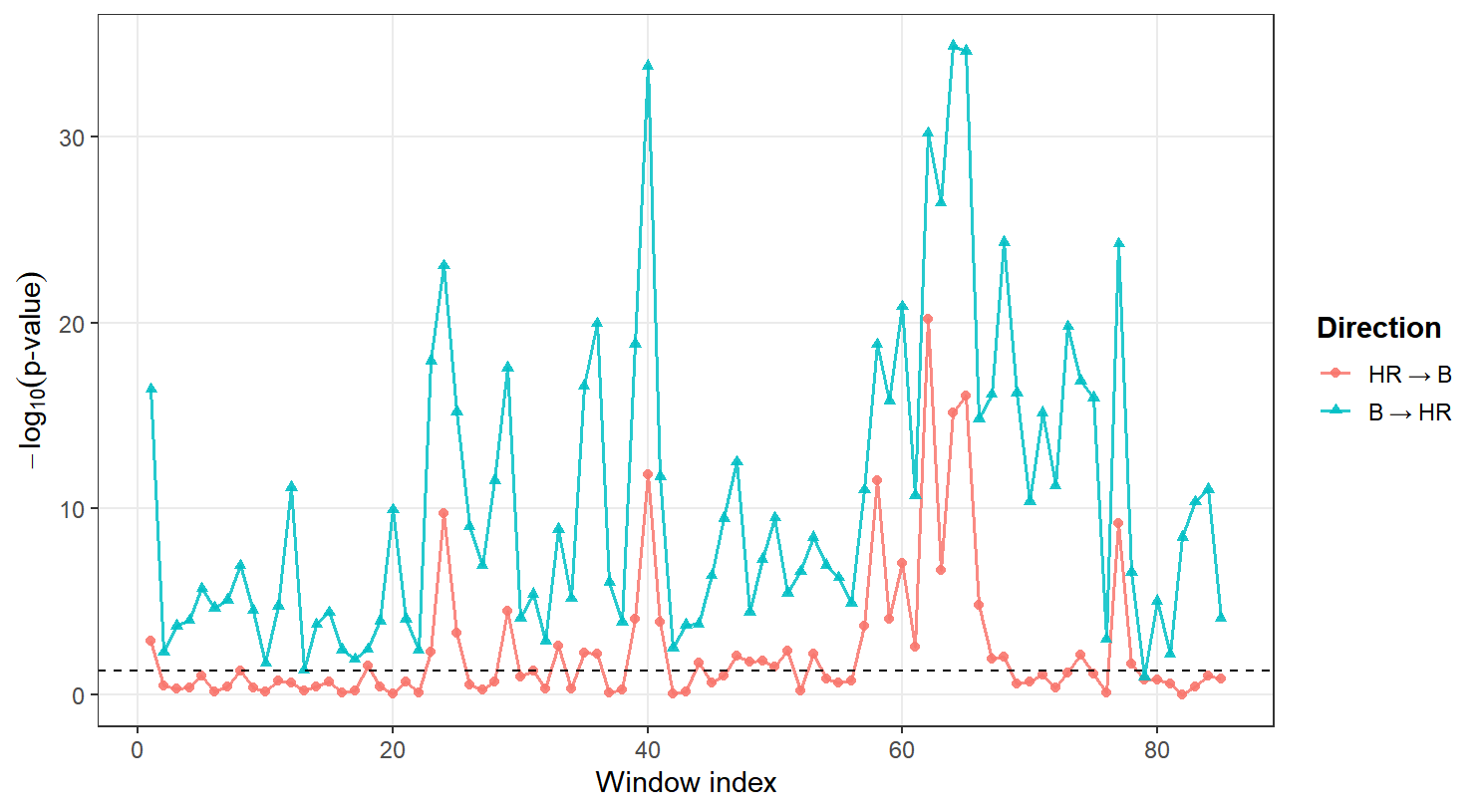}}
{\fbox{\parbox[c][0.24\textheight][c]{0.78\textwidth}{\centering
Cardiorespiratory figure unavailable in the supplied source files.}}}
}{
\fbox{\parbox[c][0.28\textheight][c]{0.8\textwidth}{\centering
Real-data figure omitted because \texttt{Figure_2.png} was not found.}}
}
\caption{Rolling-window nonlinear Granger causality analysis of the \texttt{b1} cardiorespiratory dataset. The dashed horizontal line corresponds to the significance threshold ($\alpha = 0.05$).}
\label{fig}
\end{figure}

Figure~\ref{fig} shows the resulting $-\log_{10}(p\text{-value})$ values across all windows. Evidence of Granger causality from B to HR is generally stronger and more persistent than evidence of Granger causality from HR to B, with the latter reaching significance only intermittently. This overall pattern is consistent with the physiological interpretation of RSA and agrees with the findings of \citet{bahrami2023investigating}.

The strength of the detected Granger-causal relationships nevertheless varies considerably throughout the recording, indicating that the evidence for directional Granger causality is not uniform over time. Such temporal variation is plausible for a long recording from a sleep apnea subject, where changes in physiological state and sleep-apnea-related mechanisms may be associated with changes in cardiorespiratory Granger causality. It may also help explain differing conclusions reported in the literature. By examining the full recording through rolling windows rather than relying on a single segment, the analysis provides a more localized view of temporal variation in Granger causality while retaining an overall stronger pattern from B to HR.

\subsection{Solar and Geomagnetic Activity}

The data are obtained from the dataset compiled and maintained by the GFZ Helmholtz Centre for Geosciences \citep{Matzka2021Kp}. The dataset contains several geomagnetic and solar activity indices, including the planetary geomagnetic indices Kp, ap, and Ap, as well as the sunspot number (SN). In this study, the daily Ap index and daily sunspot number are considered over the period from 1932-01-01 to 2019-12-31, corresponding to the end of the last complete solar cycle available at the time of analysis.

The sunspot number is a standard measure of solar magnetic activity, while the Ap index provides a linear-scale measure of daily global geomagnetic activity derived from the planetary Kp index \citep{matzka2021geomagnetic}. Solar activity is a primary driver of geomagnetic disturbances through interactions between the solar wind and the Earth's magnetosphere. Consequently, variations in solar activity, as reflected by the sunspot number, are often associated with subsequent changes in geomagnetic activity measured by the Ap index. From a predictive perspective, this relationship suggests that past values of SN may contain information useful for forecasting future values of Ap, making this pair of series a natural setting for investigating Granger-causal relationships. Both series also exhibit pronounced long-term variability associated with the approximately 11-year solar cycle. To reduce the influence of long-term trend and seasonal effects, the daily observations are first aggregated into monthly averages and then decomposed using STL decomposition \citep{cleveland1990stl}. The remainder components are subsequently used for Granger causality analysis.

After preprocessing, the final dataset contains 1056 monthly observations, comprising eight consecutive 11-year segments. Both the nonlinear Granger causality test and the classical linear Granger causality test are applied to the entire dataset as well as separately to each 11-year segment. The nonlinear Granger causality test is implemented using the same default settings as those used in the simulation studies described in Section~\ref{sec:simulation}. In order to use the default bandwidth for the Gaussian kernel, the monthly remainder series are standardized separately prior to applying both the nonlinear and linear Granger causality procedures.

\begin{table}[!ht]
\centering
\caption{$p$-values for testing whether SN Granger-causes Ap using the proposed nonlinear method and the classical linear Granger causality test.}
\label{tab:realdata_gc}
\begin{tabular}{c|c|c|cc}
\hline
Segment & Period & $T$ & Nonlinear & Linear \\
\hline
0 & 1932--2019 & 1056 & 0.0001 & 0.0196 \\
1 & 1932--1942 & 132 & 0.0003 & 0.0547 \\
2 & 1943--1953 & 132 & 0.0757 & 0.6983 \\
3 & 1954--1964 & 132 & 0.0111 & 0.0246 \\
4 & 1965--1975 & 132 & 0.0320 & 0.0549 \\
5 & 1976--1986 & 132 & 0.5809 & 0.1799 \\
6 & 1987--1997 & 132 & 0.1046 & 0.1560 \\
7 & 1998--2008 & 132 & 0.0218 & 0.7295 \\
8 & 2009--2019 & 132 & 0.4698 & 0.5459 \\
\hline
\end{tabular}
\end{table}

Table~\ref{tab:realdata_gc} reports the resulting $p$-values for testing whether SN Granger-causes Ap. For the full dataset, both methods detect significant predictive relationships from SN to Ap, with the proposed nonlinear method producing a substantially smaller $p$-value. When the analysis is performed separately over individual 11-year segments, the nonlinear method detects significant relationships in several periods where the linear method does not. In particular, the nonlinear test rejects the null in Segments 1, 4, and 7, whereas the corresponding linear tests do not at the $0.05$ level.

These results suggest that the predictive relationship between solar activity and geomagnetic activity may contain nonlinear components that are not fully captured by linear autoregressive models. At the same time, the variability across 11-year segments also indicates that the strength and form of the relationship may evolve over time, which is consistent with the known phase-dependent behavior of geomagnetic activity throughout the solar cycle.

\section{Conclusion}

This paper considers nonlinear Granger causality within a nonlinear vector
autoregressive framework. By introducing an orthogonal decomposition of the
conditional mean, the nonlinear Granger-causal effect is isolated as the
component that cannot be explained by the own past of the response variable,
thereby reducing nonlinear Granger causality to the problem of testing whether
an unknown nonlinear function is identically zero. Based on this formulation,
an RKHS-based testing procedure was developed by first removing the own-history
component through kernel ridge regression and then embedding the resulting
residuals into an RKHS generated by a conditionally centered kernel. This
construction leads naturally to a $\clh_2$-valued stochastic process together
with a computationally simple quadratic test statistic based on its RKHS norm.

For the oracle construction based on a population conditionally centered
kernel, we established the asymptotic behavior of both the
$\clh_2$-valued residual embedding and the corresponding quadratic test
statistic under the null hypothesis and fixed alternatives. Since the
population conditional-centering map is generally unknown, we then developed
a feasible procedure that estimates this map from the data and uses the
resulting centered feature representations in the test statistic. The main theoretical result for the feasible procedure is that empirical
centering does not alter the asymptotic null distribution. The proof relies
on decomposing the discrepancy from the oracle construction into several
remainder terms and controlling the contributions from first-stage KRR
estimation, empirical conditional centering, and the innovations. A key step
is an exact regularized-operator factorization, which isolates the difficult
innovation-weighted centering contribution and allows it to be controlled
through operator bounds and martingale arguments. Consequently, the feasible
residual embedding retains the oracle null limit, and the same weighted
chi-square limit applies to the feasible test statistic. For practical calibration, the relevant population covariance spectrum is
approximated using the spectrum of the empirical centered Gram matrix, which
provides the weights for computing critical values and $p$-values.

Beyond the proposed Granger causality test, the theoretical developments are
of independent interest. In particular, a central limit theorem was established for a class of
RKHS-valued stochastic processes arising from nonlinear vector
autoregressive models, with the $\clh_2$-valued residual embedding considered
in the testing procedure as a special case. The
convergence of empirical covariance operators was also established, providing
an infinite-dimensional analogue of the convergence of sample covariance
matrices in classical linear vector autoregressive analysis. Together, these
results furnish RKHS counterparts of the fundamental asymptotic tools
underlying inference for linear vector autoregressive models and provide a
theoretical foundation for RKHS-based asymptotic inference in nonlinear vector
autoregressive systems. More broadly, the orthogonal decomposition framework
developed in this paper offers a principled approach for separating own-history
and cross-variable components and may prove useful for other kernel-based
inference problems involving nonlinear time series.

\appendix

\section{Notation Tables}
\label{app:notation_tables}

This appendix collects the main notation used throughout the paper. The tables are
intended as a guide to recurring objects and do not include every auxiliary
quantity introduced only locally inside proofs.

\begingroup
\small
\renewcommand{\arraystretch}{1.15}

\begin{longtable}{@{}p{0.1\textwidth}p{0.75\textwidth}@{}}
\caption{Operator notation used in the population and asymptotic analysis.}
\label{tab:notation_operators}\\
\toprule
Symbol & Description \\
\midrule
\endfirsthead
\toprule
Symbol & Description \\
\midrule
\endhead
\bottomrule
\endfoot
$\clh_1,K_1$ & RKHS and kernel used for the first-stage regression on $X_{t-1}$. \\
$\clh_2,K_2^c$ & Conditionally centered RKHS and kernel used for detecting the $Y$-dependent component. \\
$\calK_1,\calK_2^c$ & Evaluation operators associated with $K_1$ and $K_2^c$. \\
$L_{11},L_{12}$ & Population covariance and cross-covariance operators involving the first-stage RKHS, \eqref{eq:defineL12L21L11L22}. \\
$L_{21}^c,L_{22}^c$ & Population cross-covariance and covariance operators involving the centered second-stage RKHS, \eqref{eq:defineL12L21L11L22}. \\
$\widehat L_{11},\widehat L_{12}$ & Empirical covariance and cross-covariance operators involving the first-stage RKHS, \eqref{eq:sampleLhatijoperator}. \\
$\widehat L_{21}^c,\widehat L_{22}^c$ & Empirical cross-covariance and covariance operators involving the centered second-stage RKHS, \eqref{eq:sampleLhatijoperator}. \\
\end{longtable}

\begin{longtable}{@{}p{0.1\textwidth}p{0.75\textwidth}@{}}
\caption{Notation for empirical conditional centering.}
\label{tab:notation_empirical_centering}\\
\toprule
Symbol & Description \\
\midrule
\endfirsthead
\toprule
Symbol & Description \\
\midrule
\endhead
\bottomrule
\endfoot
$\widetilde\clh_2,\widetilde K_2$ & Uncentered second-stage kernel and its associated RKHS.\\
$m,m_\lambda$ & Population conditional feature mean and its regularized target, \eqref{eq:full_m_target} and \eqref{eq:full_population_krr_centering}. \\
$\widetilde L_{21}$ & Population uncentered cross-covariance operator, \eqref{eq:full_raw_L21hat}. \\
$\widehat{\widetilde L}_{21,T}$ & Uncentered empirical cross-covariance operator, \eqref{eq:full_raw_L21hat}.  \\
$A_\lambda,\widehat A_{\lambda,T}$ & Population and empirical regularized KRR operators defining $m_\lambda$ and $\widehat m_T$, \eqref{eq:full_population_krr_centering} and \eqref{eq:full_empirical_krr_operator}. \\
$\calG,\Gamma$ & $\widetilde\clh_2$-valued RKHS and operator-valued kernel used to estimate $m$, \eqref{eq:full_operator_kernel}. \\
$\widehat m_T$ & Vector-valued KRR centering estimator, \eqref{eq:full_mhat_def}. \\
$S_K$ & KRR centering smoother, \eqref{eq:SK_residualizer_def}. \\
$k_t^c,\widehat k_{t,T}^c$ & Population- and empirically centered feature sections, \eqref{eq:full_centered_sections}. \\
$\widehat K_2^c$ & Empirically centered second-stage Gram matrix, \eqref{eq:K2hatc_residualized_def}. \\
$\widehat Z_T^{\mathrm{res}}$ & Feasible residual embedding, \eqref{eq:full_residual_embedding}. \\
$\widehat Q_T^{\mathrm{res}}$ & Feasible quadratic statistic, \eqref{eq:full_residual_quadratic}. \\
\end{longtable}

\begin{longtable}{@{}p{0.1\textwidth}p{0.75\textwidth}@{}}
\caption{Notation for empirical-centering rates and limits.}
\label{tab:notation_empirical_centering_rates}\\
\toprule
Symbol & Description \\
\midrule
\endfirsthead
\toprule
Symbol & Description \\
\midrule
\endhead
\bottomrule
\endfoot
$a_T$ & Rate of the first-stage estimator, \eqref{eq:full_first_stage_rate}. \\
$b_{\lambda,c}^2$ & Squared population centering bias, \eqref{eq:full_centering_bias}. \\
$\psi_{1,\lambda}$ & Regularized kernel section, \eqref{eq:full_regularized_feature}. \\
$\widehat R_{c,T}$ & Empirical mean squared centering error of $\widehat m_T$, \eqref{eq:full_centering_risk}. \\
$L_c,\widehat L_{c,T}$ & Population and feasible centered covariance operators, \eqref{eq:full_Lc_def} and \eqref{eq:full_feasible_covariance}. \\
\end{longtable}

\endgroup

\section{Proofs of main results in Section \ref{subsec:asymptotic_distribution_consistency}}
\label{app:appendixasymptotic_distribution_consistency}

In this section, we present the proofs regarding the asymptotic distribution and consistency of our oracle test statistic.

\begin{proof}[Proof of Theorem~\ref{thm:ZT_limit}]

Under the null hypothesis $g_{X,0} = 0$, the decomposition \eqref{eq:ZT_operator} reduces to \eqref{eq:ZT_operator_null}, i.e.,
\begin{equation} \label{eq:proof:thm:ZT:eq1}
Z_T
=
G_T
+
\sqrt{T}\widehat L_{21}^c(f_X-\widehat f).
\end{equation}
We consider the two summands on the right-hand side of \eqref{eq:proof:thm:ZT:eq1} separately.  For $G_T$, by Proposition \ref{prop:aux_FCLT}, we have
\begin{equation}\label{eq:convGTtodGinH2}
G_T
\;\xrightarrow[]{\distr}\;
\mathcal G_2
\qquad
\text{in }\clh_2,
\end{equation}
so it just remains to show that the second summand in \eqref{eq:proof:thm:ZT:eq1} is negligible, i.e., 
\begin{equation}\label{eq:negligibilitylhat21fxminusfhat}
\left\|
\sqrt{T}\widehat L_{21}^c(f_X-\widehat f)
\right\|_{\clh_2}
=
o_{\PP}(1).
\end{equation}
Note that, with explanations given below, 
\begin{align}
    \left\|
    \sqrt{T}\widehat L_{21}^c(f_X-\widehat f)
    \right\|_{\clh_2}
    &=
    \left\|
    \sqrt{T}
    (\widehat L_{21}^c-L_{21}^c) (f_X-\widehat f)
    +
    \sqrt{T}
    L_{21}^c (f_X-\widehat f)
    \right\|_{\clh_2}
    \label{eq:proof:thm:ZT:eq2}
    \\&=
    \left\|
    \sqrt{T}
    (\widehat L_{21}^c-L_{21}^c) (f_X-\widehat f)
    \right\|_{\clh_2}
    \label{eq:proof:thm:ZT:eq3}
    \\&\leq
    \sqrt{T}
    \| \widehat L_{21}^c-L_{21}^c \|_{\op}
    \left\|
    f_X-\widehat f
    \right\|_{\clh_1},
    \label{eq:proof:thm:ZT:eq4}
    \\&=
    O_{\PP}(1) o_{\PP}(1)=o_{\PP}(1),
    \label{eq:proof:thm:ZT:eq5}
\end{align}
where \eqref{eq:proof:thm:ZT:eq3}
is due to the orthogonality identity 
\eqref{eq:L21orthogonality} with $f = f_X - \widehat f$, i.e., $L_{21}^c(f_X-\widehat f)=0.$
The inequality in \eqref{eq:proof:thm:ZT:eq4} follows from  the definition of the operator norm, and the first identity in \eqref{eq:proof:thm:ZT:eq5} follows from Corollary \ref{cor:aux_operator_conv_prob} and Assumption \ref{assump:first_stage_consistency}.
The result follows from \eqref{eq:convGTtodGinH2}, \eqref{eq:negligibilitylhat21fxminusfhat}, and  Slutsky's theorem.
\end{proof}

\begin{proof}[Proof of Theorem~\ref{thm:ZT_alt}]
Starting from the decomposition \eqref{eq:ZT_operator_alternative}, subtracting
$\sqrt T\,\widehat L_{22}^c g_{X,0}$ gives
\begin{align}
Z_T-\sqrt T\,\widehat L_{22}^c g_{X,0}
&=
G_T
+
\sqrt T\,\widehat L_{21}^c(f_X-\widetilde f)
-
\sqrt T\,
\widehat L_{21}^c
  (\widehat L_{11}+\lambda_{T,f}I_{\clh_1})^{-1}
\widehat L_{12}g_{X,0}.
\label{eq:proof:thm:ZT_alt:decomp}
\end{align}
By Proposition \ref{prop:aux_FCLT}, $G_T \xrightarrow{\distr} \mathcal G_2$ in $\clh_2$.
It remains to show that the last two terms in
\eqref{eq:proof:thm:ZT_alt:decomp} are $o_{\PP}(1)$ in $\clh_2$.

For the term $\sqrt T\,\widehat L_{21}^c(f_X-\widetilde f)$, under Assumption \ref{assump:first_stage_consistency}, the same arguments as in
the proof of Theorem~\ref{thm:ZT_limit} give
\begin{equation}
\left\|
\sqrt T\,\widehat L_{21}^c(f_X-\widetilde f)
\right\|_{\clh_2}
=
o_{\PP}(1),
\end{equation}
where $\widetilde f$ here plays the role of $\widehat f$ in that proof.

It remains to control the second summand in \eqref{eq:proof:thm:ZT_alt:decomp}.
We show that this term is $o_{\PP}(1)$ in $\clh_2$. Note that, with explanations given below,
\begin{align}
&\left\|
\sqrt T\,
\widehat L_{21}^c
  (\widehat L_{11}+\lambda_{T,f}I_{\clh_1})^{-1}
\widehat L_{12}g_{X,0}
\right\|_{\clh_2}
\notag\\
&=
\left\|
\sqrt T\,
(\widehat L_{21}^c-L_{21}^c)
  (\widehat L_{11}+\lambda_{T,f}I_{\clh_1})^{-1}
(\widehat L_{12}-L_{12})g_{X,0}
\right\|_{\clh_2}
\label{eq:proof:thm:ZT_alt:eq1}
\\
&\le
\sqrt T\,
\left\|\widehat L_{21}^c-L_{21}^c\right\|_{\op}
\left\|
  (\widehat L_{11}+\lambda_{T,f}I_{\clh_1})^{-1}
\right\|_{\op}
\left\|\widehat L_{12}-L_{12}\right\|_{\op}\left\|g_{X,0}\right\|_{\clh_2}
\label{eq:proof:thm:ZT_alt:eq2}
\\
&\le
\sqrt T\,
\|\widehat L_{21}^c-L_{21}^c\|_{\op}
\lambda_{T,f}^{-1}
\|\widehat L_{12}-L_{12}\|_{\op}
\|g_{X,0}\|_{\clh_2}
\label{eq:proof:thm:ZT_alt:eq3}
\\
&=
O_{\PP}
\left(
\frac{1}{\lambda_{T,f}\sqrt T}
\right)
=
o_{\PP}(1).
\label{eq:proof:thm:ZT_alt:eq4}
\end{align}
The equality \eqref{eq:proof:thm:ZT_alt:eq1} follows from two orthogonality relations. First, the conditional-centering condition \eqref{eq:cond_center} implies
\begin{equation}
L_{21}^c h=0,
\qquad h\in\clh_1.
\end{equation}
Second, since $g_{X,0}\in\clh_2\subseteq L^2_{X,0}(\pi)$, we have $L_{12}g_{X,0}=0$.
The inequality
\eqref{eq:proof:thm:ZT_alt:eq2} follows from the definition of the operator
norm. The inequality \eqref{eq:proof:thm:ZT_alt:eq3} uses
\begin{equation}
\left\|
  (\widehat L_{11}+\lambda_{T,f}I_{\clh_1})^{-1}
\right\|_{\op}
\le
\lambda_{T,f}^{-1},
\end{equation}
which holds since $\widehat L_{11}$ is self-adjoint and positive semidefinite.
Finally, \eqref{eq:proof:thm:ZT_alt:eq4} follows from
Corollary \ref{cor:aux_operator_conv_prob}, which gives
\begin{align}
\|\widehat L_{21}^c-L_{21}^c\|_{\op}
&=
O_{\PP}(T^{-1/2}),
\qquad
\text{as an operator from } \clh_1 \text{ to } \clh_2,
\\
\|\widehat L_{12}-L_{12}\|_{\op}
&=
O_{\PP}(T^{-1/2}),
\qquad
\text{as an operator from } \clh_2 \text{ to } \clh_1,
\end{align}
together with the condition $\lambda_{T,f}\sqrt T\to\infty$.

Combining the decomposition \eqref{eq:proof:thm:ZT_alt:decomp} with the two
negligible terms above gives
\begin{equation}
Z_T-\sqrt T\,\widehat L_{22}^c g_{X,0}
=
G_T+o_{\PP}(1)
\qquad
\text{in } \clh_2.
\end{equation}
By Proposition \ref{prop:aux_FCLT} and Slutsky's theorem,
\begin{equation} \label{eq:ZT-sqrt}
Z_T-\sqrt T\,\widehat L_{22}^c g_{X,0}
\xrightarrow{\distr}
\mathcal G_2
\qquad
\text{in } \clh_2.
\end{equation}

We next prove the convergence of the rescaled quadratic statistic. By \eqref{eq:ZT-sqrt}, $Z_T-\sqrt T \widehat L_{22}^c g_{X,0}
=
O_{\PP}(1)$ in $\clh_2$, since convergence in distribution implies stochastic boundedness. Therefore,
\begin{align}
\left\|
\frac{Z_T}{\sqrt T}-L_{22}^c g_{X,0}
\right\|_{\clh_2}
&\leq
\frac{1}{\sqrt T}
\left\|
Z_T-\sqrt T \widehat L_{22}^c g_{X,0}
\right\|_{\clh_2}
+
\left\|
\widehat L_{22}^c-L_{22}^c
\right\|_{\op}
\|g_{X,0}\|_{\clh_2}
=
o_{\PP}(1).
\label{eq:ineqprop1-2}
\end{align}
Indeed, the first term on the right-hand side of
\eqref{eq:ineqprop1-2} is $o_{\PP}(1)$ by
\eqref{eq:ZT-sqrt}. Moreover, by Corollary
\ref{cor:aux_operator_conv_prob} 
the second term is also $o_{\PP}(1)$, because
$g_{X,0}\in\clh_2$ is fixed. Consequently,
\begin{equation}
\frac{Z_T}{\sqrt T}
\xrightarrow{\PP}
L_{22}^c g_{X,0}
\qquad
\text{in } \clh_2.
\end{equation}

Since $Q_T=\|Z_T\|_{\clh_2}^2$, the continuous mapping theorem gives
\begin{equation}
\frac{Q_T}{T}
=
\left\|
\frac{Z_T}{\sqrt T}
\right\|_{\clh_2}^2
\xrightarrow{\PP}
\|L_{22}^c g_{X,0}\|_{\clh_2}^2.
\end{equation}
If $L_{22}^c g_{X,0}\neq0$, the limit is strictly positive. Hence, $\frac{Q_T}{T}
\xrightarrow{\PP}
c >0,$
which implies that $Q_T\to\infty$ in probability at the linear rate $T$.
\end{proof}

\section{Auxiliary RKHS Limit Results and their Proofs}
\label{appendix:proofs_auxiliary}

This section collects two auxiliary RKHS limit results used in the
proofs of Theorems~\ref{thm:ZT_limit} and \ref{thm:ZT_alt}. They are stated
separately because they are not specific to the Granger-causality null
hypothesis. In particular, they do not require the conditional-centering
condition~\eqref{eq:cond_center}; conditional centering is used only when
the first-stage residualization error is removed from the limiting
distribution of the test statistic.

\begin{proposition}
\label{prop:aux_FCLT}
Under Assumptions~\ref{assump:bounded_dynamics},
\ref{assump:bounded_kernel}, and \ref{assump:gaussian}, as $T \to \infty$,
\begin{equation}
\label{eq:aux_FCLT}
G_T
\doteq
\frac{1}{\sqrt{T}}\mathcal K_2^c\boldsymbol{\varepsilon}
\xrightarrow{\distr}
\mathcal G_2
\qquad
\text{in } \clh_2,
\end{equation}
where $\mathcal G_2$ is a centered Gaussian element in $\clh_2$ with
covariance operator $\sigma^2 L_{22}^c$.
\end{proposition}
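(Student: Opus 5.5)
The plan is to write $G_T=\sum_{t=1}^T \xi_{T,t}$ with $\xi_{T,t}\doteq T^{-1/2}\varepsilon_{X,t}\,k_t$ and $k_t\doteq K_2^c((X_{t-1},Y_{t-1}),\cdot)\in\clh_2$, and then apply a central limit theorem for martingale difference arrays in a separable Hilbert space, for instance the Hilbert-space version of the Lindeberg-type martingale CLT.

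\emph{Martingale structure.} The feature $k_t$ is $\mathcal F_{t-1}$-measurable. By Assumption~\ref{assump:gaussian}, $\varepsilon_{X,t}$ is independent of $\mathcal F_{t-1}$, has mean zero and variance $\sigma^2$. Hence $\EE[\xi_{T,t}\mid\mathcal F_{t-1}]=0$, and $\{\xi_{T,t}\}$ is an $\clh_2$-valued martingale difference array. Boundedness of $K_2^c$ gives $\|k_t\|_{\clh_2}^2=K_2^c((X_{t-1},Y_{t-1}),(X_{t-1},Y_{t-1}))\le\kappa^2$.

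\emph{Conditional Lindeberg condition.} For $\eta>0$,
\[
\sum_{t=1}^T\EE\bigl[\|\xi_{T,t}\|^2\mathbf 1\{\|\xi_{T,t}\|>\eta\}\mid\mathcal F_{t-1}\bigr]\le \kappa^2\,\EE\bigl[\varepsilon_{X,1}^2\mathbf 1\{|\varepsilon_{X,1}|>\eta\sqrt T/\kappa\}\bigr]\to0 .
\]
This bound is deterministic and uses only the i.i.d. Gaussian innovations.

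\emph{Convergence of the conditional covariance operators.} The conditional covariance operator is
\[
\sum_{t=1}^T\EE[\xi_{T,t}\otimes\xi_{T,t}\mid\mathcal F_{t-1}]=\frac{\sigma^2}{T}\sum_{t=1}^T k_t\otimes k_t=\sigma^2\widehat L_{22}^c .
\]
By Proposition~\ref{prop:aux_operator_convergence} (equivalently, Corollary~\ref{cor:aux_operator_conv_prob}), $\|\widehat L_{22}^c-L_{22}^c\|_{\cls_2}\to0$ in probability. Those results rest on geometric ergodicity~\eqref{eq:geom_ergodicity}, which follows from Assumptions~\ref{assump:bounded_dynamics} and~\ref{assump:gaussian}.

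The Hilbert-space CLT also needs convergence of the traces, $\operatorname{tr}(\widehat L_{22}^c)\to\operatorname{tr}(L_{22}^c)$, and Hilbert--Schmidt convergence alone does not give this. I would establish it directly. Since $\operatorname{tr}(\widehat L_{22}^c)=T^{-1}\sum_t K_2^c((X_{t-1},Y_{t-1}),(X_{t-1},Y_{t-1}))$ is an ergodic average of a bounded function, the ergodic theorem for the stationary geometrically ergodic chain gives convergence to $\EE_\pi K_2^c((X_0,Y_0),(X_0,Y_0))=\operatorname{tr}(L_{22}^c)$.

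\emph{Fallback route via tightness.} If one prefers not to cite an infinite-dimensional CLT, I would argue through finite-dimensional distributions plus tightness. For any $h\in\clh_2$, the scalar array $\langle\xi_{T,t},h\rangle$ satisfies the real martingale CLT, with conditional variance $\sigma^2\langle\widehat L_{22}^ch,h\rangle\to\sigma^2\langle L_{22}^ch,h\rangle$. The Cramér--Wold device then gives all finite-dimensional limits. Tightness follows from the uniform tail bound
\[
\EE\sum_{j>J}\langle G_T,e_j\rangle^2=\sigma^2\sum_{j>J}\langle L_{22}^ce_j,e_j\rangle ,
\]
which holds exactly by stationarity and orthogonality of the martingale increments. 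This tail tends to $0$ as $J\to\infty$ because $L_{22}^c$ is trace class.

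The main obstacle is the covariance-operator convergence, which requires the ergodicity-based operator bounds. Since those are provided by Proposition~\ref{prop:aux_operator_convergence}, the remaining work is assembling the martingale CLT conditions above. The limit is centered Gaussian with covariance operator $\sigma^2L_{22}^c$, as claimed.
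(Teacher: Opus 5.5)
Your proposal is correct, and your fallback route is essentially the paper's proof. The paper proves convergence of every one-dimensional projection $\langle G_T,h\rangle_{\clh_2}$ by a scalar martingale CLT, with the conditional variance handled by an ergodic law of large numbers and the Lindeberg condition by Gaussian tails. It gets tightness from the coordinate bound $\EE\langle G_T,e_j\rangle_{\clh_2}^2=\sigma^2\EE_\pi[e_j(X_0,Y_0)^2]\doteq a_j$ with $\sum_j a_j\le\sigma^2\kappa^2$, which is your exact tail identity in summed form.

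Your primary route is genuinely different. You apply an $\clh_2$-valued martingale CLT directly, checking three things: conditional Lindeberg, convergence of the conditional covariance operator $\sigma^2\widehat L_{22}^c$ in $\cls_2$ (reusing Proposition~\ref{prop:aux_operator_convergence}), and convergence of its trace by the ergodic theorem. You are right that the trace step is indispensable. Convergence in $\cls_2$ only gives the coordinatewise conditions $\langle\widehat L_{22}^c e_i,e_j\rangle_{\clh_2}\to\langle L_{22}^c e_i,e_j\rangle_{\clh_2}$. The trace supplies the uniform control of the conditional tail variances $\sum_{j>J}$, which plays the role of tightness in such theorems.

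The trade-off is as follows. Your route is shorter and works entirely at the operator level, but it rests on citing an infinite-dimensional martingale CLT and on two ergodic inputs, namely the operator-rate bound and the trace LLN. The paper's route needs only a scalar CLT and a scalar LLN for each fixed $h$. Its tightness step uses nothing beyond stationarity and orthogonality of martingale increments, so no operator-norm rates enter at all.

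A minor bonus of your version is that your Lindeberg bound uses only finite variance of the i.i.d.\ innovations, not Gaussian tails.
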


\begin{remark}
The same argument applies to
$\frac{1}{\sqrt{T}}\mathcal K_1\boldsymbol{\varepsilon}$ in $\clh_1$, with
covariance operator $\sigma^2 L_{11}$.
\end{remark}

\begin{proof}[Proof of Proposition \ref{prop:aux_FCLT}]
We use the standard characterization of weak convergence in
separable Hilbert spaces: convergence of all one-dimensional projections together with tightness
implies weak convergence \citep{ledoux1991probability}. By Lemma \ref{lem:fdd_convergence} below, we have convergence of all one-dimensional projections. That is,  
for every $h\in\clh_2$,
\begin{equation}
\langle G_T,h\rangle_{\clh_2}
\xrightarrow{\distr}
\langle \mathcal G_2,h\rangle_{\clh_2}.
\end{equation}
It remains to show that $\{G_T:T\in\NN\}$ is tight in $\clh_2$. We employ Lemma 7.1 of \citet{panaretos2013fourier} to prove tightness. The criterion was first introduced in the proof of Theorem 2.7 of \cite{bosq2000linear} and states: Let $\{e_j\}_{j\ge 1}$ be an orthonormal basis of $\clh_2$ such that
\begin{enumerate}[label=\textit{(\roman*)}]
    \item  $\EE\left[ \langle G_T,e_j\rangle_{\clh_2}^2 \right] 
    \le a_j
    \qquad
    j\in\NN,\ T\in\NN,$
    \label{item1.panaretos}
    \item $\sum_{j=1}^\infty a_j < \infty$,
    \label{item2.panaretos}
\end{enumerate}
then $\{ G_T \}$ is tight.

For \ref{item1.panaretos}, we get, with explanations given below, 
\begin{align}
\EE\left[
\left\langle G_T,e_j\right\rangle_{\clh_2}^2
\right]
&=
\EE\left[
\left(
\frac{1}{\sqrt T}
\sum_{t=1}^T
\varepsilon_{X,t}
\left\langle
K_2^c((X_{t-1},Y_{t-1}),\cdot),e_j
\right\rangle_{\clh_2}
\right)^2
\right]                                                        \label{eq:coord-second-moment-a}
\\
&=
\EE\left[
\left(
\frac{1}{\sqrt T}
\sum_{t=1}^T
\varepsilon_{X,t}e_j(X_{t-1},Y_{t-1})
\right)^2
\right]                                                        \label{eq:coord-second-moment-b}
\\
&=
\frac{1}{T}
\sum_{s=1}^T\sum_{t=1}^T
\EE\left[
\varepsilon_{X,s}\varepsilon_{X,t}
e_j(X_{s-1},Y_{s-1})e_j(X_{t-1},Y_{t-1})
\right]                                                        \label{eq:coord-second-moment-c}
\\
&=
\frac{1}{T}
\sum_{t=1}^T
\EE\left[
\varepsilon_{X,t}^2e_j(X_{t-1},Y_{t-1})^2
\right]                                                        \label{eq:coord-second-moment-d}
\\
&=
\frac{1}{T}
\sum_{t=1}^T
\EE\left[
(e_j(X_{t-1},Y_{t-1}))^2
\EE\left(\varepsilon_{X,t}^2\mid \mathcal F_{t-1}\right)
\right]                                                        \label{eq:coord-second-moment-e}
\\
&=
\frac{\sigma^2}{T}
\sum_{t=1}^T
\EE\left[
(e_j(X_{t-1},Y_{t-1}))^2
\right]                                                        =
\sigma^2\EE_\pi\left[ (
e_j(X_0,Y_0))^2
\right],                         \label{eq:coord-second-moment-g}
\end{align}
where the equality in \eqref{eq:coord-second-moment-a} uses the definition of $G_T$ in \eqref{eq:aux_FCLT}, \eqref{eq:coord-second-moment-b} is due to the reproducing property, applicable since $e_j \in \clh_2$.
The identity in \eqref{eq:coord-second-moment-d} uses that the off-diagonal terms vanish. Indeed, for $s<t$,
\begin{align}
&\EE\left[
\varepsilon_{X,s}\varepsilon_{X,t}
e_j(X_{s-1},Y_{s-1})e_j(X_{t-1},Y_{t-1})
\right]  \notag \\
&\qquad =
\EE\left[
\varepsilon_{X,s}
e_j(X_{s-1},Y_{s-1})e_j(X_{t-1},Y_{t-1})
\EE\left(\varepsilon_{X,t}\mid \mathcal F_{t-1}\right)
\right]
=0.
\end{align}
The identity in \eqref{eq:coord-second-moment-e} uses the tower property.
It also uses that $(X_{t-1},Y_{t-1})$ is
$\mathcal F_{t-1}$-measurable. It follows that \ref{item1.panaretos} holds with
\begin{align}\label{eq:defaj}
a_j \doteq \sigma^2  \EE_{\pi} [( e_j (X_0,Y_0))^2], \qquad j \in \NN.
\end{align}

For \ref{item2.panaretos}, recall $a_j$ from \eqref{eq:defaj}. Then, by Parseval's identity and the reproducing property,
\begin{equation}
\sum_{j=1}^\infty a_j
=
\sigma^2
\EE_\pi\left[
\sum_{j=1}^\infty (e_j(X_0,Y_0))^2
\right]
=
\sigma^2
\EE_\pi \left[ \|K_2^c((X_0,Y_0),\cdot)\|_{\clh_2}^2 \right]
=
\sigma^2
\EE_\pi \left[ K_2^c((X_0,Y_0),(X_0,Y_0)) \right]
\le
\sigma^2\kappa^2,
\end{equation}
where the last inequality is due to Assumption \ref{assump:bounded_kernel}.
The same calculation shows that $L_{22}^c : \clh_2 \to \clh_2$ is trace class, and hence the centered Gaussian
element with covariance operator $\sigma^2L_{22}^c$ is well defined.
Thus
\begin{equation}
\EE\left[
\langle G_T,e_j\rangle_{\clh_2}^2
\right]
\le a_j
\qquad
j\in\NN,\ T\in\NN,
\end{equation}
with $\sum_{j=1}^\infty a_j<\infty$. The Hilbert space tightness criterion based on
coordinate second moments therefore implies tightness of $\{G_T:T\in\NN\}$
\citep[see, e.g., Theorem~2.7 of][]{bosq2000linear}; see also \citet{panaretos2013fourier}.

Combining tightness with the convergence of all one-dimensional projections yields $G_T \xrightarrow{\distr}\mathcal G_2$ in $\clh_2$.
\end{proof}

\begin{lemma}
\label{lem:fdd_convergence}
Under Assumptions~\ref{assump:bounded_dynamics}, \ref{assump:bounded_kernel}, and \ref{assump:gaussian}, for every $h \in\clh_2$,
\begin{equation}
\langle G_T,h\rangle_{\clh_2}
\xrightarrow{\distr}
\mathcal N\big(0,\sigma^2 \EE_\pi[h(X_0,Y_0)^2]\big),
\end{equation}
where $\EE_\pi[h(X_0,Y_0)^2]
=
\langle L_{22}^c h,h\rangle_{\clh_2}.$
\end{lemma}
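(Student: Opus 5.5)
By the reproducing property, $\langle G_T,h\rangle_{\clh_2}=T^{-1/2}\sum_{t=1}^T \varepsilon_{X,t}h(X_{t-1},Y_{t-1})$. This is a normalized sum of martingale differences $\xi_{t}\doteq \varepsilon_{X,t}h(X_{t-1},Y_{t-1})$ with respect to $\{\mathcal F_t\}$, since $h(X_{t-1},Y_{t-1})$ is $\mathcal F_{t-1}$-measurable and $\EE[\varepsilon_{X,t}\mid\mathcal F_{t-1}]=0$. The plan is to apply a martingale central limit theorem for triangular arrays (e.g.\ Hall and Heyde, Corollary 3.1) to $\xi_{T,t}\doteq T^{-1/2}\xi_t$. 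Two conditions must be checked: convergence of the conditional variance and a conditional Lindeberg condition.

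\textbf{Conditional variance.} Using independence of $\varepsilon_{X,t}$ from $\mathcal F_{t-1}$ (Assumption~\ref{assump:gaussian}),
\begin{equation}
\sum_{t=1}^T\EE[\xi_{T,t}^2\mid\mathcal F_{t-1}]=\frac{\sigma^2}{T}\sum_{t=1}^T h(X_{t-1},Y_{t-1})^2 .
\end{equation}
We need this to converge in probability to $\sigma^2\EE_\pi[h(X_0,Y_0)^2]$. The function $h^2$ is bounded, since $|h(x,y)|\le \kappa\|h\|_{\clh_2}$ by the reproducing property and Assumption~\ref{assump:bounded_kernel}. The chain is stationary with $(X_0,Y_0)\sim\pi$ and is geometrically ergodic by \eqref{eq:geom_ergodicity}, so a weak law of large numbers applies. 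Concretely, I would bound the variance of the average directly. For $s<t$,
\begin{equation}
|\operatorname{Cov}(h^2(X_{s-1},Y_{s-1}),h^2(X_{t-1},Y_{t-1}))|\le 2\|h^2\|_\infty^2\rho^{t-s}\EE_\pi J .
\end{equation}
This follows by conditioning on $\mathcal F_{s-1}$ and using the total variation bound. The covariances are therefore summable, the variance of the average is $O(1/T)$, and Chebyshev's inequality gives the claim. This mirrors the covariance-summation argument used for Proposition~\ref{prop:aux_operator_convergence}.

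\textbf{Lindeberg condition.} Since $|h|\le\kappa\|h\|_{\clh_2}\doteq C_h$,
\begin{equation}
\sum_t\EE[\xi_{T,t}^2\mathbf 1\{|\xi_{T,t}|>\epsilon\}]\le C_h^2\,\EE[\varepsilon_{X,1}^2\mathbf 1\{|\varepsilon_{X,1}|>\epsilon\sqrt T/C_h\}]\to0 .
\end{equation}
The right-hand side vanishes by dominated convergence, because the innovations are i.i.d.\ with finite variance. Gaussianity is more than enough here; a Lyapunov bound with fourth moments works equally well.

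Finally, the limit variance is identified through the reproducing property:
\begin{equation}
\langle L_{22}^c h,h\rangle_{\clh_2}=\EE_\pi[\langle K_2^c((X_0,Y_0),\cdot),h\rangle_{\clh_2}h(X_0,Y_0)]=\EE_\pi[h(X_0,Y_0)^2] .
\end{equation}
Exchanging the expectation with the inner product is justified as a Bochner integral, since the kernel is bounded.

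The main obstacle is the ergodic law of large numbers for the conditional variance. The remaining steps are routine, and this one requires care only in invoking \eqref{eq:geom_ergodicity} with a $\pi$-integrable $J$ under stationarity.
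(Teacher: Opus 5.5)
Your proposal is correct and follows essentially the same route as the paper: reduce $\langle G_T,h\rangle_{\clh_2}$ to a martingale difference array by the reproducing property, apply a martingale CLT (the paper uses Fuller's Theorem~5.3.4), establish a law of large numbers for the conditional variance, check a Lindeberg condition, and identify the variance via $\langle L_{22}^c h,h\rangle_{\clh_2}=\EE_\pi[h(X_0,Y_0)^2]$. The differences are minor: the paper cites a Markov-chain law of large numbers (Theorem~1 of \citet{jensen2007law}) where you prove it directly by summing geometrically decaying covariances, and it verifies Lindeberg through a Gaussian tail bound (Lemma~\ref{lem:subgaussian-truncated-second-moment}) where your dominated-convergence argument needs only finite innovation variance.
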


\begin{proof}
Since $h \in \clh_2$, by the reproducing property, 
\begin{equation}
S_T
\doteq
\langle G_T,h\rangle_{\clh_2}
=
\left\langle 
\frac{1}{\sqrt{T}} \mathcal K_2^c \boldsymbol{\varepsilon} ,h \right\rangle_{\clh_2}
=
\frac{1}{\sqrt{T}}
\sum_{t=1}^T
\varepsilon_{X,t}h(X_{t-1},Y_{t-1})
=
\sum_{t=1}^T\xi_{t,T},
\end{equation}
where we define
\begin{equation}
\xi_{t,T}
\doteq
w_{t,T}\varepsilon_{X,t},
\qquad
w_{t,T}\doteq\frac{1}{\sqrt T}h(X_{t-1},Y_{t-1}).
\end{equation}
We verify the conditions of the martingale central limit
theorem given in Theorem 5.3.4 of \citet{fuller1995introduction}. 
Let $\mathcal{F}_{t-1}$ denote the sigma algebra generated by $\{ (X_s,Y_s):0\le s\le t-1\}$. 
For $S_T$ to converge in distribution, we need to show that
\begin{enumerate}[label=\textit{(\roman*)}]
    \item $\EE[\xi_{t,T}\mid\calF_{t-1}]=0$ for $1 \leq t \leq T$,
    \label{item1:Fuller-cond}
    \item $V_T^2 \xrightarrow{\PP} \sigma^2 \EE_\pi[h(X_0,Y_0)^2]$ with $V_T^2\doteq\sum_{t=1}^T \delta_{t,T}^2$ and $\delta_{t,T}^2
\doteq
\EE[\xi_{t,T}^2\mid \calF_{t-1}]$,
    \label{item2:Fuller-cond}
    \item $\sum_{t=1}^T
    \EE\left[ \xi_{t,T}^2 1_{\{|\xi_{t,T}|\ge \varepsilon s_T\}} \mid \calF_{t-1}
    \right] \xrightarrow{\PP} 0$ for all $\varepsilon > 0$.
    \label{item3:Fuller-cond}
\end{enumerate}
We prove the three conditions separately.

\textit{Proof of \ref{item1:Fuller-cond}:}
By Assumption~\ref{assump:gaussian}, $\{\xi_{t,T},\calF_t\}_{t=1}^T$ is a martingale
difference triangular array, since $\EE[\xi_{t,T}\mid\calF_{t-1}]=0.$
Moreover,
\begin{equation}
\delta_{t,T}^2
\doteq
\EE[\xi_{t,T}^2\mid \calF_{t-1}]
=
\sigma^2 w_{t,T}^2.
\end{equation}

\textit{Proof of \ref{item2:Fuller-cond}:}
Let
\begin{equation}
V_T^2\doteq\sum_{t=1}^T \delta_{t,T}^2
=
\frac{\sigma^2}{T}\sum_{t=1}^T h(X_{t-1},Y_{t-1})^2,
\qquad
s_T^2\doteq\EE[V_T^2].
\end{equation}
Since the process is initialized in stationarity,
$s_T^2
=
\sigma^2 \EE_\pi[h(X_0,Y_0)^2].$
If $\EE_\pi[h(X_0,Y_0)^2]=0$, then $S_T=0$ almost surely for every $T$, and the claim is
immediate. We therefore assume that $s_T^2>0$. 
Since $K_2^c((x,y),(x',y'))$ is bounded, the reproducing property gives
\begin{equation}
\|h\|_\infty\le c_1\|h\|_{\clh_2}<\infty,
\end{equation}
and therefore that $\| h^2\|_{\infty} = \| h\|^2_{\infty} \le c_1^2 \| h \|^2_{\clh_2}$. Thus, we can apply Theorem~1 of
\citet{jensen2007law} and infer
\begin{equation}
V_T^2
\xrightarrow{\PP}
\sigma^2 \EE_\pi[h(X_0,Y_0)^2].
\end{equation}
Thus
$
s_T^{-2}V_T^2
\xrightarrow{\PP}1.
$
This completes the proof of \ref{item2:Fuller-cond}.

\textit{Proof of \ref{item3:Fuller-cond}:}
It remains to verify the Lindeberg condition. Let $\varepsilon>0$. Since $s_T^2$ is constant
and positive, it suffices to show that
\begin{equation}\label{eq:lindebergcond1}
\sum_{t=1}^T
\EE\left[
\xi_{t,T}^2
1_{\{|\xi_{t,T}|\ge \varepsilon s_T\}}
\mid \calF_{t-1}
\right]
\xrightarrow{\PP}0.
\end{equation}
Furthermore, $|w_{t,T}|\le \frac{c_2}{\sqrt T}.$
Thus,
on the event $\{|\xi_{t,T}|\ge \varepsilon s_T\}$, either $w_{t,T}=0$ and the event is empty,
or $|\varepsilon_{X,t}|\ge c_3\sqrt T$. Lemma~\ref{lem:subgaussian-truncated-second-moment}
therefore gives constants $c_4,c_5\in(0,\infty)$ such that
\begin{equation}
\EE\left[
\xi_{t,T}^2
1_{\{|\xi_{t,T}|\ge \varepsilon s_T\}}
\mid \calF_{t-1}
\right]
\le
w_{t,T}^2
\EE\!\left[
\varepsilon_{X,t}^2
\mathbf{1}_{\{|\varepsilon_{X,t}|\ge c_3\sqrt{T}\}}
\right]
\le
c_4 w_{t,T}^2\exp(-c_5T).
\end{equation}
Consequently,
\begin{equation}
\sum_{t=1}^T
\EE\left[
\xi_{t,T}^2
1_{\{|\xi_{t,T}|\ge \varepsilon s_T\}}
\mid \calF_{t-1}
\right]
\le
c_4\exp(-c_5T)
\sum_{t=1}^T w_{t,T}^2
\le
c_6\exp(-c_5T)
\xrightarrow{}0,
\end{equation}
so \eqref{eq:lindebergcond1} holds. This completes the proof of \ref{item3:Fuller-cond}.

The martingale central limit theorem yields
\begin{equation}
S_T
\xrightarrow{\distr}
\mathcal N\big(0,\sigma^2 \EE_\pi[h(X_0,Y_0)^2]\big).
\end{equation}
It remains to identify the covariance. Since $\mathcal G_2$ is a centered Gaussian with
covariance operator $\sigma^2L_{22}^c$,
\begin{equation}
\langle \mathcal G_2,h\rangle_{\clh_2}
\sim
\mathcal N\left(
0,
\sigma^2 \langle L_{22}^c h,h\rangle_{\clh_2}
\right).
\end{equation}
By the definition of $L_{22}^c$ and the reproducing property, $\langle L_{22}^c h,h\rangle_{\clh_2}
=
\EE_\pi[h(X_0,Y_0)^2].$
\end{proof}

\begin{proposition}
\label{prop:aux_operator_convergence}
Recall that $\|\cdot\|_{\op}$ and $\|\cdot\|_{\cls_2}$ denote the operator
norm and Hilbert--Schmidt norm, respectively. Under
Assumptions~\ref{assump:bounded_dynamics}, \ref{assump:bounded_kernel}, and
\ref{assump:gaussian}, for each pair
\begin{equation}
(\widehat L,L)\in
\{(\widehat L_{11},L_{11}), (\widehat L_{12},L_{12}),
(\widehat L_{21}^c,L_{21}^c), (\widehat L_{22}^c,L_{22}^c)\},
\end{equation}
we have
\begin{equation}
\EE_\pi
\left[
\|\widehat L-L\|_{\cls_2}
\right]
=
O(T^{-1/2}),
\qquad
\EE_\pi
\left[
\|\widehat L-L\|_{\op}
\right]
=
O(T^{-1/2}).
\end{equation}
\end{proposition}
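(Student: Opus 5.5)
The plan is to write each empirical operator minus its population counterpart as an average of centered rank-one Hilbert--Schmidt operators and bound its second moment. Take the generic pair $(\widehat L_{21}^c,L_{21}^c)$; the other three pairs are handled identically. Set
\begin{equation}
\Xi_t \doteq K_2^c((X_{t-1},Y_{t-1}),\cdot)\otimes K_1(X_{t-1},\cdot)\in\cls_2(\clh_1,\clh_2),\qquad \widehat L_{21}^c=\frac1T\sum_{t=1}^T\Xi_t,\qquad L_{21}^c=\EE_\pi[\Xi_1].
\end{equation}
The identity $L_{21}^c=\EE_\pi[\Xi_1]$ is a Bochner expectation in the separable Hilbert space $\cls_2$. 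It is well defined because $\|\Xi_t\|_{\cls_2}=\|K_2^c(\cdot)\|_{\clh_2}\|K_1(\cdot)\|_{\clh_1}\le\kappa^2$ by Assumption~\ref{assump:bounded_kernel}, and it matches \eqref{eq:defineL12L21L11L22} by the reproducing property. Since $\|\cdot\|_{\op}\le\|\cdot\|_{\cls_2}$, it suffices to prove the Hilbert--Schmidt bound. By Jensen's inequality it further suffices to show
\begin{equation}
\EE_\pi\|\widehat L-L\|_{\cls_2}^2=\frac1{T^2}\sum_{s,t=1}^T\EE_\pi\langle \Xi_s-L,\Xi_t-L\rangle_{\cls_2}=O(T^{-1}).
\end{equation}

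Because the chain is started in $\pi$, it is stationary. Each cross term therefore depends only on the lag $h=|t-s|$, and equals $c(h)\doteq\EE_\pi\langle \Xi_1-L,\Xi_{1+h}-L\rangle_{\cls_2}$. It then suffices to show $\sum_{h\ge0}|c(h)|<\infty$.

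For $h\ge1$, condition on $(X_0,Y_0)$ and use the Markov property: $\EE[\Xi_{1+h}-L\mid\calF_0]=\int \Xi(z)\,(P^h(z_0,dz)-\pi(dz))$, where $\Xi(z)$ is the rank-one operator built at state $z$ and $z_0=(X_0,Y_0)$. Since $\|\Xi(z)\|_{\cls_2}\le\kappa^2$, the Bochner integral against the signed measure $P^h(z_0,\cdot)-\pi$ has norm at most $2\kappa^2\|P^h(z_0,\cdot)-\pi\|_{\TV}\le 2\kappa^2\rho^hJ(z_0)$ by \eqref{eq:geom_ergodicity}. Cauchy--Schwarz in $\cls_2$ and $\|\Xi_1-L\|_{\cls_2}\le2\kappa^2$ then give
\begin{equation}
|c(h)|\le 4\kappa^4\rho^h\,\EE_\pi[J(X_0,Y_0)].
\end{equation}
This is summable because $J$ is $\pi$-integrable. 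Geometric ergodicity itself follows from Assumptions~\ref{assump:bounded_dynamics} and \ref{assump:gaussian}, as noted in the paper. The diagonal term satisfies $c(0)\le4\kappa^4$. Altogether,
\begin{equation}
\EE\|\widehat L-L\|^2_{\cls_2}\le\frac{1}{T}\Bigl(c(0)+2\sum_{h\ge1}|c(h)|\Bigr)=O(T^{-1}).
\end{equation}

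The main point requiring care is the conditioning step. One has to justify interchanging the Bochner integral with the transition kernel, which holds because $\Xi(z)$ is a measurable, bounded $\cls_2$-valued map; measurability uses the separability of $\clh_1$ and $\clh_2$. One also has to bound the integral of a bounded Hilbert-valued function against a signed measure by twice its total variation norm times the sup-norm. With these in hand, the same computation applies verbatim to $\widehat L_{11}$, $\widehat L_{12}$, and $\widehat L_{22}^c$, since each is an average of bounded rank-one operators of the same form.
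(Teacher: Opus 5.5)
Your proposal is correct and follows essentially the same route as the paper: you write $\widehat L-L$ as an average of centered, bounded rank-one $\cls_2$-valued terms, bound the diagonal by the $\kappa^4$ envelope and the lagged covariances via the Markov property and the total-variation bound \eqref{eq:geom_ergodicity}, then finish with Jensen and $\|\cdot\|_{\op}\le\|\cdot\|_{\cls_2}$. The only cosmetic difference is in the off-diagonal step: the paper applies Cauchy--Schwarz in $L^2(\PP)$ and squares the total-variation distance, which gives a $\rho^{h/2}$ decay, whereas you use the pointwise envelope to get $\rho^h\EE_\pi J$, exactly as the paper later does in Lemma~\ref{lem:full_fluctuation_rates}.
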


\begin{proof}
It suffices to prove the statement for $\widehat L_{12}$, since the remaining cases are obtained by
replacing the corresponding kernel sections.  For $(x,y)\in\RR^2$, define the
operator $\Phi_{12}(x,y):\clh_2\to\clh_1$ by
\begin{equation}
\Phi_{12}(x,y)g
\doteq
\left\langle g,K_2^c((x,y),(\cdot,\cdot))\right\rangle_{\clh_2}
K_1(x,\cdot),
\qquad g\in\clh_2.
\end{equation}
For notational convenience, let $\cls_2 \doteq \cls_{2}(\clh_2,\clh_1)$ and, for each $(x,y) \in \RR^2$, note that $\Phi_{12}(x,y)\in \cls_2$. By the reproducing property, for $g\in\clh_2$,
\begin{equation}
\Phi_{12}(x,y)g
=
g(x,y)K_1(x,\cdot).
\end{equation}
Recall $\widehat L_{12}:\clh_2\to\clh_1$ from \eqref{eq:Lhat12def} and write
\begin{equation}
\widehat L_{12}
=
\frac{1}{T}\sum_{t=1}^T\Phi_{12}(X_{t-1},Y_{t-1}),
\qquad
L_{12}
=
\EE_\pi[\Phi_{12}(X,Y)],
\end{equation}
such that
\begin{equation} \label{eq:Lhat12minusL12}
\widehat L_{12}-L_{12}
=
\frac{1}{T}\sum_{t=1}^T \Phi_{12}(X_{t-1},Y_{t-1})-\EE_\pi[\Phi_{12}(X,Y)]
\doteq
\frac{1}{T}\sum_{t=1}^T\Delta_t.
\end{equation}
By the formula for the Hilbert--Schmidt norm of a rank-one operator and the reproducing
property, it follows that
\begin{align}
\sup_{(x,y)\in\RR^2}\|\Phi_{12}(x,y)\|_{\cls_2}
&=
\sup_{(x,y)\in\RR^2}\|K_1(x,\cdot)\|_{\clh_1}
\|K_2^c((x,y),(\cdot,\cdot))\|_{\clh_2}
\\&=
\sup_{(x,y)\in\RR^2}\sqrt{K_1(x,x)}\sqrt{K_2^c((x,y),(x,y))}
\leq \kappa^2
\end{align}
by Assumption \ref{assump:bounded_kernel}. In particular, 
\begin{equation} \label{eq:momentbounddelta}
    \EE_\pi [\|\Delta_t\|^2_{\cls_2}] 
    \leq
2 \EE_\pi \left[\|\Phi_{12}(X_{t-1},Y_{t-1})\|^2_{\cls_2}\right]
+
2 \|\EE_\pi[\Phi_{12}(X,Y)]\|^2_{\cls_2}
\leq
4\kappa^4.
\end{equation}

Based on \eqref{eq:Lhat12minusL12}, we obtain
\begin{equation}\label{eq:proofoftheorem42:eq4}
\EE_\pi \left[ \|\widehat L_{12}-L_{12}\|_{\cls_2}^2 \right]
=
\frac{1}{T^2}
\sum_{s=1}^T\sum_{t=1}^T
\EE_\pi
\left[ \langle \Delta_s,\Delta_t\rangle_{\cls_2} \right]
=
\frac{1}{T^2}
\sum_{t=1}^T
\EE_\pi [\|\Delta_t\|_{\cls_2}^2]
+
\frac{2}{T^2}
\sum_{1\le s<t\le T}
\EE_\pi
[\langle \Delta_s,\Delta_t\rangle_{\cls_2}].
\end{equation}
We consider the two summands in \eqref{eq:proofoftheorem42:eq4} separately. 

For the diagonal terms, by \eqref{eq:momentbounddelta}, 
\begin{align}
\frac{1}{T^2}
\sum_{t=1}^T
\EE_\pi[\|\Delta_t\|_{\cls_2}^2]
\le
\frac{4\kappa^4}{T}.
\label{eq:EpiDeltatS2momentdiagonal}
\end{align}

For the second summand in \eqref{eq:proofoftheorem42:eq4}, we obtain
\begin{align}
\EE_\pi \left[ \langle \Delta_s,\Delta_t\rangle_{\cls_2} \right]
&= \EE_\pi\left[
\left\langle
\Delta_s,
\EE[\Delta_t\mid X_{s-1},Y_{s-1}]
\right\rangle_{\cls_2}
\right]
\label{al:tower}\\
&\le \EE_{\pi} \left[ \| \Delta_s\|_{\cls_2} \| \EE( \Delta_t \mid X_{s-1}, Y_{s-1})\|_{\cls_2}\right]
\label{al:CS1}\\
&\leq (\EE_{\pi} [\| \Delta_s \|_{\cls_2}^2])^{1/2} ( \EE_{\pi} [\| \EE(\Delta_t \mid X_{s-1},Y_{s-1})\|_{\cls_2}^2])^{1/2}
\label{al:CS2}
\\&\le (4\kappa^4)^{1/2} ( \EE_{\pi} [ \| \EE(\Delta_t \mid X_{s-1},Y_{s-1})\|_{\cls_2}^2] )^{1/2}
\label{al:kappabound},
\end{align}
where we used the tower property of conditional expectation in \eqref{al:tower} and the Cauchy--Schwarz inequality in \eqref{al:CS1} and \eqref{al:CS2}. Finally, \eqref{al:kappabound} uses the bound in \eqref{eq:momentbounddelta}. To bound the second factor in \eqref{al:kappabound}, we use the geometric ergodicity of $\{(X_t,Y_t)\}_{t \in \NN_0}$. There exists a $\pi$-integrable function $J: \RR^2 \to [0,\infty)$ such that, with
\begin{equation}
C_{\pi} \doteq \int_{\RR^2} J(x,y) \pi(dx , dy),
\end{equation}
and $h \doteq t-s$, we have
\begin{align}
   & \EE_{\pi}  \left[ \| \EE(\Delta_t \mid X_{s-1},Y_{s-1})\|_{\cls_2}^2 \right]\\
    &=
\int_{\RR^2} \left\|
\int_{\RR^2}
\Phi_{12}(u,v)
\left(P^h((x,y), (du, dv))-\pi(du, dv)\right)
\right\|_{\cls_2}^2 \pi(dx,dy) \label{eq:EpiDeltatS2momentoffdiagonalfirstline}
\\&\leq
4 \kappa^4
\int_{\RR^2}
\|P^h((x,y),\cdot)-\pi(\cdot)\|_{\TV}^2 \pi(dx,dy)\\
&\leq
4 \kappa^4
\int_{\RR^2}
\|P^h((x,y),\cdot)-\pi(\cdot)\|_{\TV} \pi(dx,dy)\\
&\leq
4 \kappa^4 \rho^h
\int_{\RR^2}J(x,y) \pi(dx,dy)\\
&= 4 C_{\pi}\kappa^4  \rho^h.\label{eq:EpiDeltatS2momentoffdiagonal}
\end{align}
To obtain the identity in \eqref{eq:EpiDeltatS2momentoffdiagonalfirstline}, note that
\begin{equation}
\Delta_t
=
\Phi_{12}(X_{t-1},Y_{t-1})-\EE_\pi[\Phi_{12}(X_0,Y_0)].
\end{equation}
The Markov property implies that
\begin{equation}\label{eq:conditional_law_Zt_given_Zs}
\mathcal L((X_{t-1},Y_{t-1})\mid (X_{s-1},Y_{s-1})=(x,y))
=
P^h((x,y),\cdot),
\end{equation}
so
\begin{equation}\label{eq:conditional_mean_Delta_t}
\EE[\Delta_t\mid (X_{s-1},Y_{s-1})=(x,y)]
=
\int_{\RR^2}\Phi_{12}(u,v) ( P^h((x,y), (du, dv))-\pi( du, dv) )
\end{equation}
The Bochner integral against the signed measure
$P^h((x,y),\cdot)-\pi$ is justified because $\Phi_{12}$ is a bounded
strongly measurable $\cls_2$-valued map.
Since $(X_{s-1},Y_{s-1})\sim\pi$, it follows that 
\begin{align}
\EE_{\pi}  \left[ \| \EE(\Delta_t \mid X_{s-1},Y_{s-1})\|_{\cls_2}^2 \right]
&=
\int_{\RR^2} \left\|
\int_{\RR^2}
\Phi_{12}(u,v)
\left(P^h((x,y), (du, dv))-\pi(du, dv)\right)
\right\|_{\cls_2}^2 \pi(dx,dy),
\label{eq:conditional_mean_Delta_t_squared_integral}
\end{align}
which completes the proof of the identity in \eqref{eq:EpiDeltatS2momentoffdiagonalfirstline}.

Using \eqref{al:kappabound} and \eqref{eq:EpiDeltatS2momentoffdiagonal}, we see that, with $\widetilde{\rho} \doteq \rho^{1/2} \in (0,1)$, 
\begin{align}
    \frac{2}{T^2}
\sum_{1\le s<t\le T} \EE_\pi
[\langle \Delta_s,\Delta_t\rangle_{\cls_2}]
&\le  \frac{2}{T^2} ( 4 \kappa^4)^{1/2}  \sum_{1 \le s < t \le T}(4 C_{\pi} \kappa^4 \rho^{|s-t|})^{1/2}\\
&=  \frac{8 \kappa^4 C_{\pi}^{1/2}}{T} \frac{1}{T} \sum_{h=1}^{T-1} (T-h)\widetilde{\rho}^h
\label{eq:boundsumoffdiagonalDeltainnerproduct}
\\&=
\frac{8 \kappa^4 C_{\pi}^{1/2}}{T}  \frac{ \widetilde{\rho }}{1 - \widetilde{\rho}} + o(1),
\label{eq:boundsumoffdiagonalDeltainnerproduct+1}
\end{align}
as $ T \to \infty$, 
Combining \eqref{eq:proofoftheorem42:eq4}, \eqref{eq:EpiDeltatS2momentdiagonal}, \eqref{eq:boundsumoffdiagonalDeltainnerproduct+1}, we obtain
\begin{align}
\EE_\pi [ \|\widehat L_{12}-L_{12}\|_{\cls_2}^2 ]
&\le
\frac{4\kappa^4}{T}
+
\frac{8 \kappa^4 C_{\pi}^{1/2}}{T}
\left( \frac{\widetilde{\rho}}{1-\widetilde{\rho}}+o(1)\right)
=O(T^{-1}).
\end{align}
Jensen's inequality then gives
\begin{equation}
\EE_\pi [\|\widehat L_{12}-L_{12}\|_{\cls_2}]
=
O(T^{-1/2}).
\end{equation}
Since the operator norm is bounded above by the Hilbert--Schmidt norm,
\begin{equation}
\EE_\pi [\|\widehat L_{12}-L_{12}\|_{\op}]
\le
\EE_\pi [\|\widehat L_{12}-L_{12}\|_{\cls_2}]
=
O(T^{-1/2}).
\end{equation}
The arguments for $\widehat L_{11}$, $\widehat L_{21}^c$, and $\widehat L_{22}^c$ are similar.
\end{proof}

\begin{corollary}
\label{cor:aux_operator_conv_prob}
Suppose Assumptions \ref{assump:bounded_dynamics}--\ref{assump:gaussian} hold. For each pair
\begin{equation}
(\widehat L,L)\in
\{(\widehat L_{11},L_{11}), (\widehat L_{12},L_{12}),
(\widehat L_{21}^c,L_{21}^c), (\widehat L_{22}^c,L_{22}^c)\},
\end{equation}
as $T \to \infty$,
\begin{equation}
\|\widehat L-L\|_{\cls_2}
=
O_{\PP}(T^{-1/2}),
\qquad
\|\widehat L-L\|_{\op}
=
O_{\PP}(T^{-1/2}).
\end{equation}
\end{corollary}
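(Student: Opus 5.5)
This follows from Proposition~\ref{prop:aux_operator_convergence} via Markov's inequality. Fix a pair $(\widehat L,L)$ from the list. Assumptions~\ref{assump:bounded_dynamics}--\ref{assump:gaussian} contain the hypotheses of that proposition. Hence there exist constants $C<\infty$ and $T_0$ such that $\EE_\pi[\|\widehat L-L\|_{\cls_2}]\le C T^{-1/2}$ for all $T\ge T_0$.

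Given $\eta>0$, set $M=C/\eta$. For $T\ge T_0$, Markov's inequality gives
\begin{equation}
\PP\bigl(\sqrt T\,\|\widehat L-L\|_{\cls_2}>M\bigr)\le \frac{\sqrt T\,\EE_\pi[\|\widehat L-L\|_{\cls_2}]}{M}\le \eta .
\end{equation}
This is exactly the statement $\|\widehat L-L\|_{\cls_2}=O_\PP(T^{-1/2})$.

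The observed sample is initialized at $\pi$. Therefore $\EE_\pi$ coincides with $\EE$ under the data-generating law, and no change of measure is needed.

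For the operator-norm statement, use $\|A\|_{\op}\le\|A\|_{\cls_2}$, which holds because
\begin{equation}
\|Ah\|^2=\sum_i\langle Ah,f_i\rangle^2\le\|h\|^2\|A\|_{\cls_2}^2
\end{equation}
for any orthonormal basis $\{f_i\}$. Alternatively, apply Markov's inequality directly to the second bound in Proposition~\ref{prop:aux_operator_convergence}.

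The only point needing care is that Proposition~\ref{prop:aux_operator_convergence} treats $\widehat L_{12}$ explicitly and asserts the other three pairs by analogy. I would note that the argument carries over with no real obstacle. The feature maps $x\mapsto K_1(x,\cdot)\otimes K_1(x,\cdot)$, $(x,y)\mapsto K_2^c((x,y),\cdot)\otimes K_1(x,\cdot)$ and $(x,y)\mapsto K_2^c((x,y),\cdot)\otimes K_2^c((x,y),\cdot)$ are bounded in $\cls_2$ by $\kappa^2$ under Assumption~\ref{assump:bounded_kernel}. Their centered averages therefore obey the same second-moment bound, with the covariance terms controlled by \eqref{eq:geom_ergodicity}.
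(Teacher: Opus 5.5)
Your proof is correct and takes the same approach as the paper, which gives no separate proof of this corollary and treats it as a direct consequence of the expectation bounds in Proposition~\ref{prop:aux_operator_convergence}; your Markov-inequality step, together with $\|\cdot\|_{\op}\le\|\cdot\|_{\cls_2}$ for the operator norm, is exactly that consequence. Your check that the three remaining rank-one feature maps are bounded by $\kappa^2$ in $\cls_2$ correctly fills in the ``similar arguments'' that the paper leaves to the reader.
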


\begin{lemma}
\label{lem:subgaussian-truncated-second-moment}
Let $\varepsilon$ be a Gaussian random variable. Then, there exist constants
$C_1,C_2>0$ such that, for all $x\ge 0$,
\begin{equation}
\EE\left[
\varepsilon^2 1_{\{|\varepsilon|\ge x\}}
\right]
\le
C_1 \exp(-C_2 x^2).
\end{equation}
\end{lemma}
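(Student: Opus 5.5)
The plan is to bound the Gaussian tail directly. Let $\varepsilon\sim\cln(0,s^2)$ with $s>0$; a possible nonzero mean can be handled the same way after a shift, but under Assumption~\ref{assump:gaussian} the innovations are centered. The degenerate case $s=0$ is trivial, since the expectation vanishes for $x>0$ and equals $0$ at $x=0$ as well. By symmetry,
\begin{equation}
\EE\left[\varepsilon^2 1_{\{|\varepsilon|\ge x\}}\right]
=
\frac{2}{\sqrt{2\pi}\,s}\int_x^\infty u^2 e^{-u^2/(2s^2)}\,du .
\end{equation}
The aim is to bound this integral by a constant times $e^{-x^2/(4s^2)}$ uniformly in $x\ge0$. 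This gives the claim with $C_2=1/(4s^2)$.

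First I would absorb the polynomial factor. For all $u\ge0$,
\[
u^2 e^{-u^2/(4s^2)}\le \sup_{v\ge0} v^2e^{-v^2/(4s^2)}=\frac{4s^2}{e},
\]
which follows by maximizing at $v^2=4s^2$. Hence
\[
u^2e^{-u^2/(2s^2)}\le \frac{4s^2}{e}\,e^{-u^2/(4s^2)} .
\]

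Next I would handle the Gaussian tail integral. For $u\ge x\ge0$ we have $u^2\ge x^2/2+u^2/2$, so
\[
e^{-u^2/(4s^2)}\le e^{-x^2/(8s^2)}e^{-u^2/(8s^2)}.
\]
Integrating over $[x,\infty)$ gives at most $e^{-x^2/(8s^2)}\sqrt{2\pi}\,s$. Putting the two steps together yields
\[
\EE\left[\varepsilon^2 1_{\{|\varepsilon|\ge x\}}\right]\le C_1 e^{-x^2/(8s^2)}, \qquad C_1=\frac{8s^2}{e}.
\]
The final constant is therefore $C_2=1/(8s^2)$ rather than $1/(4s^2)$; any positive $C_2$ suffices for the statement.

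There is no real obstacle. The only care needed is that the constants are uniform over all $x\ge0$, including small $x$, where the bound $C_1$ still dominates $\EE[\varepsilon^2]=s^2$. This holds because $8/e>1$. Alternatively, Cauchy--Schwarz with $\PP(|\varepsilon|\ge x)\le 2e^{-x^2/(2s^2)}$ and $\EE\varepsilon^4=3s^4$ gives the bound $\sqrt6\,s^2e^{-x^2/(4s^2)}$ immediately. I would use this one-line route as the main argument.
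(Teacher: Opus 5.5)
Your proposal is correct, and the route you designate as your main argument (Cauchy--Schwarz with $\EE\varepsilon^4<\infty$ and a sub-Gaussian bound on $\PP(|\varepsilon|\ge x)$) is exactly the paper's proof; your version additionally makes the constants explicit, $C_1=\sqrt6\,s^2$ and $C_2=1/(4s^2)$. Your direct integral computation giving $C_1=8s^2/e$ and $C_2=1/(8s^2)$ is also valid, but it is not needed.
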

\begin{proof}
By Cauchy--Schwarz, for some constants $C_1,C_2>0$, 
\begin{equation}
\EE\left[\varepsilon^2 1_{\{|\varepsilon|\ge x\}}\right]
\le
\left(\EE[\varepsilon^4]\right)^{1/2}
\left(\PP(|\varepsilon|\ge x)\right)^{1/2}
\le
C_1\exp(-C_2x^2),
\end{equation}
since $\varepsilon$ is Gaussian, $\EE[\varepsilon^4]<\infty$ and there exist constants
$C,c>0$ such that
$\PP(|\varepsilon|\ge x)\le C\exp(-cx^2)$, $x\ge0$.
\end{proof}

\section{Proofs of main results in Section \ref{sec:full_sample_empirical_centering}}
\label{app:full_sample_empirical_centering}

This appendix proves Proposition~\ref{prop:full_centering_risk_rate},
Theorem~\ref{thm:full_oracle_equivalence}, and
Propositions~\ref{prop:full_covariance_convergence} and
\ref{prop:full_tuning_choices}. We first establish the operator bound
\eqref{eq:full_app_master_rate} in Lemma~\ref{lem:full_exact_factorization}.
Combined with the empirical covariance comparison in
Lemma~\ref{lem:full_empirical_loewner}, it controls the centering
error through \eqref{eq:full_app_centering_estimation_bound}. Combined with
the rate in \eqref{eq:full_app_W_rate}, it controls the innovation-weighted
centering error through \eqref{eq:full_app_innovation_operator_product}.
These bounds control the three remainder terms in
\eqref{eq:full_app_three_remainders}.

Recall $A_\lambda$ and $\widehat A_{\lambda,T}$ from \eqref{eq:full_population_krr_centering} and \eqref{eq:full_empirical_krr_operator} to define the regularized KRR operator error
\begin{equation}
E_{\lambda,T}
\doteq
\widehat A_{\lambda,T}-A_\lambda.
\label{eq:full_app_E_def}
\end{equation}

\begin{proof}[Proof of Proposition~\ref{prop:full_centering_risk_rate}]

Define the transformed covariance error
\begin{equation}
B_{\lambda,T}
\doteq
(L_{11}+\lambda I_{\clh_1})^{-1/2}
(\widehat L_{11}-L_{11})
(L_{11}+\lambda I_{\clh_1})^{-1/2}.
\label{eq:full_app_B_def}
\end{equation}
We work on the event $\{\|B_{\lambda,T}\|_{\op}\leq1/2\}$ and separately bound the probability of its complement. By Lemma~\ref{lem:B-control}, there exists a constant \(C<\infty\),
independent of \(T\) and \(\lambda\), such that
\begin{equation}
\mathbb{P}\!\left(
    \|B_{\lambda,T}\|_{\op}>\frac12
\right)
\leq
\frac{C}{T\lambda^{2\gamma_\psi}}.
\end{equation}
Moreover, Assumption~\ref{ass:full_empirical_centering_tuning} implies
\(T\lambda^{2\gamma_\psi}\to\infty\). Hence
\(\mathbb{P}(\|B_{\lambda,T}\|_{\op}\leq1/2)\to1\), and it suffices to
establish the desired bound on this event.

Recall $E_{\lambda,T}$ from \eqref{eq:full_app_E_def} and define the map $r_{\lambda} : \RR \to \widetilde{\clh}_2$ by
\begin{align}
    r_\lambda(x) \doteq m_\lambda (x) -m (x), \qquad x \in \RR.
    \label{eq:r_lambda_def_1}
\end{align}
By \eqref{eq:full_empirical_krr_operator}, for every
$t = 1, \dots, T$,
\begin{equation}
\widehat m_T(X_{t-1})-m(X_{t-1})
=
E_{\lambda,T}K_1(X_{t-1},\cdot)+r_\lambda(X_{t-1}).
\label{eq:full_app_centering_error_decomposition}
\end{equation}
We consider the two terms in \eqref{eq:full_app_centering_error_decomposition} separately. 
For the first summand in \eqref{eq:full_app_centering_error_decomposition}, on the event $\{\|B_{\lambda,T}\|_{\op}\leq1/2\}$,
\begin{align}
\frac1T\sum_{t=1}^T
\|E_{\lambda,T}K_1(X_{t-1},\cdot)\|_{\widetilde\clh_2}^2
&=
\operatorname{tr}
\left(E_{\lambda,T}\widehat L_{11}E_{\lambda,T}^*\right)
\label{eq:full_app_centering_estimation_bound-2}\\
&\leq
\frac32
\|E_{\lambda,T}(L_{11}+\lambda I_{\clh_1})^{1/2}\|_{\cls_2}^2
\label{eq:full_app_centering_estimation_bound-1}\\
&=
O_\PP\!\left(T^{-1}\lambda^{-(\gamma_\psi+1)}\right),
\label{eq:full_app_centering_estimation_bound}
\end{align}
where \eqref{eq:full_app_centering_estimation_bound-2} is due to the definition of the $\widetilde\clh_2$-norm and the reproducing property. Since we work on the event $\{\|B_{\lambda,T}\|_{\op}\leq1/2\}$, \eqref{eq:full_app_centering_estimation_bound-1} follows from 
Lemma~\ref{lem:full_empirical_loewner}.
Finally, \eqref{eq:full_app_centering_estimation_bound} follows from
Lemma~\ref{lem:full_exact_factorization}. The final condition in
\eqref{eq:full_tuning_conditions} implies
$T\lambda^{2\gamma_\psi}\to\infty$, since $\lambda<1$ eventually, so the
lemma applies.

For the bias term in \eqref{eq:full_app_centering_error_decomposition}, stationarity and \eqref{eq:full_centering_bias} imply
\begin{equation}
\EE_\pi\!\left[
\frac1T\sum_{t=1}^T\|r_\lambda(X_{t-1})\|_{\widetilde\clh_2}^2
\right]
=b_{\lambda,c}^2.
\label{eq:identity_expectation_r_lambda_b_lambda_c_norm}
\end{equation}
Markov's inequality therefore gives
\begin{equation}
\frac1T\sum_{t=1}^T
\|r_\lambda(X_{t-1})\|_{\widetilde\clh_2}^2
=O_\PP(b_{\lambda,c}^2).
\label{eq:full_app_centering_bias_bound}
\end{equation}
Combining \eqref{eq:full_app_centering_error_decomposition}--\eqref{eq:full_app_centering_bias_bound} with
$\|u+v\|^2\leq2\|u\|^2+2\|v\|^2$ proves the first equality in
\eqref{eq:full_centering_risk_rate}. Finally,
\begin{equation}
Ta_T^2\left(
T^{-1}\lambda^{-(\gamma_\psi+1)}+b_{\lambda,c}^2
\right)
=
a_T^2\lambda^{-(\gamma_\psi+1)}+Ta_T^2b_{\lambda,c}^2
\longrightarrow0
\end{equation}
by Assumption~\ref{ass:full_empirical_centering_tuning}. This proves the
second equality in \eqref{eq:full_centering_risk_rate} and completes the proof.
\end{proof}

\begin{proof}[Proof of Theorem~\ref{thm:full_oracle_equivalence}]
We first show that
\begin{equation}
\widehat Z_T^{\mathrm{res}}
=
Z_T^{0}+o_\PP(1)
\qquad\text{in }\widetilde\clh_2,
\qquad
Z_T^{0}
\doteq
\frac1{\sqrt T}\sum_{t=1}^T\varepsilon_{X,t}k_t^c
\label{eq:full_oracle_equivalence}
\end{equation}
with $k_t^c$ as in \eqref{eq:full_centered_sections}.
Proposition~\ref{prop:aux_FCLT} then gives
\(Z_T^0\xrightarrow{\distr}\calG_c\).

Define
\begin{equation}
\Delta_{t,T}
\doteq
\widehat m_T(X_{t-1})-m(X_{t-1}),
\label{eq:Delta_t_T_definition}
\end{equation}
and recall $k_t^c$ and $\widehat k_{t,T}^c$ from
\eqref{eq:full_centered_sections}.
Under the null, $\widehat e_t=\varepsilon_{X,t}-( \widehat f-f_X )(X_{t-1})$ and
$\widehat k_{t,T}^c=k_t^c-\Delta_{t,T}$. Expanding the product gives the main
reduction
\begin{align}
\widehat Z_T^{\mathrm{res}}-Z_T^{0}
&=
-\frac1{\sqrt T}\sum_{t=1}^T\varepsilon_{X,t}\Delta_{t,T}
-\frac1{\sqrt T}\sum_{t=1}^T( \widehat f-f_X )(X_{t-1})k_t^c
+
\frac1{\sqrt T}\sum_{t=1}^T( \widehat f-f_X )(X_{t-1})\Delta_{t,T}.
\label{eq:full_app_three_remainders}
\end{align}
We consider the three summands separately.

The first summand is $o_\PP(1)$ by
Lemma~\ref{lem:full_innovation_weighted_error}. For the second summand in \eqref{eq:full_app_three_remainders}, the operator
definitions \eqref{eq:full_app_Cc_def}--\eqref{eq:full_app_Cchat_def} and
$\widetilde L_{21}^c=0$ give
\begin{align}
\left\|
\frac1{\sqrt T}\sum_{t=1}^T( \widehat f-f_X )(X_{t-1})k_t^c
\right\|_{\widetilde\clh_2}
&=
\|\sqrt T(\widehat{\widetilde L}_{21,T}^c-\widetilde L_{21}^c)( \widehat f-f_X )\|_{\widetilde\clh_2}
\notag\\
&\leq
\sqrt T\|\widehat{\widetilde L}_{21,T}^c-\widetilde L_{21}^c\|_{\op}
\| \widehat f-f_X \|_{\clh_1}
=o_\PP(1),
\label{eq:full_app_second_remainder}
\end{align}
where Lemma~\ref{lem:full_cross_operator_rate} and
Assumption~\ref{ass:fhattau-f} are used in the final equality.

For the third summand in \eqref{eq:full_app_three_remainders}, the triangle and Cauchy--Schwarz inequalities give
\begin{align}
\left\|
\frac1{\sqrt T}\sum_{t=1}^T
( \widehat f-f_X )(X_{t-1})\Delta_{t,T}
\right\|_{\widetilde\clh_2}
&\leq
\sqrt T\| \widehat f-f_X \|_\infty
(\widehat R_{c,T})^{1/2}
=o_\PP(1),
\label{eq:full_app_third_remainder}
\end{align}
by Assumption~\ref{ass:fhattau-f} and
Proposition~\ref{prop:full_centering_risk_rate}. Combining
\eqref{eq:full_app_three_remainders}--\eqref{eq:full_app_third_remainder}
proves \eqref{eq:full_oracle_equivalence}.

Under the identification induced by
\eqref{eq:full_oracle_centered_kernel}, Proposition~\ref{prop:aux_FCLT} applied to the full-sample construction gives
$Z_T^{0}\xrightarrow{\distr}\calG_c$ with covariance
$\sigma^2L_c$. Slutsky's theorem proves the second convergence in
\eqref{eq:full_embedding_limit}, and the continuous mapping theorem gives
the convergence of the squared norm.

It remains to identify the scalar limit. Since
$\|k_1^c\|_{\widetilde\clh_2}\leq2\kappa_2$,
\begin{equation}
\operatorname{tr}(L_c)
=
\EE_\pi\|k_1^c\|_{\widetilde\clh_2}^2
\leq4\kappa_2^2<\infty.
\end{equation}
Thus $L_c$ is positive, self-adjoint, and trace class. Let
$\{e_j^c\}$ be an orthonormal eigenbasis on the closure of its range, with
$L_ce_j^c=\mu_j^ce_j^c$. The Gaussian coordinates are independent and
satisfy
\begin{equation}
\langle\calG_c,e_j^c\rangle_{\widetilde\clh_2}
\overset{\distr}{=}
\sigma(\mu_j^c)^{1/2}N_j.
\end{equation}
The Gaussian element vanishes on $\ker(L_c)$ almost surely. Parseval's
identity therefore gives
\begin{equation}
\|\calG_c\|_{\widetilde\clh_2}^2
\overset{\distr}{=}
\sigma^2\sum_{j=1}^\infty\mu_j^cN_j^2.
\end{equation}
The series is finite almost surely because it is nonnegative and has finite
expectation $\sigma^2\operatorname{tr}(L_c)$. This proves
\eqref{eq:full_weighted_chisq} and completes the proof.
\end{proof}

\begin{proof}[Proof of Proposition~\ref{prop:full_covariance_convergence}]
Recall $k_t^c$ from \eqref{eq:full_centered_sections} and define the oracle empirical covariance operator
\begin{equation}
L_{c,T}^0 \doteq \frac{1}{T} \sum_{t=1}^{T} k_t^c \otimes k_t^c,
\quad
\text{ so that } 
\quad
L_{c,T}^0h
=
\frac1T\sum_{t=1}^T
\langle h,k_t^c\rangle_{\widetilde\clh_2}k_t^c, \qquad h \in \widetilde{\clh}_2.
\label{eq:full_app_oracle_empirical_covariance}
\end{equation}
An argument similar to the one used in the proof of 
Lemma~\ref{lem:full_cross_operator_rate}, applied to
$k_t^c\otimes k_t^c$, gives
\begin{equation}
\|L_{c,T}^0-L_c\|_{\cls_2}=O_\PP(T^{-1/2})
\label{eq:full_app_oracle_covariance_rate}
\end{equation}
where $L_c$ is defined in \eqref{eq:full_Lc_def}.

Using \eqref{eq:full_centered_sections} and \eqref{eq:Delta_t_T_definition}, note that we can write 
\begin{equation}
    \widehat k_{t,T}^c=k_t^c-\Delta_{t,T}.\label{eq:represent_k_hat_c_t_T_k_t_c_Delta_t_T}
\end{equation}
Observe that
\begin{align}
\|\widehat k_{t,T}^c\otimes\widehat k_{t,T}^c-k_t^c\otimes k_t^c\|_{\cls_2}
&=
\|
-\Delta_{t,T}\otimes\widehat k_{t,T}^c
-
k_t^c\otimes\Delta_{t,T}
\|_{\cls_2}
\\&\leq
\|\Delta_{t,T}\|_{\widetilde\clh_2}
\left(
\|\widehat k_{t,T}^c\|_{\widetilde\clh_2}
+\|k_t^c\|_{\widetilde\clh_2}
\right),
\label{eq:full_app_rank_one_difference}
\end{align}
where we added and subtracted \(k_t^c\otimes\widehat k_{t,T}^c\) and used the triangle inequality and the rank-one identity
\(\|u\otimes v\|_{\cls_2}=\|u\|_{\widetilde\clh_2}
\|v\|_{\widetilde\clh_2}\). By \eqref{eq:full_feasible_covariance} and
\eqref{eq:full_app_oracle_empirical_covariance},
\begin{align}
\|\widehat L_{c,T}-L_{c,T}^0\|_{\cls_2}
&\leq
\frac1T\sum_{t=1}^T
\|\widehat k_{t,T}^c\otimes\widehat k_{t,T}^c
-k_t^c\otimes k_t^c\|_{\cls_2}
\notag\\
&\leq
\frac1T\sum_{t=1}^T
\|\Delta_{t,T}\|_{\widetilde\clh_2}
\left(
\|\widehat k_{t,T}^c\|_{\widetilde\clh_2}
+\|k_t^c\|_{\widetilde\clh_2}
\right)
\notag\\
&\leq
(\widehat R_{c,T})^{1/2}
\left[
\frac1T\sum_{t=1}^T
\left(
\|\widehat k_{t,T}^c\|_{\widetilde\clh_2}
+\|k_t^c\|_{\widetilde\clh_2}
\right)^2
\right]^{1/2},
\label{eq:full_app_feasible_oracle_covariance_bound}
\end{align}
where the second inequality follows from
\eqref{eq:full_app_rank_one_difference}, and the final inequality follows
from the Cauchy--Schwarz inequality, 
\eqref{eq:full_centering_risk}, and \eqref{eq:Delta_t_T_definition}. Using Assumption~\ref{assump:bounded_kernel} and Jensen's inequality, we see that 
\[
\|m(X_{t-1})\|_{\widetilde\clh_2}\leq\kappa_2,
\qquad
\|k_t^c\|_{\widetilde\clh_2}\leq2\kappa_2.
\]
The previous display and  \eqref{eq:represent_k_hat_c_t_T_k_t_c_Delta_t_T} then imply that 
\begin{align}
\frac1T\sum_{t=1}^T
\left(
\|\widehat k_{t,T}^c\|_{\widetilde\clh_2}
+\|k_t^c\|_{\widetilde\clh_2}
\right)^2
&\leq
\frac1T\sum_{t=1}^T
\left(4\kappa_2+\|\Delta_{t,T}\|_{\widetilde\clh_2}\right)^2
\notag\\
&\leq
32\kappa_2^2
+\frac2T\sum_{t=1}^T
\|\Delta_{t,T}\|_{\widetilde\clh_2}^2
\notag\\
&=
32\kappa_2^2+2\widehat R_{c,T}.
\label{eq:full_app_centered_section_second_factor}
\end{align}
Note that  \eqref{eq:full_centering_risk_rate} from Proposition~\ref{prop:full_centering_risk_rate}  and the condition that 
$Ta_T^2\to\infty$ in
Assumption~\ref{ass:full_empirical_centering_tuning} imply that
$\widehat R_{c,T}=o_\PP(1)$. This, together with \eqref{eq:full_app_centered_section_second_factor}, shows that 
\begin{align}
    \frac1T\sum_{t=1}^T
\left(
\|\widehat k_{t,T}^c\|_{\widetilde\clh_2}
+\|k_t^c\|_{\widetilde\clh_2}
\right)^2 = O_{\PP}(1).
\end{align}
The previous display, together with 
\eqref{eq:full_app_feasible_oracle_covariance_bound} and the fact that $\widehat R_{c,T}=o_\PP(1)$ yields
\begin{equation}
\|\widehat L_{c,T}-L_{c,T}^0\|_{\cls_2}
=o_\PP(1).
\label{eq:full_app_feasible_oracle_covariance}
\end{equation}
Combining \eqref{eq:full_app_oracle_covariance_rate} and \eqref{eq:full_app_feasible_oracle_covariance}, we see that 
\begin{equation}
\|\widehat L_{c,T}-L_c\|_{\cls_2}
\leq
\|\widehat L_{c,T}-L_{c,T}^0\|_{\cls_2}
+
\|L_{c,T}^0-L_c\|_{\cls_2}
=o_\PP(1),
\end{equation}
which proves the first statement in
\eqref{eq:full_covariance_convergence}.

We prove the second statement in \eqref{eq:full_covariance_convergence}. First, recall that, for any
$u\in\widetilde\clh_2$, the operator
$u\otimes u$  satisfies
\begin{equation}
\operatorname{tr}(u\otimes u)
=
\|u\|_{\widetilde\clh_2}^2,
\label{eq:full_app_rank_one_trace}
\end{equation}
 Therefore, by \eqref{eq:full_feasible_covariance},
\eqref{eq:full_app_oracle_empirical_covariance}, and linearity of the trace,
\begin{equation}
\operatorname{tr}(\widehat L_{c,T})
=
\frac1T\sum_{t=1}^T
\|\widehat k_{t,T}^c\|_{\widetilde\clh_2}^2,
\qquad
\operatorname{tr}(L_{c,T}^0)
=
\frac1T\sum_{t=1}^T
\|k_t^c\|_{\widetilde\clh_2}^2.
\label{eq:full_app_empirical_trace_identities}
\end{equation}
Using \eqref{eq:represent_k_hat_c_t_T_k_t_c_Delta_t_T} and the inequality $\big|\|u\|^2-\|v\|^2\big|
\leq
\|u-v\|(\|u\|+\|v\|),$
we obtain
\begin{align}
\left|
\operatorname{tr}(\widehat L_{c,T})
-
\operatorname{tr}(L_{c,T}^0)
\right|
&\leq
\frac1T\sum_{t=1}^T
\|\Delta_{t,T}\|_{\widetilde\clh_2}
\left(
\|\widehat k_{t,T}^c\|_{\widetilde\clh_2}
+
\|k_t^c\|_{\widetilde\clh_2}
\right)
\notag\\
&\leq
(\widehat R_{c,T})^{1/2}
\left[
\frac1T\sum_{t=1}^T
\left(
\|\widehat k_{t,T}^c\|_{\widetilde\clh_2}
+
\|k_t^c\|_{\widetilde\clh_2}
\right)^2
\right]^{1/2}
\notag\\
&=
o_\PP(1),
\label{eq:full_app_trace_difference}
\end{align}
where the second inequality uses the Cauchy--Schwarz inequality and
\eqref{eq:full_centering_risk}, and the final identity  follows from
\eqref{eq:full_centering_risk_rate} and
\eqref{eq:full_app_centered_section_second_factor}. Since the map 
\begin{align}
    (x,y) \mapsto  \| \widetilde{K}_2 ( (x,y), \cdot) - m(x)\|_{\widetilde{\clh}_2}, \qquad (x,y) \in \RR^2,
\end{align}
is bounded and  measurable and $\{(X_t,Y_t)\}$ is ergodic, the ergodic theorem for Markov chains ensures that 
\begin{equation}
\frac1T\sum_{t=1}^T
\|k_t^c\|_{\widetilde\clh_2}^2
\rightarrow 
\EE_\pi\|k_1^c\|_{\widetilde\clh_2}^2, \quad \text{a.s.}
\label{eq:full_app_oracle_trace_limit}
\end{equation}
The second identity in \eqref{eq:full_app_empirical_trace_identities}, together with \eqref{eq:full_app_oracle_trace_limit}, then  implies that  
\begin{align}
\operatorname{tr}(L_{c,T}^0) \to \EE_{\pi} \|k_1^c\|_{\widetilde{\clh}_2}^2, \quad \text{a.s.}
\label{eq:trace_L_c_T_0_converge_ergodic}
\end{align}
Finally, \eqref{eq:full_Lc_def}, Tonelli's theorem, and
\eqref{eq:full_app_rank_one_trace} imply that
\begin{equation}
\operatorname{tr}(L_c)
=
\EE_\pi\!\left[
\operatorname{tr}(k_1^c\otimes k_1^c)
\right]
=
\EE_\pi\|k_1^c\|_{\widetilde\clh_2}^2.
\label{eq:full_app_population_trace_identity}
\end{equation}
Combining \eqref{eq:full_app_trace_difference}, \eqref{eq:trace_L_c_T_0_converge_ergodic}, and 
\eqref{eq:full_app_population_trace_identity} we see that 
$\operatorname{tr}(\widehat L_{c,T})
\xrightarrow{\PP}
\operatorname{tr}(L_c),$
which completes the proof of the second statement in \eqref{eq:full_covariance_convergence}.

We prove the final statement of the proposition, namely that the nonzero
eigenvalues of $\widehat L_{c,T}$ are the eigenvalues of
$\widehat K_2^c/T$. Define
$\Phi_T:\RR^T\to\widetilde\clh_2$ and its adjoint $\Phi_T^*:\widetilde\clh_2\to\RR^T$ by
\[
\Phi_Ta=\sum_{t=1}^T a_t\widehat k_{t,T}^c,
\quad a\in\RR^T,
\quad \text{ and }
\quad
\Phi_T^*h
=
\left(
\langle \widehat k_{1,T}^c,h\rangle_{\widetilde\clh_2},
\ldots,
\langle \widehat k_{T,T}^c,h\rangle_{\widetilde\clh_2}
\right)^\top,
\quad h\in\widetilde\clh_2.
\]
Then,
$\widehat K_2^c=\Phi_T^*\Phi_T$ and
$\widehat L_{c,T}=T^{-1}\Phi_T\Phi_T^*$. Since the nonzero
eigenvalues of $\Phi_T^*\Phi_T/T$ and $\Phi_T\Phi_T^*/T$ coincide, the
result follows.
\end{proof}

\begin{proof}[Proof of Proposition~\ref{prop:full_tuning_choices}]
We verify the three product conditions in
Assumption~\ref{ass:full_empirical_centering_tuning}.

For logarithmic tuning, let $\lambda_{T,K}=(\log T)^{-r}$. Then
$Ta_T^2=(\log T)^{2\zeta_f}\to\infty$, and
\begin{align}
Ta_T^2b_{\lambda_{T,K},c}^2
&=
O\!\left((\log T)^{2\zeta_f-r\beta_c}\right)
\longrightarrow0,\\
a_T^2\lambda_{T,K}^{-(\gamma_\psi+1)}
&=
\frac{(\log T)^{2\zeta_f+r(\gamma_\psi+1)}}{T}
\longrightarrow0,\\
T\lambda_{T,K}^{2\gamma_\psi+1}
&=
\frac{T}{(\log T)^{r(2\gamma_\psi+1)}}
\longrightarrow\infty.
\end{align}
The first convergence uses $r\beta_c>2\zeta_f$.

For polynomial tuning, let $\lambda_{T,K}=T^{-q}$ with
$0<q<(2\gamma_\psi+1)^{-1}$. Then
\begin{align}
Ta_T^2b_{\lambda_{T,K},c}^2
&=
O\!\left((\log T)^{2\zeta_f}T^{-q\beta_c}\right)
\longrightarrow0,\\
a_T^2\lambda_{T,K}^{-(\gamma_\psi+1)}
&=
(\log T)^{2\zeta_f}
T^{-1+q(\gamma_\psi+1)}
\longrightarrow0,\\
T\lambda_{T,K}^{2\gamma_\psi+1}
&=
T^{1-q(2\gamma_\psi+1)}
\longrightarrow\infty.
\end{align}
The imposed range for $q$ implies
$q<(\gamma_\psi+1)^{-1}$, so the exponent in the second line is negative.
All conditions in \eqref{eq:full_tuning_conditions} follow.
\end{proof}

\section{Auxiliary empirical centering results and their proofs}

For $(x,y)\in\RR^2$ and $h\in\clh_1$, define the rank-one operators
\begin{align}
[\Xi_\lambda(x,y)]h
&\doteq
\langle h,\psi_{1,\lambda}(x)\rangle_{\clh_1}
\bigl(\widetilde K_2((x,y),\cdot)-m_\lambda(x)\bigr),
\label{eq:full_app_Xi_def}\\
P_\lambda(x)h
&\doteq
\langle h,\psi_{1,\lambda}(x)\rangle_{\clh_1}
\psi_{1,\lambda}(x).
\label{eq:full_app_P_def}
\end{align}
Let
\begin{align}
D_{1,\lambda,T}
&\doteq
\left\|
\frac1T\sum_{t=1}^T
\left(
\Xi_\lambda(X_{t-1},Y_{t-1})
-\EE_\pi\Xi_\lambda(X_0,Y_0)
\right)
\right\|_{\cls_2},
\label{eq:full_app_D1_def}\\
D_{0,\lambda,T}
&\doteq
\left\|
\frac1T\sum_{t=1}^T
\left(
P_\lambda(X_{t-1})-\EE_\pi P_\lambda(X_0)
\right)
\right\|_{\op}.
\label{eq:full_app_D0_def}
\end{align}

\begin{lemma}
\label{lem:full_fluctuation_rates}
Under Assumptions~\ref{assump:bounded_dynamics}--\ref{assump:gaussian} and
the embedding bound \eqref{eq:full_embedding_bound},
\begin{align}
D_{1,\lambda,T}
&=
O_\PP\!\left(
T^{-1/2}\lambda^{-(\gamma_\psi+1)/2}
\right),
\label{eq:full_app_D1_rate}\\
D_{0,\lambda,T}
&=
O_\PP\!\left(
T^{-1/2}\lambda^{-\gamma_\psi}
\right).
\label{eq:full_app_D0_rate}
\end{align}
\end{lemma}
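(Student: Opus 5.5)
The plan is to reuse the second-moment argument from the proof of Proposition~\ref{prop:aux_operator_convergence}, now keeping track of how the envelopes depend on $\lambda$. For $D_{1,\lambda,T}$, write $\Delta_t^{\Xi}\doteq\Xi_\lambda(X_{t-1},Y_{t-1})-\EE_\pi\Xi_\lambda(X_0,Y_0)$, viewed as an element of $\cls_2(\clh_1,\widetilde\clh_2)$. Expanding $\EE_\pi\|T^{-1}\sum_t\Delta_t^{\Xi}\|_{\cls_2}^2$ into diagonal and off-diagonal parts gives the same structure as in \eqref{eq:proofoftheorem42:eq4}. Bounding this second moment by $O(T^{-1}\lambda^{-(\gamma_\psi+1)})$ and then applying Markov's inequality yields \eqref{eq:full_app_D1_rate}.

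\emph{Diagonal terms.} The rank-one identity gives
\[
\|\Xi_\lambda(x,y)\|_{\cls_2}=\|\psi_{1,\lambda}(x)\|_{\clh_1}\,\|\widetilde K_2((x,y),\cdot)-m_\lambda(x)\|_{\widetilde\clh_2}.
\]
The first factor is at most $C_\psi\lambda^{-\gamma_\psi/2}$ by \eqref{eq:full_embedding_bound}. For the second factor, a uniform bound on $m_\lambda$ is too crude: $\|A_\lambda\|_{\op}$ may grow like $\lambda^{-1/2}$. Instead I would bound it in mean square. Writing $m_\lambda(x)=A_\lambda(L_{11}+\lambda)^{1/2}\psi_{1,\lambda}(x)$, one has
\[
\EE_\pi\|m_\lambda(X_0)\|^2=\operatorname{tr}\bigl(A_\lambda L_{11}A_\lambda^*\bigr)\le\|\widetilde L_{21}(L_{11}+\lambda)^{-1/2}\|_{\cls_2}^2 .
\]
Since $\widetilde L_{21}$ is the cross-covariance, this quantity is bounded by $\EE\|\widetilde K_2\|^2\le\kappa_2^2$, because $(L_{11}+\lambda)^{-1/2}$ acts as a contraction of the regression projection. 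Combining these gives
\[
\EE_\pi\|\Xi_\lambda\|_{\cls_2}^2\le C_\psi^2\lambda^{-\gamma_\psi}\cdot 2\bigl(\kappa_2^2+\kappa_2^2\bigr)=O(\lambda^{-\gamma_\psi}),
\]
so the diagonal contribution is $O(T^{-1}\lambda^{-\gamma_\psi})$, which already fits within the target rate.

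\emph{Off-diagonal terms.} Following \eqref{al:tower}--\eqref{eq:EpiDeltatS2momentoffdiagonal}, I bound $\EE\langle\Delta_s^{\Xi},\Delta_t^{\Xi}\rangle$ by Cauchy--Schwarz. The conditional mean $\EE[\Delta_t^{\Xi}\mid X_{s-1},Y_{s-1}]$ is an integral of $\Xi_\lambda$ against $P^h-\pi$. Its $\cls_2$-norm is controlled by $\sup\|\Xi_\lambda\|_{\cls_2}$ times the total variation distance. Using the crude sup bound
\[
\sup\|\Xi_\lambda\|_{\cls_2}\le C\lambda^{-\gamma_\psi/2}\bigl(\kappa_2+\|A_\lambda\|_{\op}\kappa\bigr)=O(\lambda^{-(\gamma_\psi+1)/2})
\]
costs an extra $\lambda^{-1/2}$, but this is exactly what the stated rate allows. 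Then, exactly as in \eqref{eq:boundsumoffdiagonalDeltainnerproduct}, the geometric sum $\sum_h\rho^{h/2}$ gives an off-diagonal contribution
\[
O\bigl(T^{-1}\lambda^{-\gamma_\psi/2}\lambda^{-(\gamma_\psi+1)/2}\bigr)\le O\bigl(T^{-1}\lambda^{-(\gamma_\psi+1)}\bigr),
\]
using $\lambda<1$ and $\gamma_\psi\le1$. This proves \eqref{eq:full_app_D1_rate}. If the mean-square bound on $m_\lambda$ turns out to be delicate, I would simply use the sup bound everywhere, which still gives $\lambda^{-(\gamma_\psi+1)}$.

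For $D_{0,\lambda,T}$, I bound the operator norm by the Hilbert--Schmidt norm and repeat the argument with $P_\lambda(x)$, for which $\|P_\lambda(x)\|_{\cls_2}=\|\psi_{1,\lambda}(x)\|^2\le C_\psi^2\lambda^{-\gamma_\psi}$. Both the diagonal and off-diagonal second moments are then $O(T^{-1}\lambda^{-2\gamma_\psi})$, giving \eqref{eq:full_app_D0_rate}. The main obstacle is the $\lambda$-dependent envelope of $m_\lambda$ in the first bound. Measurability and Bochner integrability of $\Xi_\lambda$ and $P_\lambda$ follow as in the earlier proof, from boundedness for each fixed $\lambda$.
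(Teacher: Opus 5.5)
Your proposal is correct and follows the paper's proof: a second-moment bound in $\cls_2$ split into diagonal terms and off-diagonal terms (the latter controlled through the conditional mean against $P^h-\pi$ and the geometric-ergodicity bound), then Markov's inequality, with the same argument repeated for $P_\lambda$ using the envelope $C_\psi^2\lambda^{-\gamma_\psi}$ and $\|\cdot\|_{\op}\le\|\cdot\|_{\cls_2}$. The paper instead uses the single uniform envelope $\|\Xi_\lambda(x,y)\|_{\cls_2}\le C_1\lambda^{-(\gamma_\psi+1)/2}$ in every term, which already gives the stated rate, so your mean-square refinement is optional; the $\lambda^{-1/2}$ growth you assert without proof is obtained there by evaluating the population KRR objective at $g=0$, which gives $\lambda\|m_\lambda\|_{\calG}^2\le\kappa_2^2$ and hence $\|m_\lambda(x)\|_{\widetilde\clh_2}\le\kappa\kappa_2\lambda^{-1/2}$, and the same comparison gives your mean-square bound $\EE_\pi\|\widetilde K_2((X_0,Y_0),\cdot)-m_\lambda(X_0)\|_{\widetilde\clh_2}^2\le\kappa_2^2$ in one line.
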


\begin{proof}
We first bound the two rank-one operators. Recall
that $m_\lambda$ is the minimizer of the population vector-valued KRR
criterion
\begin{equation}
g\longmapsto
\EE_\pi\!\left[
\|\widetilde K_2((X_0,Y_0),\cdot)-g(X_0)\|_{\widetilde\clh_2}^2
\right]
+\lambda\|g\|_{\calG}^2.
\label{eq:full_app_population_krr_objective}
\end{equation}
Evaluating \eqref{eq:full_app_population_krr_objective} at the zero function
and using Assumption~\ref{assump:bounded_kernel} gives
\begin{equation}
\lambda\|m_\lambda\|_{\calG}^2
\leq
\EE_\pi\|\widetilde K_2((X_0,Y_0),\cdot)\|_{\widetilde\clh_2}^2
\leq\kappa_2^2.
\end{equation}
The reproducing property for the operator-valued kernel in
\eqref{eq:full_operator_kernel} therefore yields
\begin{equation}
\|m_\lambda(x)\|_{\widetilde\clh_2}
\leq
\kappa\|m_\lambda\|_{\calG}
\leq
\kappa\kappa_2\lambda^{-1/2}.
\label{eq:full_app_mlambda_envelope}
\end{equation}
Since $\lambda<1$, it follows that
\begin{equation}
\|\widetilde K_2((x,y),\cdot)-m_\lambda(x)\|_{\widetilde\clh_2}
\leq
(1+\kappa)\kappa_2\lambda^{-1/2}
\doteq Q_\lambda.
\label{eq:full_app_Qlambda}
\end{equation}

The Hilbert--Schmidt norm of a rank-one operator is the product of the norms
of its two factors. Combining \eqref{eq:full_embedding_bound} and
\eqref{eq:full_app_Qlambda} gives
\begin{align}
\|\Xi_\lambda(x,y)\|_{\cls_2}
&\leq
C_\psi Q_\lambda\lambda^{-\gamma_\psi/2}
\leq
C_1\lambda^{-(\gamma_\psi+1)/2}
\doteq B_\lambda,
\label{eq:full_app_Xi_envelope}\\
\|P_\lambda(x)\|_{\cls_2}
&=
\|\psi_{1,\lambda}(x)\|_{\clh_1}^2
\leq
C_\psi^2\lambda^{-\gamma_\psi}
\doteq H_\lambda.
\label{eq:full_app_P_envelope}
\end{align}

The embedding bound is required only $\pi_X$-almost everywhere. Choose a
common $\pi_X$-full set on which it holds for the countable collection of
tuning values $\{\lambda_{T,K}:T\in\NN\}$, and set the two kernels to zero
outside that set. Stationarity and a finite union bound show that this
modification does not change any empirical operator almost surely.

We then prove \eqref{eq:full_app_D1_rate}. Define
\begin{equation}
\zeta_t
\doteq
\Xi_\lambda(X_{t-1},Y_{t-1})
-\EE_\pi\Xi_\lambda(X_0,Y_0),
\end{equation}
and let
\begin{equation}
C_\pi
\doteq
\EE_\pi[J(X_0,Y_0)]<\infty.
\end{equation}
The proof is based on the following second-moment calculation
\begin{align}
\EE_\pi D_{1,\lambda,T}^2
&=
\EE_\pi\left\|
\frac1T\sum_{t=1}^T\zeta_t
\right\|_{\cls_2}^2
\notag\\
&=
\frac1{T^2}\sum_{t=1}^T
\EE_\pi\|\zeta_t\|_{\cls_2}^2
+
\frac{2}{T^2}
\sum_{h=1}^{T-1}\sum_{t=1}^{T-h}
\EE_\pi
\langle\zeta_t,\zeta_{t+h}\rangle_{\cls_2}
\notag\\
&\leq
\frac{4B_\lambda^2}{T}
+
\frac{16B_\lambda^2C_\pi}{T^2}
\sum_{h=1}^{T-1}(T-h)\rho^h
\label{eq:full_app_Xi_second_moment-1}\\
&\leq
\frac{B_\lambda^2}{T}
\left(
4+\frac{16C_\pi\rho}{1-\rho}
\right)
\leq
\frac{C_2B_\lambda^2}{T},
\label{eq:full_app_Xi_second_moment}
\end{align}
where the \eqref{eq:full_app_Xi_second_moment-1} uses
$\|\zeta_t\|_{\cls_2}\leq2B_\lambda$, which follows from
\eqref{eq:full_app_Xi_envelope}, Jensen's inequality, and the triangle
inequality, together with the covariance bound
\begin{equation}
\left|
\EE_\pi\langle\zeta_t,\zeta_{t+h}\rangle_{\cls_2}
\right|
\leq
8B_\lambda^2C_\pi\rho^h.
\label{eq:full_app_Xi_covariance_bound}
\end{equation}
The second inequality \eqref{eq:full_app_Xi_second_moment}, uses the
geometric-series bound
$\sum_{h=1}^{T-1}(T-h)\rho^h
\leq T\rho/(1-\rho)$.
It remains to establish \eqref{eq:full_app_Xi_covariance_bound}. We first
reduce the temporal covariance to a conditional-mean bound:
\begin{align}
\left|
\EE_\pi
\langle\zeta_t,\zeta_{t+h}\rangle_{\cls_2}
\right|
&=
\left|
\EE_\pi
\left\langle
\zeta_t,
\EE_\pi[
\zeta_{t+h}\mid X_{t-1},Y_{t-1}]
\right\rangle_{\cls_2}
\right|
\notag\\
&\leq
\EE_\pi\!\left[
\|\zeta_t\|_{\cls_2}
\left\|
\EE_\pi[
\zeta_{t+h}\mid X_{t-1},Y_{t-1}]
\right\|_{\cls_2}
\right].
\label{eq:full_app_Xi_covariance_reduction}
\end{align}
The equality follows from the tower property and the measurability of
$\zeta_t$ with respect to $\sigma(X_{t-1},Y_{t-1})$, while the inequality
is the Cauchy--Schwarz inequality in $\cls_2$. It therefore remains to
control the conditional mean in
\eqref{eq:full_app_Xi_covariance_reduction}.

For any $R$ in the unit ball of
$\cls_2(\clh_1,\widetilde\clh_2)$, define
\begin{equation}
q_R(x,y)
\doteq
\left\langle
\Xi_\lambda(x,y)-\EE_\pi\Xi_\lambda(X_0,Y_0),
R
\right\rangle_{\cls_2}.
\end{equation}
Then
$\EE_\pi q_R(X_0,Y_0)=0$ and
$\|q_R\|_\infty\leq2B_\lambda$.

Fix such an $R$. Conditional on
$(X_{t-1},Y_{t-1})=(x,y)$, linearity of conditional expectation and the
Markov property give
\begin{align}
&
\left|
\left\langle
\EE_\pi\!\left[
\zeta_{t+h}\mid (X_{t-1},Y_{t-1})=(x,y)
\right],
R
\right\rangle_{\cls_2}
\right|
\notag\\
&=
\left|
\EE_\pi\!\left[
\left\langle\zeta_{t+h},R\right\rangle_{\cls_2}
\mid (X_{t-1},Y_{t-1})=(x,y)
\right]
\right|
\notag\\
&=
\left|
\EE_\pi\!\left[
q_R(X_{t+h-1},Y_{t+h-1})
\mid (X_{t-1},Y_{t-1})=(x,y)
\right]
\right|
\notag\\
&=
\left|
\int q_R(z)\,P^h((x,y),dz)
\right|
\notag\\
&=
\left|
\int q_R(z)
\bigl(P^h((x,y),dz)-\pi(dz)\bigr)
\right|
\notag\\
&\leq
2\|q_R\|_\infty
\|P^h((x,y),\cdot)-\pi\|_{\TV}
\notag\\
&\leq
4B_\lambda\rho^hJ(x,y).
\label{eq:full_app_Xi_conditional_bound}
\end{align}
The third equality uses the Markov property, the fourth uses
$\int q_R\,d\pi=0$, the first inequality is the standard
total-variation bound, and the final inequality follows from
\eqref{eq:geom_ergodicity} together with
$\|q_R\|_\infty\leq2B_\lambda$.

Since \eqref{eq:full_app_Xi_conditional_bound} holds for every
$\|R\|_{\cls_2}\leq1$, the Hilbert-space identity
\begin{equation}
\|A\|_{\cls_2}
=
\sup_{\|R\|_{\cls_2}\leq1}
|\langle A,R\rangle_{\cls_2}|
\end{equation}
yields
\begin{equation}
\left\|
\EE_\pi(
\zeta_{t+h}\mid X_{t-1},Y_{t-1})
\right\|_{\cls_2}
\leq
4B_\lambda\rho^hJ(X_{t-1},Y_{t-1}).
\label{eq:full_app_Xi_conditional_norm_bound}
\end{equation}

Combining \eqref{eq:full_app_Xi_covariance_reduction} with
\eqref{eq:full_app_Xi_conditional_norm_bound} and
$\|\zeta_t\|_{\cls_2}\leq2B_\lambda$ gives
\begin{align}
\left|
\EE_\pi
\langle\zeta_t,\zeta_{t+h}\rangle_{\cls_2}
\right|
&\leq
8B_\lambda^2\rho^h
\EE_\pi[J(X_{t-1},Y_{t-1})]
=
8B_\lambda^2C_\pi\rho^h,
\end{align}
which proves \eqref{eq:full_app_Xi_covariance_bound}.

Since
$B_\lambda=C_1\lambda^{-(\gamma_\psi+1)/2}$, \eqref{eq:full_app_Xi_second_moment} together with 
Markov's inequality yields
\begin{equation}
D_{1,\lambda,T}
=
O_\PP\!\left(
T^{-1/2}\lambda^{-(\gamma_\psi+1)/2}
\right),
\end{equation}
which proves \eqref{eq:full_app_D1_rate}.

For \eqref{eq:full_app_D0_rate}, the same argument applies with
$P_\lambda$ replacing $\Xi_\lambda$. By
\eqref{eq:full_app_P_envelope}, the corresponding centered process has
envelope $2H_\lambda$, and the same total-variation covariance argument
gives
\begin{equation}
\EE_\pi\left\|
\frac1T\sum_{t=1}^T
\bigl(
P_\lambda(X_{t-1})-\EE_\pi P_\lambda(X_0)
\bigr)
\right\|_{\cls_2}^2
\leq
\frac{C_3H_\lambda^2}{T}.
\label{eq:expected_norm_bound_op_D_0_lambda_T}
\end{equation}
Since the operator norm is bounded by the Hilbert--Schmidt norm and
$H_\lambda=C_\psi^2\lambda^{-\gamma_\psi}$, Markov's inequality yields
$D_{0,\lambda,T}
=
O_\PP\!\left(
T^{-1/2}\lambda^{-\gamma_\psi}
\right)$,
which proves \eqref{eq:full_app_D0_rate}.
\end{proof}

\begin{lemma}
\label{lem:full_exact_factorization}
Recall the transformed covariance error $B_{\lambda,T}$ from \eqref{eq:full_app_B_def}.
Then,
\begin{equation}
\|B_{\lambda,T}\|_{\op}=D_{0,\lambda,T}.
\label{eq:full_app_B_D0_identity}
\end{equation}
Under the assumptions of Lemma~\ref{lem:full_fluctuation_rates}, on the event
$\{\|B_{\lambda,T}\|_{\op}\leq1/2\}$,
\begin{equation}
\|E_{\lambda,T}(L_{11}+\lambda I_{\clh_1})^{1/2}\|_{\cls_2}
=
O_\PP\!\left(
T^{-1/2}\lambda^{-(\gamma_\psi+1)/2}
\right).
\label{eq:full_app_master_rate}
\end{equation}
\end{lemma}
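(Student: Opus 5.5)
The plan is to prove the identity \eqref{eq:full_app_B_D0_identity} by a direct conjugation computation, and then to derive \eqref{eq:full_app_master_rate} from an exact algebraic factorization of $E_{\lambda,T}$. That factorization splits $E_{\lambda,T}$ into a fluctuation term controlled by $D_{1,\lambda,T}$ and a resolvent term controlled by $\|B_{\lambda,T}\|_{\op}$. Throughout, write $L_\lambda\doteq L_{11}+\lambda I_{\clh_1}$ and $\widehat L_\lambda\doteq\widehat L_{11}+\lambda I_{\clh_1}$.

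\emph{Step 1: the identity \eqref{eq:full_app_B_D0_identity}.} By \eqref{eq:sampleLhatijoperator},
\[
\widehat L_{11}=\frac1T\sum_{t=1}^T K_1(X_{t-1},\cdot)\otimes K_1(X_{t-1},\cdot),
\qquad
L_{11}=\EE_\pi\bigl[K_1(X_0,\cdot)\otimes K_1(X_0,\cdot)\bigr].
\]
Conjugation by the self-adjoint operator $L_\lambda^{-1/2}$ maps $u\otimes u$ to $(L_\lambda^{-1/2}u)\otimes(L_\lambda^{-1/2}u)$. Applied to $u=K_1(x,\cdot)$, this gives $\psi_{1,\lambda}(x)\otimes\psi_{1,\lambda}(x)=P_\lambda(x)$ by \eqref{eq:full_regularized_feature} and \eqref{eq:full_app_P_def}. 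Interchanging the bounded operator with the Bochner expectation, we get
\[
B_{\lambda,T}=\frac1T\sum_{t=1}^T\bigl(P_\lambda(X_{t-1})-\EE_\pi P_\lambda(X_0)\bigr),
\]
and taking operator norms yields \eqref{eq:full_app_B_D0_identity} exactly.

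\emph{Step 2: the exact factorization.} The population normal equation $A_\lambda L_\lambda=\widetilde L_{21}$ gives $\lambda A_\lambda=\widetilde L_{21}-A_\lambda L_{11}$. Hence
\[
E_{\lambda,T}
=\bigl(\widehat{\widetilde L}_{21,T}-A_\lambda\widehat L_\lambda\bigr)\widehat L_\lambda^{-1}
=\Bigl[(\widehat{\widetilde L}_{21,T}-\widetilde L_{21})-A_\lambda(\widehat L_{11}-L_{11})\Bigr]\widehat L_\lambda^{-1}.
\]
Since $A_\lambda K_1(x,\cdot)=m_\lambda(x)$, the bracket equals $\Delta_T\doteq \frac1T\sum_t(\phi(X_{t-1},Y_{t-1})-\EE_\pi\phi)$, where
\[
\phi(x,y)\doteq\bigl(\widetilde K_2((x,y),\cdot)-m_\lambda(x)\bigr)\otimes K_1(x,\cdot).
\]
Inserting $L_\lambda^{-1/2}L_\lambda^{1/2}$ on both sides of the resolvent gives
\[
E_{\lambda,T}L_\lambda^{1/2}
=\bigl(\Delta_T L_\lambda^{-1/2}\bigr)\bigl(L_\lambda^{1/2}\widehat L_\lambda^{-1}L_\lambda^{1/2}\bigr).
\]
For the first factor, $\phi(x,y)L_\lambda^{-1/2}=\Xi_\lambda(x,y)$ by self-adjointness of $L_\lambda^{-1/2}$ and \eqref{eq:full_app_Xi_def}, so $\|\Delta_TL_\lambda^{-1/2}\|_{\cls_2}=D_{1,\lambda,T}$. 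For the second factor, $L_\lambda^{-1/2}\widehat L_\lambda L_\lambda^{-1/2}=I_{\clh_1}+B_{\lambda,T}$, so it equals $(I_{\clh_1}+B_{\lambda,T})^{-1}$. On the event $\{\|B_{\lambda,T}\|_{\op}\le1/2\}$, a Neumann series bounds its operator norm by $2$.

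\emph{Step 3: conclusion.} Using $\|ST\|_{\cls_2}\le\|S\|_{\cls_2}\|T\|_{\op}$, on that event
\[
\|E_{\lambda,T}L_\lambda^{1/2}\|_{\cls_2}\le 2D_{1,\lambda,T}=O_\PP\bigl(T^{-1/2}\lambda^{-(\gamma_\psi+1)/2}\bigr)
\]
by \eqref{eq:full_app_D1_rate} of Lemma~\ref{lem:full_fluctuation_rates}.

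The main obstacle is the bookkeeping in Step 2. One must check that $(\widehat{\widetilde L}_{21,T}-\widetilde L_{21})-A_\lambda(\widehat L_{11}-L_{11})$ really equals the centered average of $\phi$; this uses linearity of $A_\lambda$ under the Bochner expectation. One must also match the orientation of the rank-one operators, so that composing with $L_\lambda^{-1/2}$ on the right reproduces exactly $\Xi_\lambda$, including its centering term $-m_\lambda(x)$. All remaining steps are algebra or are supplied by Lemma~\ref{lem:full_fluctuation_rates}.
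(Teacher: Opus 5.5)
Your proposal is correct and follows essentially the same route as the paper: the conjugation identity $B_{\lambda,T}=\frac1T\sum_t(P_\lambda(X_{t-1})-\EE_\pi P_\lambda(X_0))$, then the factorization $E_{\lambda,T}(L_{11}+\lambda I_{\clh_1})^{1/2}=R_{1,\lambda,T}R_{2,\lambda,T}$, with the first factor identified as the centered average of $\Xi_\lambda$ (Hilbert--Schmidt norm $D_{1,\lambda,T}$) and the second equal to $(I_{\clh_1}+B_{\lambda,T})^{-1}$, bounded by $2$ via a Neumann series. The only cosmetic difference is that you use the population normal equation before factoring, whereas the paper factors first and then subtracts $\widetilde L_{21}-A_\lambda(L_{11}+\lambda I_{\clh_1})=0$ inside $R_{1,\lambda,T}$.
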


\begin{proof}
We first prove \eqref{eq:full_app_B_D0_identity}. For $h\in\clh_1$, recall $\psi_{1,\lambda}(x)$ from \eqref{eq:full_regularized_feature} and $P_\lambda(x)$ from \eqref{eq:full_app_P_def}.
Using the representation of $\widehat L_{11}$ as an empirical covariance
operator and the reproducing property,
\begin{align}
&(L_{11}+\lambda I_{\clh_1})^{-1/2}
\widehat L_{11}
(L_{11}+\lambda I_{\clh_1})^{-1/2}h
\notag\\
&=
\frac1T\sum_{t=1}^T
(L_{11}+\lambda I_{\clh_1})^{-1/2}
K_1(X_{t-1},\cdot)
\left\langle
(L_{11}+\lambda I_{\clh_1})^{-1/2}h,
K_1(X_{t-1},\cdot)
\right\rangle_{\clh_1}
\notag\\
&=
\frac1T\sum_{t=1}^T
\psi_{1,\lambda}(X_{t-1})
\left\langle
h,\psi_{1,\lambda}(X_{t-1})
\right\rangle_{\clh_1}
\notag\\
&=
\frac1T\sum_{t=1}^T
P_\lambda(X_{t-1})h.
\label{eq:B-first}
\end{align}
Here the second equality uses the self-adjointness of
$(L_{11}+\lambda I_{\clh_1})^{-1/2}$.
Similarly, at the population level, for $h \in \clh_1$, 
\begin{equation}
(L_{11}+\lambda I_{\clh_1})^{-1/2}L_{11}
(L_{11}+\lambda I_{\clh_1})^{-1/2}h
=
\EE_\pi P_\lambda(X_0)h.
\label{eq:B-second}
\end{equation}
Subtracting the operator identity \eqref{eq:B-second} from \eqref{eq:B-first} and using
\eqref{eq:full_app_B_def} yields
\begin{equation}
B_{\lambda,T}
=
\frac1T\sum_{t=1}^T
\left(
P_\lambda(X_{t-1})-\EE_\pi P_\lambda(X_0)
\right).
\end{equation}
Taking operator norms and recalling the definition of
\(D_{0,\lambda,T}\) in \eqref{eq:full_app_D0_def} proves
\eqref{eq:full_app_B_D0_identity}.

We next prove \eqref{eq:full_app_master_rate}. To this end, we derive a factorization of the KRR operator error that separates the empirical fluctuation in the KRR normal equation from the effect of empirical covariance inversion. Define
\begin{align}
R_{1,\lambda,T}
&\doteq
\big(\widehat{\widetilde L}_{21,T}-A_\lambda(\widehat L_{11}+\lambda I_{\clh_1})\big)
(L_{11}+\lambda I_{\clh_1})^{-1/2},
\label{eq:full_app_R1_def}\\
R_{2,\lambda,T}
&\doteq
(L_{11}+\lambda I_{\clh_1})^{1/2}
(\widehat L_{11}+\lambda I_{\clh_1})^{-1}
(L_{11}+\lambda I_{\clh_1})^{1/2}.
\label{eq:full_app_R2_def}
\end{align}
Together, \eqref{eq:full_app_R1_def} and
\eqref{eq:full_app_R2_def} yield the factorization
\begin{equation}
E_{\lambda,T}(L_{11}+\lambda I_{\clh_1})^{1/2}
=
R_{1,\lambda,T}R_{2,\lambda,T}.
\label{eq:full_app_exact_factorization}
\end{equation}
We control the two factors separately.

The definition $A_\lambda=\widetilde L_{21}(L_{11}+\lambda I_{\clh_1})^{-1}$ implies
\begin{equation}
\widetilde L_{21}-A_\lambda(L_{11}+\lambda I_{\clh_1})=0.
\label{eq:full_app_population_normal_equation}
\end{equation}
Using \eqref{eq:full_app_population_normal_equation}, we first rewrite
$R_{1,\lambda,T}$ as the difference between an empirical operator and its
population counterpart:
\begin{equation}
R_{1,\lambda,T}
=
\big(\widehat{\widetilde L}_{21,T}-A_\lambda\widehat L_{11}\big)
(L_{11}+\lambda I_{\clh_1})^{-1/2}
-
\big(\widetilde L_{21}-A_\lambda L_{11}\big)
(L_{11}+\lambda I_{\clh_1})^{-1/2}.
\label{eq:full_app_R1_difference}
\end{equation}

We now identify the two operators in
\eqref{eq:full_app_R1_difference}. For $h\in\clh_1$,
\begin{align}
&
\big(\widehat{\widetilde L}_{21,T}-A_\lambda\widehat L_{11}\big)
(L_{11}+\lambda I_{\clh_1})^{-1/2}h
\notag\\
&=
\frac1T\sum_{t=1}^T
\left\langle
h,\psi_{1,\lambda}(X_{t-1})
\right\rangle_{\clh_1}
\widetilde K_2((X_{t-1},Y_{t-1}),\cdot)
-
\frac1T\sum_{t=1}^T
\left\langle
h,\psi_{1,\lambda}(X_{t-1})
\right\rangle_{\clh_1}
m_\lambda(X_{t-1})
\notag\\
&=
\frac1T\sum_{t=1}^T
\Xi_\lambda(X_{t-1},Y_{t-1})h.
\label{eq:full_app_R1_empirical_part}
\end{align}
The first equality uses the empirical representations of
$\widehat{\widetilde L}_{21,T}$ and $\widehat L_{11}$, together with the
reproducing property and the self-adjointness of
$(L_{11}+\lambda I_{\clh_1})^{-1/2}$. In particular,
\eqref{eq:full_regularized_feature} gives the common factor
$\langle h,\psi_{1,\lambda}(X_{t-1})\rangle_{\clh_1}$, while
$m_\lambda(x)=A_\lambda K_1(x,\cdot)$ gives the second term.
The final equality follows from the definition of $\Xi_\lambda$ in
\eqref{eq:full_app_Xi_def}.
Similarly, at the population level,
\begin{equation}
\big(\widetilde L_{21}-A_\lambda L_{11}\big)
(L_{11}+\lambda I_{\clh_1})^{-1/2}h
=
\EE_\pi\!\left[
\Xi_\lambda(X_0,Y_0)h
\right].
\label{eq:full_app_R1_population_part}
\end{equation}
Since these identities hold for every $h\in\clh_1$,
\eqref{eq:full_app_R1_difference} yields
\begin{equation}
R_{1,\lambda,T}
=
\frac1T\sum_{t=1}^T
\Xi_\lambda(X_{t-1},Y_{t-1})
-\EE_\pi\Xi_\lambda(X_0,Y_0).
\label{eq:full_app_R1_empirical_identity}
\end{equation}
It follows that 
\begin{equation}\|R_{1,\lambda,T}\|_{\cls_2}=D_{1,\lambda,T}.
\label{eq:app_R1_empirical_D_1_lambda_T_identity}
\end{equation}

From \eqref{eq:full_app_B_def},
\begin{equation}
\widehat L_{11}+\lambda I_{\clh_1}
=
(L_{11}+\lambda I_{\clh_1})^{1/2}
(I_{\clh_1}+B_{\lambda,T})
(L_{11}+\lambda I_{\clh_1})^{1/2}.
\label{eq:full_app_covariance_factorization}
\end{equation}
Thus
$R_{2,\lambda,T}=(I_{\clh_1}+B_{\lambda,T})^{-1}$. On the stated event, it follows from the  Neumann series representation of $R_{2,\lambda, T}$  that 
\begin{equation}
\|R_{2,\lambda,T}\|_{\op}
\leq
(1-\|B_{\lambda,T}\|_{\op})^{-1}
\leq2.
\label{eq:full_app_R2_bound}
\end{equation}
Combining \eqref{eq:full_app_exact_factorization},
\eqref{eq:app_R1_empirical_D_1_lambda_T_identity}, and
\eqref{eq:full_app_R2_bound}, we obtain
\begin{equation}
\|E_{\lambda,T}(L_{11}+\lambda I_{\clh_1})^{1/2}\|_{\cls_2}
\leq
\|R_{1,\lambda,T}\|_{\cls_2}
\|R_{2,\lambda,T}\|_{\op}
\leq
2D_{1,\lambda,T}.
\label{eq:full_app_master_pathwise}
\end{equation}
The rate \eqref{eq:full_app_D1_rate} from
Lemma~\ref{lem:full_fluctuation_rates} then gives
\begin{equation}
\|E_{\lambda,T}(L_{11}+\lambda I_{\clh_1})^{1/2}\|_{\cls_2}
=
O_\PP\!\left(
T^{-1/2}\lambda^{-(\gamma_\psi+1)/2}
\right),
\end{equation}
which proves \eqref{eq:full_app_master_rate}.
\end{proof}

\begin{lemma}
\label{lem:full_empirical_loewner}
On the event $\{\|B_{\lambda,T}\|_{\op}\leq1/2\}$,
\begin{equation}
\widehat L_{11}
\preceq
\frac32(L_{11}+\lambda I_{\clh_1}).
\label{eq:full_app_loewner}
\end{equation}
\end{lemma}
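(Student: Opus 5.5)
The plan is to transfer the bound through the same similarity transformation used to define $B_{\lambda,T}$ in \eqref{eq:full_app_B_def}. Write $M_\lambda\doteq(L_{11}+\lambda I_{\clh_1})^{1/2}$, which is bounded, self-adjoint, positive and invertible with bounded inverse. The covariance factorization \eqref{eq:full_app_covariance_factorization} gives
\begin{equation}
\widehat L_{11}+\lambda I_{\clh_1}
=
M_\lambda(I_{\clh_1}+B_{\lambda,T})M_\lambda ,
\end{equation}
and $B_{\lambda,T}$ is self-adjoint, because $\widehat L_{11}$, $L_{11}$ and $M_\lambda^{-1}$ are. On the event $\{\|B_{\lambda,T}\|_{\op}\leq 1/2\}$ we then have $B_{\lambda,T}\preceq \tfrac12 I_{\clh_1}$ as a quadratic-form statement, since for a self-adjoint operator $\langle h,B h\rangle\leq\|B\|_{\op}\|h\|^2$. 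Consequently $I_{\clh_1}+B_{\lambda,T}\preceq \tfrac32 I_{\clh_1}$.

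The next step is to conjugate by $M_\lambda$. The Loewner order is preserved under congruence: for any $h\in\clh_1$, setting $u=M_\lambda h$ gives
\begin{equation}
\langle h, M_\lambda(I+B_{\lambda,T})M_\lambda h\rangle_{\clh_1}
=
\langle u,(I+B_{\lambda,T})u\rangle_{\clh_1}
\leq
\tfrac32\|u\|_{\clh_1}^2
=
\tfrac32\langle h,(L_{11}+\lambda I_{\clh_1})h\rangle_{\clh_1}.
\end{equation}
Hence $\widehat L_{11}+\lambda I_{\clh_1}\preceq \frac32(L_{11}+\lambda I_{\clh_1})$.

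It remains to drop the $\lambda I_{\clh_1}$ term on the left. Since $\lambda\|h\|^2\geq0$, we get $\langle h,\widehat L_{11}h\rangle\leq\langle h,(\widehat L_{11}+\lambda I_{\clh_1})h\rangle$. Chaining this with the previous display proves \eqref{eq:full_app_loewner}. In fact this yields the slightly stronger bound $\widehat L_{11}\preceq\frac32 L_{11}+\frac12\lambda I_{\clh_1}$, but the stated form is all that is needed.

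No step here is a genuine obstacle. The only points requiring care are checking that $B_{\lambda,T}$ is self-adjoint, so that the operator-norm bound gives a one-sided form bound, and noting that the factorization \eqref{eq:full_app_covariance_factorization} is an exact algebraic identity because $M_\lambda$ is invertible. The identity \eqref{eq:full_app_B_D0_identity} is not needed for this lemma.
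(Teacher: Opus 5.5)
Your proof is correct and is essentially the paper's argument: the paper writes $\widehat L_{11}-L_{11}=(L_{11}+\lambda I_{\clh_1})^{1/2}B_{\lambda,T}(L_{11}+\lambda I_{\clh_1})^{1/2}$, bounds the quadratic form by $\tfrac12\langle h,(L_{11}+\lambda I_{\clh_1})h\rangle_{\clh_1}$, and then adds $L_{11}\preceq L_{11}+\lambda I_{\clh_1}$. You instead conjugate $I_{\clh_1}+B_{\lambda,T}$ and drop $\lambda I_{\clh_1}$ at the end, which is the same computation reorganized, and both routes pass through $\widehat L_{11}\preceq\tfrac32L_{11}+\tfrac12\lambda I_{\clh_1}$. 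One small correction: self-adjointness of $B_{\lambda,T}$ is not needed for the one-sided bound, since $\langle h,Bh\rangle\le\|B\|_{\op}\|h\|^2$ holds for any bounded $B$ by Cauchy--Schwarz.
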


\begin{proof}
Throughout this proof we work on the event $\{\|B_{\lambda,T}\|_{\op}\leq1/2\}$. For $h\in\clh_1$, \eqref{eq:full_app_B_def}  and the self-adjointness of $L_{11} + \lambda I_{\clh_1}$ together imply that 
\begin{align}
\langle h,(\widehat L_{11}-L_{11})h\rangle_{\clh_1}
&=
\left\langle
(L_{11}+\lambda I_{\clh_1})^{1/2}h,
B_{\lambda,T}(L_{11}+\lambda I_{\clh_1})^{1/2}h
\right\rangle_{\clh_1}
\leq
\frac12\langle h,(L_{11}+\lambda I_{\clh_1})h\rangle_{\clh_1}.
\end{align}
Together, the previous display and the fact that
$L_{11}\preceq L_{11}+\lambda I_{\clh_1}$ imply that 
\begin{align}
    \langle h,\widehat L_{11} h\rangle_{\clh_1} \le \frac{1}{2} \langle h, (L_{11} + \lambda I_{\clh_1})h\rangle_{\clh_1} +\langle h,L_{11}h\rangle_{\clh_1} \le \frac{3}{2} \langle h , (L_{11} + \lambda I_{\clh_1}) h \rangle_{\clh_1},
\end{align}
 which completes the proof of  \eqref{eq:full_app_loewner}.
\end{proof}

\begin{lemma}
\label{lem:full_innovation_weighted_error}
Under the assumptions of Theorem~\ref{thm:full_oracle_equivalence},
\begin{equation}
\left\|
\frac1{\sqrt T}\sum_{t=1}^T
\varepsilon_{X,t}
\big( \widehat m_T(X_{t-1})-m(X_{t-1})\big)
\right\|_{\widetilde\clh_2}
=o_\PP(1).
\label{eq:full_app_innovation_target}
\end{equation}
\end{lemma}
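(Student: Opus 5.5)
The plan is to split the centering error with the decomposition \eqref{eq:full_app_centering_error_decomposition}. This gives
\[
\frac1{\sqrt T}\sum_{t=1}^T\varepsilon_{X,t}\Delta_{t,T}
=
E_{\lambda,T}W_T
+
\frac1{\sqrt T}\sum_{t=1}^T\varepsilon_{X,t}r_\lambda(X_{t-1}),
\qquad
W_T\doteq\frac1{\sqrt T}\sum_{t=1}^T\varepsilon_{X,t}K_1(X_{t-1},\cdot)\in\clh_1,
\]
with $\lambda=\lambda_{T,K}$. The first equality uses linearity of $E_{\lambda,T}$. I will show that each of the two terms is $o_\PP(1)$ in $\widetilde\clh_2$.

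\emph{Bias term.} The map $r_\lambda$ is deterministic, and $r_\lambda(X_{t-1})$ is $\mathcal F_{t-1}$-measurable. The summands $\varepsilon_{X,t}r_\lambda(X_{t-1})$ are therefore $\widetilde\clh_2$-valued martingale differences. Expanding the squared norm, the cross terms vanish after conditioning on $\mathcal F_{t-1}$, exactly as in the proof of Proposition~\ref{prop:aux_FCLT}. By stationarity this leaves
\[
\EE\Big\|\frac1{\sqrt T}\sum_{t=1}^T\varepsilon_{X,t}r_\lambda(X_{t-1})\Big\|_{\widetilde\clh_2}^2
=\sigma^2 b_{\lambda,c}^2 .
\]
This tends to zero, since $b_{\lambda,c}^2=o((Ta_T^2)^{-1})$ and $Ta_T^2\to\infty$ by Assumption~\ref{ass:full_empirical_centering_tuning}. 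Markov's inequality then gives that the bias term is $o_\PP(1)$.

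\emph{Stochastic term.} The difficulty is that $E_{\lambda,T}$ is built from the same sample as $W_T$, and hence depends on the innovations. A direct martingale argument therefore fails. Instead, I will use the exact factorization \eqref{eq:full_app_exact_factorization} to insert the identity $(L_{11}+\lambda I_{\clh_1})^{1/2}(L_{11}+\lambda I_{\clh_1})^{-1/2}$:
\[
\|E_{\lambda,T}W_T\|_{\widetilde\clh_2}
\le
\|R_{1,\lambda,T}\|_{\cls_2}\,\|R_{2,\lambda,T}\|_{\op}\,
\|(L_{11}+\lambda I_{\clh_1})^{-1/2}W_T\|_{\clh_1}.
\]
This separates the data-dependent operators, which are controlled pathwise, from a purely deterministic transform of $W_T$.

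We work on the event $\{\|B_{\lambda,T}\|_{\op}\le1/2\}$. Its probability tends to one by Lemma~\ref{lem:B-control}, or alternatively by \eqref{eq:full_app_B_D0_identity} together with \eqref{eq:full_app_D0_rate} and $T\lambda^{2\gamma_\psi}\to\infty$. On this event, \eqref{eq:full_app_R2_bound} gives $\|R_{2,\lambda,T}\|_{\op}\le2$. Lemma~\ref{lem:full_fluctuation_rates}, via \eqref{eq:app_R1_empirical_D_1_lambda_T_identity}, gives
\[
\|R_{1,\lambda,T}\|_{\cls_2}=O_\PP\bigl(T^{-1/2}\lambda^{-(\gamma_\psi+1)/2}\bigr).
\]
For the last factor, note that
\[
(L_{11}+\lambda I_{\clh_1})^{-1/2}W_T=\frac1{\sqrt T}\sum_{t=1}^T\varepsilon_{X,t}\psi_{1,\lambda}(X_{t-1})
\]
is again a martingale sum, now with a deterministic operator applied. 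The same orthogonality computation, combined with the embedding bound \eqref{eq:full_embedding_bound}, yields
\[
\EE\|(L_{11}+\lambda I_{\clh_1})^{-1/2}W_T\|_{\clh_1}^2
=\sigma^2\EE_\pi\|\psi_{1,\lambda}(X_0)\|_{\clh_1}^2
\le\sigma^2C_\psi^2\lambda^{-\gamma_\psi}.
\]
Hence this factor is $O_\PP(\lambda^{-\gamma_\psi/2})$.

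Multiplying the three bounds gives
\[
\|E_{\lambda,T}W_T\|_{\widetilde\clh_2}=O_\PP\bigl((T\lambda_{T,K}^{2\gamma_\psi+1})^{-1/2}\bigr),
\]
which is $o_\PP(1)$ by the last condition in \eqref{eq:full_tuning_conditions}. Combining this with the bias term proves \eqref{eq:full_app_innovation_target}.

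The main obstacle is the dependence between $E_{\lambda,T}$ and the innovations. The factorization is what resolves it: it lets the operator part be bounded uniformly in norm, so the martingale structure only needs to be used on the deterministically transformed $W_T$. Some care is needed to make the product of $O_\PP$ bounds rigorous on the high-probability event. For this I will use the fact that the bounds on $R_{1,\lambda,T}$ and $R_{2,\lambda,T}$ are pathwise, while the bound on the last factor holds unconditionally in $L^2$.
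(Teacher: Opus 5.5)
Your proposal is correct and follows the paper's proof essentially step for step. You use the same split via \eqref{eq:full_app_centering_error_decomposition}, the same martingale bound $\sigma^2 b_{\lambda,c}^2=o(1)$ on the bias sum, and the same product of the factorization rate \eqref{eq:full_app_master_rate} with $\|(L_{11}+\lambda I_{\clh_1})^{-1/2}W_T\|_{\clh_1}=O_\PP(\lambda^{-\gamma_\psi/2})$ (the paper's $W_{\lambda,T}$), giving $O_\PP\bigl((T\lambda_{T,K}^{2\gamma_\psi+1})^{-1/2}\bigr)$; your remark on combining the pathwise bounds on $\{\|B_{\lambda,T}\|_{\op}\le 1/2\}$ with the $L^2$ bound only makes explicit what the paper leaves implicit.
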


\begin{proof}
Recall the regularized features $\psi_{1,\lambda}: \RR \to \clh_1$ defined in \eqref{eq:full_regularized_feature} and define the regularized martingale term
\begin{equation}
W_{\lambda,T}
\doteq
(L_{11}+\lambda I_{\clh_1})^{-1/2}
\frac1{\sqrt T}\sum_{t=1}^T
\varepsilon_{X,t}K_1(X_{t-1},\cdot)
=
\frac1{\sqrt T}\sum_{t=1}^T
\varepsilon_{X,t}\psi_{1,\lambda}(X_{t-1}).
\label{eq:full_app_W_def}
\end{equation}
For $s<t$, $\varepsilon_{X,s}$, $\psi_{1,\lambda}(X_{s-1})$, and
$\psi_{1,\lambda}(X_{t-1})$ are $\calF_{t-1}$-measurable, so 
\begin{align}
&\EE\!\left[
\left\langle
\varepsilon_{X,s}\psi_{1,\lambda}(X_{s-1}),
\varepsilon_{X,t}\psi_{1,\lambda}(X_{t-1})
\right\rangle_{\clh_1}
\right]
\notag\\
&\quad=
\EE\!\left[
\left\langle
\varepsilon_{X,s}\psi_{1,\lambda}(X_{s-1}),
\psi_{1,\lambda}(X_{t-1})
\EE(\varepsilon_{X,t}\mid\calF_{t-1})
\right\rangle_{\clh_1}
\right]
=0.
\label{eq:full_app_W_cross_zero}
\end{align}
Using \eqref{eq:full_app_W_cross_zero}, the fact that 
$\EE(\varepsilon_{X,t}^2\mid\calF_{t-1})=\sigma^2$,  and
\eqref{eq:full_embedding_bound}, we obtain
\begin{equation}
\EE\|W_{\lambda,T}\|_{\clh_1}^2
=
\frac{\sigma^2}{T}\sum_{t=1}^T
\EE\|\psi_{1,\lambda}(X_{t-1})\|_{\clh_1}^2
\leq
\sigma^2C_\psi^2\lambda^{-\gamma_\psi}.
\label{eq:W_lambda_T_clh_1_expected_variance_bound}
\end{equation}
Together, \eqref{eq:W_lambda_T_clh_1_expected_variance_bound} and Markov's inequality imply that 
\begin{equation}
\|W_{\lambda,T}\|_{\clh_1}
=O_\PP(\lambda^{-\gamma_\psi/2}).
\label{eq:full_app_W_rate}
\end{equation}

Using \eqref{eq:full_app_centering_error_decomposition} and \eqref{eq:full_app_W_def}, observe that 
\begin{align}
&\frac1{\sqrt T}\sum_{t=1}^T
\varepsilon_{X,t}
\big(\widehat m_T(X_{t-1})-m(X_{t-1})\big)
=
E_{\lambda,T}(L_{11}+\lambda I_{\clh_1})^{1/2}W_{\lambda,T}
+
\frac1{\sqrt T}\sum_{t=1}^T
\varepsilon_{X,t}r_\lambda(X_{t-1}).
\label{eq:full_app_innovation_factorization}
\end{align}
Recalling \eqref{eq:full_app_master_rate} and \eqref{eq:full_app_W_rate}, we can see that the first
summand on the right side of \eqref{eq:full_app_innovation_factorization} satisfies
\begin{align}
\|E_{\lambda,T}(L_{11}+\lambda I_{\clh_1})^{1/2}W_{\lambda,T}\|_{\widetilde\clh_2}
&\leq
\|E_{\lambda,T}(L_{11}+\lambda I_{\clh_1})^{1/2}\|_{\cls_2}
\|W_{\lambda,T}\|_{\clh_1}
\notag\\
&=
O_\PP\!\left(
T^{-1/2}\lambda^{-(2\gamma_\psi+1)/2}
\right)
=o_\PP(1),
\label{eq:full_app_innovation_operator_product}
\end{align}
where the final identity uses that 
$T\lambda^{2\gamma_\psi+1}\to\infty$. For the second summand on the right side of \eqref{eq:full_app_innovation_factorization}, we first note that 
$r_\lambda(X_{t-1})$ is $\calF_{t-1}$-measurable, so the same conditioning argument used to establish \eqref{eq:full_app_W_cross_zero} can be used to show that
\begin{align}
\EE[ 
\left\langle
\varepsilon_{X,s} r_{\lambda} (X_{s-1}),
\varepsilon_{X,t} r_{\lambda} (X_{t-1})
\right\rangle_{\widetilde{\clh}_2} 
] = 0.
\label{eq:expectation_H2_norm_squared_r_lambda_bias-2}
\end{align}
Together, \eqref{eq:identity_expectation_r_lambda_b_lambda_c_norm}, \eqref{eq:expectation_H2_norm_squared_r_lambda_bias-2}, and  the fact that 
$\EE(\varepsilon_{X,t}^2\mid\calF_{t-1})=\sigma^2$ imply that 
\begin{equation}
\EE\left\|
\frac1{\sqrt T}\sum_{t=1}^T
\varepsilon_{X,t}r_\lambda(X_{t-1})
\right\|_{\widetilde\clh_2}^2
=
\sigma^2b_{\lambda,c}^2 = o(1),
\label{eq:expectation_H2_norm_squared_r_lambda_bias}
\end{equation}
where the final identity in the previous display follows from Assumption~\ref{ass:full_empirical_centering_tuning}. Using Markov's inequality and \eqref{eq:expectation_H2_norm_squared_r_lambda_bias}, we see that 
\begin{align}
\left\|\frac1{\sqrt T}\sum_{t=1}^T
\varepsilon_{X,t}r_\lambda(X_{t-1})\right\|_{\widetilde\clh_2} = o_{\PP}(1). 
\label{eq:H2_norm_squared_r_lambda_bias_stochastic_order}
\end{align}
The conclusion in \eqref{eq:full_app_innovation_target} follows  from \eqref{eq:full_app_innovation_factorization}, \eqref{eq:full_app_innovation_operator_product}, and \eqref{eq:H2_norm_squared_r_lambda_bias_stochastic_order}.
\end{proof}

Recall the population-centered kernel sections $k_t^c$ defined in
\eqref{eq:full_centered_sections} and define the population-centered
cross-operator
\begin{equation}
\widetilde L_{21}^{c}:\mathcal H_1\to\widetilde{\mathcal H}_2,
\qquad
\widetilde L_{21}^{c}h
\doteq
\mathbb E_{\pi}\!\left[k_1^c h(X_0)\right],
\qquad h\in\mathcal H_1,
\label{eq:full_app_Cc_def}
\end{equation}
and its empirical counterpart
\begin{equation}
\widehat{\widetilde L}_{21,T}^{c}:
\mathcal H_1\to\widetilde{\mathcal H}_2,
\qquad
\widehat{\widetilde L}_{21,T}^{c}h
\doteq
\frac{1}{T}\sum_{t=1}^T
k_t^c h(X_{t-1}),
\qquad h\in\mathcal H_1.
\label{eq:full_app_Cchat_def}
\end{equation}
Note that the feature centering in \eqref{eq:full_app_Cchat_def} is done at the population level.

\begin{lemma}
\label{lem:full_cross_operator_rate}
Under Assumptions~\ref{assump:bounded_dynamics}-\ref{assump:gaussian}, we have $\widetilde L_{21}^{c}=0$ and
\begin{equation}
\sqrt T\,
\left\|
\widehat{\widetilde L}_{21,T}^{c}
-
\widetilde L_{21}^{c}
\right\|_{\op}
=
O_{\mathbb P}(1).
\label{eq:full_app_Cc_rate}
\end{equation}
\end{lemma}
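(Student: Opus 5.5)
Two things have to be shown: that $\widetilde L_{21}^{c}$ vanishes identically, and that the empirical operator converges to it at rate $T^{-1/2}$.

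\emph{The population operator vanishes.} Take $h\in\clh_1$. Since $h(X_0)$ is bounded and $\sigma(X_0)$-measurable, the Bochner tower property lets us condition on $X_0$:
\[
\widetilde L_{21}^{c}h=\EE_\pi\bigl[h(X_0)\,\EE_\pi[k_1^c\mid X_0]\bigr].
\]
By \eqref{eq:full_population_centering_identity}, $\EE_\pi[k_1^c\mid X_0]=\EE_\pi[\widetilde K_2((X_0,Y_0),\cdot)-m(X_0)\mid X_0]=0$ almost surely in $\widetilde\clh_2$. Hence $\widetilde L_{21}^{c}h=0$ for every $h\in\clh_1$. The Bochner integrals are well defined because $\|k_1^c\|_{\widetilde\clh_2}\le 2\kappa_2$ (Jensen gives $\|m(x)\|_{\widetilde\clh_2}\le\kappa_2$) and $|h(X_0)|\le\kappa\|h\|_{\clh_1}$.

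\emph{The rate.} We follow the proof of Proposition~\ref{prop:aux_operator_convergence}, with $\Phi_{12}$ replaced by a rank-one Hilbert--Schmidt map. Define $\Phi(x,y):\clh_1\to\widetilde\clh_2$ by
\[
\Phi(x,y)h\doteq \langle h,K_1(x,\cdot)\rangle_{\clh_1}\bigl(\widetilde K_2((x,y),\cdot)-m(x)\bigr).
\]
By the reproducing property, $\widehat{\widetilde L}_{21,T}^{c}=T^{-1}\sum_{t=1}^T\Phi(X_{t-1},Y_{t-1})$ and $\widetilde L_{21}^{c}=\EE_\pi\Phi(X_0,Y_0)$. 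The Hilbert--Schmidt norm of a rank-one operator is the product of the factor norms, so
\[
\sup_{(x,y)}\|\Phi(x,y)\|_{\cls_2}\le 2\kappa\kappa_2\doteq B,
\]
uniformly in $(x,y)$, with no dependence on $\lambda$.

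Measurability needs a check. Since $m$ is a fixed measurable version and $\widetilde\clh_2$ is separable, $\Phi$ is strongly measurable as an $\cls_2$-valued map, so the Bochner integrals against $P^h((x,y),\cdot)-\pi$ make sense.

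Set $\zeta_t=\Phi(X_{t-1},Y_{t-1})-\EE_\pi\Phi(X_0,Y_0)$. Then
\[
\EE_\pi\Bigl\|T^{-1}\sum_t\zeta_t\Bigr\|_{\cls_2}^2
\]
splits into diagonal terms bounded by $4B^2/T$ and cross terms. For the cross terms we reuse the argument of Lemma~\ref{lem:full_fluctuation_rates}. Testing against any $R$ with $\|R\|_{\cls_2}\le1$ gives scalar functions $q_R$ with $\|q_R\|_\infty\le 2B$ and $\pi$-mean zero. The Markov property and \eqref{eq:geom_ergodicity} then give
\[
\bigl\|\EE[\zeta_{t+h}\mid X_{t-1},Y_{t-1}]\bigr\|_{\cls_2}\le 4B\rho^hJ(X_{t-1},Y_{t-1}),
\]
so $|\EE_\pi\langle\zeta_t,\zeta_{t+h}\rangle_{\cls_2}|\le 8B^2C_\pi\rho^h$. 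Summing the geometric series yields
\[
\EE_\pi\bigl\|\widehat{\widetilde L}_{21,T}^{c}-\widetilde L_{21}^{c}\bigr\|_{\cls_2}^2\le \frac{C}{T}.
\]
Markov's inequality and $\|\cdot\|_{\op}\le\|\cdot\|_{\cls_2}$ then give \eqref{eq:full_app_Cc_rate}. Geometric ergodicity is available from Assumptions~\ref{assump:bounded_dynamics} and~\ref{assump:gaussian}, as noted before Section~\ref{subsec:asymptotic_distribution_consistency}.

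\emph{Main obstacle.} The only real subtlety is that the centering involves the unknown $m$. The sup-norm passage through $q_R$ needs a version of $m$ that is bounded and measurable everywhere, not just $\pi_X$-almost everywhere. If the fixed version fails the bound $\|m(x)\|_{\widetilde\clh_2}\le\kappa_2$ on a $\pi_X$-null set, I would redefine $m$ to be zero there. Stationarity makes this change invisible to the empirical operator almost surely, as in the modification argument in the proof of Lemma~\ref{lem:full_fluctuation_rates}. After that, the computation is routine.
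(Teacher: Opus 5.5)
Your proposal is correct and follows the paper's proof essentially step for step. The tower property together with \eqref{eq:full_population_centering_identity} gives $\widetilde L_{21}^c=0$, and the rate comes from the rank-one operator $k_t^c\otimes K_1(X_{t-1},\cdot)$ with Hilbert--Schmidt envelope $2\kappa\kappa_2$, the geometric-ergodicity second-moment argument of Lemma~\ref{lem:full_fluctuation_rates}, Markov's inequality, and $\|\cdot\|_{\op}\le\|\cdot\|_{\cls_2}$. Your added care about using a version of $m$ that is bounded everywhere is harmless and is left implicit in the paper; one such version is $m(x)=\int\widetilde K_2((x,y),\cdot)\,\pi_{Y\mid X}(dy\mid x)$.
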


\begin{proof}
For $h\in\clh_1$, the population-centering identity
\eqref{eq:full_population_centering_identity} and the tower property of conditional expectation give
\begin{equation}
\widetilde L_{21}^{c} h
=
\EE_\pi\!\left[
h(X_0)\EE_\pi(k_1^c\mid X_0)
\right]
=0, \qquad h \in \clh_1.
\end{equation}
Define the random rank-one Hilbert--Schmidt operator $U_t : \clh_1 \to \widetilde{\clh}_2 $ by
\begin{equation}
U_t
\doteq
k_t^c\otimes K_1(X_{t-1},\cdot),\quad
\text{ so that }\quad
    U_t h = \langle h, K_1(X_{t-1}, \cdot)\rangle_{\clh_1} k_t^c = h(X_{t-1})k_t^c, \qquad h \in \clh_1,
\end{equation}
from which we see that $\widehat{\widetilde L}_{21,T}^{c}h = \frac{1}{T} \sum_{t=1}^{T}  U_t h$.
Boundedness of $K_1$ and $\widetilde{K}_2$ ensure that 
$\|U_t\|_{\cls_2}\leq2\kappa\kappa_2$. Applying  similar arguments to the proof of 
Lemma~\ref{lem:full_fluctuation_rates} and noting that $\|\cdot \|_{\op}$ is dominated by $\|\cdot \|_{\cls_2}$, we see that, for some $C_4 \in (0,\infty)$, 
\begin{equation}
\EE_\pi\left\|
\frac1T\sum_{t=1}^T(U_t-\EE_\pi U_1)
\right\|_{\cls_2}^2
\leq
\frac{C_4}{T}.
\end{equation}
The identity in 
\eqref{eq:full_app_Cc_rate} then follows from Markov's inequality.
\end{proof}

\begin{lemma}
\label{lem:B-control}
Suppose Assumptions~\ref{assump:bounded_dynamics}-\ref{assump:gaussian} and the embedding bound \eqref{eq:full_embedding_bound}. Recall $B_{\lambda,T}$ 
from \eqref{eq:full_app_B_def}.
There exists a constant \(C \in (0,\infty) \), independent of \(T\) and
\(\lambda\), such that
\begin{equation}
\mathbb{P}\!\left(
    \|B_{\lambda,T}\|_{\op}>\frac12
\right)
\leq
\frac{C}{T\lambda^{2\gamma_\psi}}.
\label{eq:norm_B_lambda_T_bad_event_prob}
\end{equation}
Consequently, if \(\lambda=\lambda_T\) satisfies
\(T\lambda_T^{2\gamma_\psi}\to\infty\), then
\begin{equation}
\mathbb{P}\!\left(
    \|B_{\lambda_T,T}\|_{\op}\leq\frac12
\right)\longrightarrow 1.
\label{eq:norm_B_lambda_T_bad_event_prob_to_0}
\end{equation}
\end{lemma}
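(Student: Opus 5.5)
The bound on $\PP(\|B_{\lambda,T}\|_{\op}>1/2)$ follows from the identity \eqref{eq:full_app_B_D0_identity} in Lemma~\ref{lem:full_exact_factorization}, the second-moment estimate \eqref{eq:expected_norm_bound_op_D_0_lambda_T} from the proof of Lemma~\ref{lem:full_fluctuation_rates}, and Chebyshev's inequality. Note that \eqref{eq:full_app_B_D0_identity} is purely algebraic and does not use the event $\{\|B_{\lambda,T}\|_{\op}\le 1/2\}$. It gives
\[
\|B_{\lambda,T}\|_{\op}=D_{0,\lambda,T}\le \Bigl\|\frac1T\sum_{t=1}^T\bigl(P_\lambda(X_{t-1})-\EE_\pi P_\lambda(X_0)\bigr)\Bigr\|_{\cls_2},
\]
since the operator norm is dominated by the Hilbert--Schmidt norm.

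Next I would record the second-moment bound \eqref{eq:expected_norm_bound_op_D_0_lambda_T} with explicit constants. The centered summands have Hilbert--Schmidt envelope $2H_\lambda=2C_\psi^2\lambda^{-\gamma_\psi}$ by \eqref{eq:full_app_P_envelope}, which uses the embedding bound \eqref{eq:full_embedding_bound}. Arguing exactly as for $\zeta_t$, the diagonal terms contribute at most $4H_\lambda^2/T$. By the conditional-mean reduction, the total-variation bound \eqref{eq:geom_ergodicity} and stationarity, each lag-$h$ covariance is at most $8H_\lambda^2C_\pi\rho^h$, with $C_\pi=\EE_\pi J<\infty$. Summing over lags with $\sum_h (T-h)\rho^h\le T\rho/(1-\rho)$ gives
\[
\EE_\pi\Bigl\|\frac1T\sum_{t=1}^T\bigl(P_\lambda(X_{t-1})-\EE_\pi P_\lambda(X_0)\bigr)\Bigr\|_{\cls_2}^2\le \frac{C_3 C_\psi^4}{T\lambda^{2\gamma_\psi}},
\]
where $C_3=4+16C_\pi\rho/(1-\rho)$ depends only on $\rho$ and $C_\pi$. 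The one point requiring care is that this constant is independent of both $T$ and $\lambda$. This holds because $\lambda$ enters only through the envelope $H_\lambda$, and the mixing constants come from the chain, not from the kernel.

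Chebyshev's inequality then gives
\[
\PP\bigl(\|B_{\lambda,T}\|_{\op}>1/2\bigr)\le 4\,\EE_\pi D_{0,\lambda,T}^2\le \frac{4C_3C_\psi^4}{T\lambda^{2\gamma_\psi}},
\]
which is \eqref{eq:norm_B_lambda_T_bad_event_prob} with $C=4C_3C_\psi^4$. This holds for every $\lambda\in(0,1)$ and every $T$; the almost-everywhere modification of the kernels described in the proof of Lemma~\ref{lem:full_fluctuation_rates} makes the envelope valid pathwise.

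For \eqref{eq:norm_B_lambda_T_bad_event_prob_to_0}, substitute $\lambda=\lambda_T$. The right-hand side of \eqref{eq:norm_B_lambda_T_bad_event_prob} tends to $0$ when $T\lambda_T^{2\gamma_\psi}\to\infty$, so the complementary probability tends to one.
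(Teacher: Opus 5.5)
Your proposal is correct and follows the paper's proof. It uses the identity $\|B_{\lambda,T}\|_{\op}=D_{0,\lambda,T}$, domination of the operator norm by the Hilbert--Schmidt norm, the second-moment bound \eqref{eq:expected_norm_bound_op_D_0_lambda_T} with envelope $H_\lambda=C_\psi^2\lambda^{-\gamma_\psi}$, and Markov's inequality applied to the squared norm; your explicit constant $C=4C_3C_\psi^4$ makes the claimed independence from $T$ and $\lambda$ transparent.
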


\begin{proof}
Recall from \eqref{eq:full_app_B_D0_identity} in Lemma \ref{lem:full_exact_factorization} that \(\|B_{\lambda,T}\|_{\op}=D_{0,\lambda,T}\). Then, recalling  \eqref{eq:expected_norm_bound_op_D_0_lambda_T} from the proof of Lemma \ref{lem:full_fluctuation_rates} and using that 
\(\|\cdot\|_{\op}\leq\|\cdot\|_{\mathrm{S}_2}\), we see that,  for some $C_1 \in (0,\infty)$,  
\begin{equation}
\mathbb{E}\|B_{\lambda,T}\|_{\op}^{2}
\leq
\frac{C_0 H_\lambda^{2}}{T}
\leq
\frac{C_1}{T\lambda^{2\gamma_\psi}},
\end{equation}
where, as in that proof,  \(H_\lambda=C_\psi^2\lambda^{-\gamma_\psi}\). The inequality in \eqref{eq:norm_B_lambda_T_bad_event_prob} then follows from Markov's inequality and the conclusion in \eqref{eq:norm_B_lambda_T_bad_event_prob_to_0}  follows from our assumption that $T \lambda_T^{2\gamma_{\psi} }\to \infty$. 
\end{proof}

\section{Auxiliary Results}
\label{app:auxiliary_results}

This section collects auxiliary results used throughout the paper. These results are organized according to their roles.

\begin{lemma}
\label{lem:cond_center_consequences}
Suppose Assumptions~\ref{ass:cond_center} and \ref{assump:bounded_kernel}. Then every $h\in\clh_2$
satisfies
\begin{equation}
\EE_\pi[h(X_0,Y_0)\mid X_0]=0
\qquad\text{a.s.}
\label{eq:lemmaF1-eq1}
\end{equation}
Consequently, for $f \in \clh_1$ and $g \in \clh_2$, 
\begin{equation}
L_{21}^c f=0,
\qquad
L_{12}g=0.
\label{eq:lemmaF1-eq2}
\end{equation}
\end{lemma}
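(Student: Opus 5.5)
The plan is to pass Assumption~\ref{ass:cond_center} from kernel sections to their finite linear combinations, and then to the whole of $\clh_2$ by an $L^2(\pi)$-continuity argument. The operator identities \eqref{eq:lemmaF1-eq2} then follow from the tower property.

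\emph{Step 1: kernel sections.} For a fixed point $(x,y)$, the function $K_2^c((\cdot,\cdot),(x,y))$ is bounded by $\kappa^2$, using Cauchy--Schwarz in $\clh_2$ and Assumption~\ref{assump:bounded_kernel}. By symmetry of the kernel, Assumption~\ref{ass:cond_center} evaluated at $(x,y)$ gives
\[
\EE_\pi[K_2^c((X_0,Y_0),(x,y))\mid X_0]=0 \quad \text{a.s.}
\]
I read \eqref{eq:cond_center} pointwise in the evaluation argument. Linearity of conditional expectation then extends this to every finite combination $h_n=\sum_i c_i K_2^c(\cdot,(x_i,y_i))$. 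These combinations form a dense subspace $\clh_2^0$ of $\clh_2$.

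\emph{Step 2: density.} Take $h\in\clh_2$ and choose $h_n\in\clh_2^0$ with $\|h_n-h\|_{\clh_2}\to0$. By the reproducing property, $\|h_n-h\|_\infty\le\kappa\|h_n-h\|_{\clh_2}$, so $h_n\to h$ uniformly and hence in $L^2(\pi)$. The map $g\mapsto\EE_\pi[g(X_0,Y_0)\mid X_0]$ is an $L^2$ contraction, namely $\Pi_X$. Therefore
\[
\EE_\pi[h(X_0,Y_0)\mid X_0]=\lim_n \EE_\pi[h_n(X_0,Y_0)\mid X_0]=0
\]
in $L^2(\pi_X)$, and so almost surely. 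This proves \eqref{eq:lemmaF1-eq1}; equivalently, $\clh_2\subseteq L^2_{X,0}(\pi)$.

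\emph{Step 3: operator identities.} For $f\in\clh_1$, $f$ is bounded, and the computation \eqref{eq:L21orthogonality} applies pointwise in $(x,y)$:
\[
(L_{21}^c f)(x,y)=\EE_\pi\bigl[f(X_0)\,\EE[K_2^c((X_0,Y_0),(x,y))\mid X_0]\bigr]=0.
\]
Hence $L_{21}^cf$ is the zero function in $\clh_2$. For $g\in\clh_2$, Step~2 and the tower property give
\[
(L_{12}g)(x)=\EE_\pi\bigl[K_1(X_0,x)\,\EE[g(X_0,Y_0)\mid X_0]\bigr]=0
\]
for each $x$, using boundedness of $K_1$. Alternatively, $L_{12}=(L_{21}^c)^*$ as an adjoint pair between $\clh_2$ and $\clh_1$, which gives the second identity directly from the first.

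\emph{Main obstacle.} The only delicate point is measure-theoretic. The exceptional null set in Assumption~\ref{ass:cond_center} may depend on the evaluation point $(x,y)$. Passing to limits along sequences avoids any uncountable union, since each $h_n$ involves only finitely many points, and the limit is taken in $L^2$ rather than pointwise. Separability of $\clh_2$ lets me pick a countable dense set if a single null set is desired. I expect no further difficulty.
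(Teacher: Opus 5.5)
Your proposal is correct and follows the paper's proof essentially step by step. You show the conditional mean-zero property on finite spans of kernel sections, extend it to all of $\clh_2$ via $\|h\|_{L^2(\pi)}\le\kappa\|h\|_{\clh_2}$ and the $L^2$-contraction of conditional expectation, and apply the tower property to get $L_{21}^c f=0$ and $L_{12}g=0$; your observation that $L_{12}=(L_{21}^c)^*$ is a valid shortcut for the second identity, which the paper instead derives directly from the conditional mean-zero property.
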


\begin{proof}
We first prove \eqref{eq:lemmaF1-eq1}. The argument is to establish the
conditional mean-zero property on the finite span of kernel sections and
then extend it to all of $\clh_2$ by continuity.

Let
\begin{equation}
h_n
=
\sum_{j=1}^{m_n} a_{j,n}
K_2^c((x_{j,n},y_{j,n}),\cdot)
\end{equation}
be a finite linear combination of kernel sections. By symmetry of
$K_2^c$ and Assumption~\ref{ass:cond_center},
\begin{align}
\EE_\pi[h_n(X_0,Y_0)\mid X_0]
&=
\sum_{j=1}^{m_n} a_{j,n}
\EE_\pi\!\left[
K_2^c((x_{j,n},y_{j,n}),(X_0,Y_0))
\mid X_0
\right]
\notag\\
&=
\sum_{j=1}^{m_n} a_{j,n}
\EE_\pi\!\left[
K_2^c((X_0,Y_0),(x_{j,n},y_{j,n}))
\mid X_0
\right]
=0
\qquad\text{a.s.}
\label{eq:cond_center_finite_span}
\end{align}

Now let $h\in\clh_2$. Since finite linear combinations of kernel sections
are dense in $\clh_2$, choose $\{h_n\}_{n\in\NN}$ such that
\begin{equation}
\|h_n-h\|_{\clh_2}\to0.
\label{eq:hn-h-null}
\end{equation}
To pass the conditional mean-zero property from $h_n$ to $h$, it suffices to
show that the corresponding conditional expectations converge in
$L^2(\pi_X)$. Since
$\EE_\pi[h_n(X_0,Y_0)\mid X_0]=0$, we have
\begin{align}
\left\|
\EE_\pi[h(X_0,Y_0)\mid X_0=\cdot]
\right\|_{L^2(\pi_X)}
&=
\left\|
\EE_\pi[(h-h_n)(X_0,Y_0)\mid X_0=\cdot]
\right\|_{L^2(\pi_X)}
\notag\\
&\leq
\|h-h_n\|_{L^2(\pi)}
\notag\\
&\leq
\kappa\|h-h_n\|_{\clh_2}
\longrightarrow0.
\label{eq:conditional_center_extension}
\end{align}
The first equality follows from \eqref{eq:cond_center_finite_span}, the first inequality follows from the
$L^2$ contraction property of conditional expectation, and the second
follows from Assumption~\ref{assump:bounded_kernel} and the reproducing
property, since
\begin{equation}
|u(x,y)|
\leq
\|u\|_{\clh_2}
K_2^c((x,y),(x,y))^{1/2}
\leq
\kappa\|u\|_{\clh_2},
\qquad u\in\clh_2.
\end{equation}
Therefore,
\begin{equation}
\EE_\pi[h(X_0,Y_0)\mid X_0]=0
\qquad\text{a.s.},
\end{equation}
which proves \eqref{eq:lemmaF1-eq1}.

We next prove the two operator identities in \eqref{eq:lemmaF1-eq2}.  For $f\in\clh_1$, Assumption
\ref{ass:cond_center} and the tower property give, for every $(x,y)\in\RR^2$,
\begin{align}
        (L_{21}^cf)(x,y)
        &=
        \EE_\pi\left[
        K_2^c((X_0,Y_0),(x,y))f(X_0)
        \right]
        =
        \EE_\pi\left[
        f(X_0)
        \EE_\pi\left[
        K_2^c((X_0,Y_0),(x,y))
        \mid X_0
        \right]
        \right]
        =0 .
        \label{eq:L21c_zero_from_cond_center}
\end{align}
Thus $L_{21}^cf=0$.  Finally, for $g\in\clh_2$, the conditional
mean-zero property proved above implies that, for every $x\in\RR$,
\begin{align}
        (L_{12}g)(x)
        &=
        \EE_\pi\left[
        K_1(X_0,x)g(X_0,Y_0)
        \right]
        =
        \EE_\pi\left[
        K_1(X_0,x)
        \EE_\pi[g(X_0,Y_0)\mid X_0]
        \right]
        =0 .
        \label{eq:L12_zero_from_cond_center}
\end{align}
This completes the proof.
\end{proof}

\begin{lemma}\label{lem:gc_equivalence}
The condition $(I_{L^2(\pi)} - R) g_X = 0$ in $L^2(\pi)$ is equivalent to nonlinear Granger non-causality in the stationary $L^2(\pi)$ sense, namely that $g_X(x,y) = f_X(x)$
for $\pi$-a.e. $(x,y)\in\RR^2$ and some $f_X\in L^2(\pi_X)$.
\end{lemma}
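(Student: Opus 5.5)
The plan is to prove the two directions separately, using only the definition of $R=\Pi_X^*\Pi_X$ and the fact, noted in Section~\ref{sec:modelandproblemformulation}, that $R$ is the orthogonal projection of $L^2(\pi)$ onto $L_X^2(\pi)$. We implicitly take $g_X\in L^2(\pi)$, which Assumption~\ref{assump:bounded_dynamics} guarantees. Throughout we work with equivalence classes in $L^2(\pi)$ and $L^2(\pi_X)$. The key identity is
\begin{equation}
(Rg)(x,y)=(\Pi_X g)(x)=\EE_\pi[g(X_0,Y_0)\mid X_0=x],
\end{equation}
for $\pi$-a.e. $(x,y)$.

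For the forward direction, suppose $(I_{L^2(\pi)}-R)g_X=0$ in $L^2(\pi)$. Then $g_X=Rg_X=\Pi_X^*\Pi_X g_X$ $\pi$-a.e. Set $f_X\doteq\Pi_X g_X$. By the contraction property of conditional expectation, $\|f_X\|_{L^2(\pi_X)}\le\|g_X\|_{L^2(\pi)}<\infty$, so $f_X\in L^2(\pi_X)$. Since $\Pi_X^*f_X(x,y)=f_X(x)$, we obtain $g_X(x,y)=f_X(x)$ for $\pi$-a.e. $(x,y)$.

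For the converse, suppose $g_X(x,y)=f(x)$ $\pi$-a.e. for some $f\in L^2(\pi_X)$. Then $g_X=\Pi_X^*f$. Using the regular conditional distribution $\pi_{Y\mid X}$, we compute
\begin{equation}
\Pi_X g_X(x)=\int f(x)\,\pi_{Y\mid X}(dy\mid x)=f(x)
\end{equation}
for $\pi_X$-a.e. $x$. Hence $Rg_X=\Pi_X^*f=g_X$ in $L^2(\pi)$, so $(I-R)g_X=0$. The same computation also identifies $f=\Pi_X g_X=f_X$, which is consistent with the decomposition~\eqref{eq:obs_decomposed}.

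The only delicate point is the null-set bookkeeping, namely that a function of $x$ alone defined $\pi_X$-a.e. lifts to a $\pi$-a.e. statement. This follows from the disintegration $\pi(dx,dy)=\pi_{Y\mid X}(dy\mid x)\pi_X(dx)$: if $N\subset\RR$ is $\pi_X$-null, then $\pi(N\times\RR)=\pi_X(N)=0$. Otherwise the argument is a direct unpacking of definitions, and I expect no substantive obstacle.
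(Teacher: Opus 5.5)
Your proposal is correct and follows the paper's proof essentially line by line: both directions are a direct unpacking of $R=\Pi_X^*\Pi_X$, taking $f_X=\Pi_X g_X$ in one direction and computing $\Pi_X g_X=f$ via the conditional distribution $\pi_{Y\mid X}$ in the other. Your extra justifications, namely the contraction bound giving $f_X\in L^2(\pi_X)$ and the null-set bookkeeping via the disintegration of $\pi$, make explicit what the paper's proof leaves implicit.
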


\begin{proof}
Suppose first that $g_X(x,y) = f_X(x)$
for $\pi$-a.e. $(x,y)$ and some $f_X\in L^2(\pi_X)$. Then,
\begin{equation}
R g_X(x,y)
=
\EE\left[g_X(X,Y) \mid X = x\right]
=
\EE\left[f_X(X) \mid X = x\right]
=
f_X(x), \qquad \pi\text{-a.e. }(x,y) \in \RR^2.
\end{equation}
Hence $(I_{L^2(\pi)}-R)g_X = 0$ in $L^2(\pi)$. Conversely, suppose
$(I_{L^2(\pi)}-R)g_X = 0$. Then $g_X = R g_X$ in $L^2(\pi)$, and by
definition of $R$,
\begin{equation}
R g_X(x,y)
=
\EE\left[g_X(X,Y) \mid X = x\right],  \qquad \pi\text{-a.e. }(x,y) \in \RR^2.
\end{equation}
Define
\begin{equation}
f_X(x) \doteq \EE\left[g_X(X,Y) \mid X = x\right].
\end{equation}
Then $f_X \in L^2(\pi_X)$ and $g_X(x,y) = f_X(x)$ for $\pi$-a.e. $(x,y)$, which shows that $g_X$ depends only on $x$ in the stationary $L^2(\pi)$ sense.
\end{proof}

\begin{lemma}\label{lem:commute}
Let $\clh_1$ and $\clh_2$ be Hilbert spaces, and let $\mathcal A : \clh_1 \to \clh_2$ be a bounded linear operator with adjoint $\mathcal A^* : \clh_2 \to \clh_1$. Then, for any $\lambda > 0$,
\begin{equation}\label{eq:lem:commute:eq1}
\left(\mathcal A \mathcal A^* + \lambda I_{\clh_2}\right)^{-1}\mathcal A
=
\mathcal A \left(\mathcal A^* \mathcal A + \lambda I_{\clh_1}\right)^{-1}.
\end{equation}
\end{lemma}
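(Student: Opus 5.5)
The plan is to verify the identity directly by a push-through argument; no spectral calculus is needed. The first step is to note that both inverses exist as bounded operators. The operators $\mathcal A\mathcal A^*$ on $\clh_2$ and $\mathcal A^*\mathcal A$ on $\clh_1$ are bounded, self-adjoint and positive semidefinite. Hence, for every $h\in\clh_2$,
\[
\langle h,(\mathcal A\mathcal A^*+\lambda I_{\clh_2})h\rangle_{\clh_2}\ge\lambda\|h\|_{\clh_2}^2 ,
\]
and the analogous bound holds on $\clh_1$. This coercivity shows that each operator is injective with closed range. Self-adjointness makes the orthogonal complement of the range equal to the kernel, which is trivial, so each operator is bijective. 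By the bounded inverse theorem, or simply by Lax--Milgram, both inverses are bounded with operator norm at most $\lambda^{-1}$.

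The second step is the intertwining relation itself. By associativity of composition,
\[
(\mathcal A\mathcal A^*+\lambda I_{\clh_2})\mathcal A=\mathcal A\mathcal A^*\mathcal A+\lambda\mathcal A=\mathcal A(\mathcal A^*\mathcal A+\lambda I_{\clh_1}).
\]
Composing this identity with $(\mathcal A\mathcal A^*+\lambda I_{\clh_2})^{-1}$ on the left and with $(\mathcal A^*\mathcal A+\lambda I_{\clh_1})^{-1}$ on the right gives \eqref{eq:lem:commute:eq1} immediately.

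The only point that needs any care is the invertibility in the first step, since in infinite dimensions an injective operator need not be surjective. Once the coercivity argument is in place, the rest is pure algebra, so I do not expect a genuine obstacle. When the lemma is applied to obtain \eqref{eq:fhat_operator}, we take $\mathcal A=\mathcal K_1$ (so $\mathcal A^*=\mathcal K_1^*$) with regularization parameter $T\lambda_{T,f}$. Dividing by $T$ then converts $(K_1+T\lambda_{T,f}I_T)^{-1}$ into the form $(\widehat L_{11}+\lambda_{T,f}I_{\clh_1})^{-1}T^{-1}\mathcal K_1$.
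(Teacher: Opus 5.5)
Your proposal is correct and follows the paper's own proof: the intertwining identity $(\mathcal A\mathcal A^*+\lambda I_{\clh_2})\mathcal A=\mathcal A(\mathcal A^*\mathcal A+\lambda I_{\clh_1})$, multiplied by the two inverses on the left and right. The one difference is that you prove invertibility explicitly via coercivity and the bounded inverse theorem, where the paper only asserts it from self-adjointness and positive semidefiniteness.
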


\begin{proof}
Observe that
\begin{equation}\label{eq:lem:commute:eq2}
\mathcal A \left(\mathcal A^* \mathcal A + \lambda I_{\clh_1}\right)
=
\mathcal A \mathcal A^* \mathcal A + \lambda \mathcal A
=
\left(\mathcal A \mathcal A^* + \lambda I_{\clh_2}\right)\mathcal A.
\end{equation}
Since $\mathcal A^* \mathcal A$ and $\mathcal A \mathcal A^*$ are self-adjoint and positive semidefinite, the operators
\begin{equation}\label{eq:lem:commute:eq3}
\mathcal A^* \mathcal A + \lambda I_{\clh_1}
\quad \text{and} \quad
\mathcal A \mathcal A^* + \lambda I_{\clh_2}
\end{equation}
are invertible for every $\lambda > 0$. Left-multiplying \eqref{eq:lem:commute:eq2} by $\left(\mathcal A \mathcal A^* + \lambda I_{\clh_2}\right)^{-1}$ yields
\begin{equation}\label{eq:lem:commute:eq4}
\left(\mathcal A \mathcal A^* + \lambda I_{\clh_2}\right)^{-1}
\mathcal A \left(\mathcal A^* \mathcal A + \lambda I_{\clh_1}\right)
=
\mathcal A,
\end{equation}
and right-multiplying \eqref{eq:lem:commute:eq4} by $\left(\mathcal A^* \mathcal A + \lambda I_{\clh_1}\right)^{-1}$ gives the result.
\end{proof}

\begin{lemma}\label{lem:injectivitycovarianceoperator}
    For a probability measure $\mu$ on $\RR^d$, let $\clh \subseteq L^2(\mu)$ be the RKHS induced by a kernel $K : \RR^d \times \RR^d \to \RR$. Consider the operator $L : \clh \to \clh$ given by 
    \begin{equation}
    L h \doteq \EE_{\mu} [ h(U) K( U, \cdot) ], \qquad h \in \clh. 
    \end{equation}
Then, for every $h\in\mathcal H$,
$Lh=0$ in $\mathcal H$ if and only if 
$h=0$ in $L^2(\mu)$.
If, in addition, $\operatorname{supp}(\mu)=\mathbb R^d$
and $\mathcal H\subseteq C(\mathbb R^d:\mathbb R)$,
then $L$ is injective on $\mathcal H$.
\end{lemma}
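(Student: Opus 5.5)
The plan is to compute $\langle Lh,h\rangle_{\clh}$ explicitly and then use positivity of $L$. The argument has three steps.

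\emph{Step 1: the quadratic form.} Since $K$ is bounded in the paper's setting, the Bochner expectation defining $L$ commutes with the inner product. By the reproducing property,
\begin{equation}
\langle Lh,h\rangle_{\clh}
=
\EE_\mu\!\left[h(U)\langle K(U,\cdot),h\rangle_{\clh}\right]
=
\EE_\mu[h(U)^2]
=
\|h\|_{L^2(\mu)}^2 .
\end{equation}
More generally, for $g,h\in\clh$ we have $\langle Lh,g\rangle_{\clh}=\EE_\mu[h(U)g(U)]$. This shows that $L$ is self-adjoint and positive.

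\emph{Step 2: the equivalence.} If $Lh=0$ in $\clh$, then $\|h\|_{L^2(\mu)}^2=\langle Lh,h\rangle_{\clh}=0$, so $h=0$ in $L^2(\mu)$. Conversely, suppose $h=0$ $\mu$-a.e. Then for every $g\in\clh$,
\begin{equation}
\langle Lh,g\rangle_{\clh}=\EE_\mu[h(U)g(U)]=0 .
\end{equation}
Taking $g=Lh$ gives $\|Lh\|_{\clh}^2=0$, so $Lh=0$. Alternatively, one can argue pointwise: $(Lh)(x)=\EE_\mu[h(U)K(U,x)]=0$ for every $x$.

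\emph{Step 3: injectivity under full support.} Suppose $Lh=0$. By Step 2, $h=0$ $\mu$-a.e., so the set $\{h\neq 0\}$ has $\mu$-measure zero. Since $h$ is continuous, this set is open. An open set of $\mu$-measure zero cannot meet $\operatorname{supp}(\mu)$, because every point of the support has neighbourhoods of positive measure. As $\operatorname{supp}(\mu)=\RR^d$, the set $\{h\neq0\}$ is empty, so $h\equiv0$ as a function and therefore $h=0$ in $\clh$.

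\emph{Main obstacle.} The only delicate point is justifying the interchange of expectation and inner product, that is, the Bochner integrability of $h(U)K(U,\cdot)$. This follows from boundedness of $K$, or more generally from $\EE_\mu[K(U,U)]<\infty$ together with Cauchy--Schwarz, in which case $\|h(U)K(U,\cdot)\|_{\clh}\le \|h\|_{\clh}K(U,U)$ is integrable. Everything else is routine.
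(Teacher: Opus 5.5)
Your proof is correct and follows the same route as the paper. The identity $\langle h,Lh\rangle_{\clh}=\|h\|_{L^2(\mu)}^2$ gives the forward direction, $h(U)K(U,\cdot)=0$ a.s.\ (or your duality argument with $g=Lh$) gives the converse, and continuity plus full support turns $\mu$-a.e.\ vanishing into $h\equiv 0$. Your Bochner-integrability remark, via $\|h(U)K(U,\cdot)\|_{\clh}\le\|h\|_{\clh}K(U,U)$, also justifies the exchange of expectation and inner product, which the paper's proof uses without comment.
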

\begin{proof}
For $h\in\mathcal H$ and $U\sim\mu$, the reproducing
property gives
\begin{align}
\langle h , L h \rangle_{\clh} 
    &= \langle h, \EE_{\mu} [ h(U) K( U, \cdot) ] \rangle_{\clh}
    = \EE_{\mu} [  \langle h , h(U) K(U, \cdot) \rangle_{\clh}]\\
    &= \EE_{\mu} [ h(U) \langle h, K(U, \cdot) \rangle_{\clh}] 
    = \EE_{\mu} [ (h(U))^2]
=
\|h\|_{L^2(\mu)}^2.
\end{align}
Hence $Lh=0$ implies $h=0$ $\mu$-almost everywhere.
Conversely, if $h=0$ $\mu$-almost everywhere, then
$h(U)K(U,\cdot)=0$ almost surely, so $Lh=0$.

If $\operatorname{supp}(\mu)=\mathbb R^d$ and $h$ is
continuous, then $h=0$ $\mu$-almost everywhere implies
$h=0$ everywhere. Indeed, a nonzero value of $h$ would,
by continuity, give an open neighborhood on which $|h|$
is bounded away from zero, contradicting full support.
Thus, under these additional conditions, $L$ is
injective on $\mathcal H$.
\end{proof}

\section{Proofs of results in Section \ref{subsec:weaken_conditional_centering}}\label{appendix:proofSec6}

\begin{proof}[Proof of Theorem~\ref{thm:ZT_limit_weak_centering}]
Under the null hypothesis, the decomposition
\eqref{eq:circ_population_decomposition} gives
\begin{equation}
Z_T^\circ
=
G_T^\circ
+
\sqrt T\,L_{21}^\circ(f_X-\widehat f)
+
\sqrt T\,(\widehat L_{21}^\circ-L_{21}^\circ)
(f_X-\widehat f).
\label{eq:proof_circ_main_decomposition}
\end{equation}
By~\eqref{eq:circ_L21_operator_rate},
\begin{align}
\left\|
\sqrt T\,(\widehat L_{21}^\circ-L_{21}^\circ)
(f_X-\widehat f)
\right\|_{\clh_2^\circ}
\le
\sqrt T\,
\|\widehat L_{21}^\circ-L_{21}^\circ\|_{\op}
\|f_X-\widehat f\|_{\clh_1}
=
O_\PP(1)o_\PP(1)
=
o_\PP(1),
\label{eq:proof_circ_empirical_remainder}
\end{align}
where, under the null, $\widehat f=\widetilde f$ by
\eqref{eq:tildeffhat}, and hence
$\|f_X-\widehat f\|_{\clh_1}=o_\PP(1)$ by
Assumption~\ref{assump:first_stage_consistency}.

The assumed joint convergence in~\eqref{eq:first_stage_functional_clt},
combined with~\eqref{eq:proof_circ_main_decomposition}--
\eqref{eq:proof_circ_empirical_remainder} and Slutsky's theorem, gives
\[
Z_T^\circ
\xrightarrow{\distr}
\mathcal G_2^\circ+\mathcal R^\circ
\qquad
\text{in }\clh_2^\circ.
\]
The convergence of $Q_T^\circ$ follows from the continuous mapping theorem,
since $z\mapsto\|z\|_{\clh_2^\circ}^2$ is continuous.
\end{proof}

\begin{proof}[Proof of Proposition~\ref{prop:autonomous_null}]
Let $h\in L^2(\pi_X)$ be such that the expectation defining
$L_{21}^\circ h$ is well defined. Since
$\pi=\pi_X\otimes\pi_Y$ and the kernel has the product form
\eqref{eq:circ_product_kernel}, for every $(x',y')\in\RR^2$,
\begin{align}
(L_{21}^\circ h)(x',y')
&=
\EE_\pi\!\left[
\bar K_X(X,x')\bar K_Y(Y,y')h(X)
\right]
\notag\\
&=
\EE_{\pi_X}\!\left[
\bar K_X(X,x')h(X)
\right]
\EE_{\pi_Y}\!\left[
\bar K_Y(Y,y')
\right].
\label{eq:last-1}
\end{align}
The second factor is zero by~\eqref{eq:marginal_y_centering}. Hence
$L_{21}^\circ h=0$ for every such $h$.

Under the additional regularity conditions in the proposition,
\eqref{eq:circ_GT_limit} gives
$G_T^\circ\xrightarrow{\distr}\mathcal G_2^\circ$ in
$\clh_2^\circ$. Moreover,
$L_{21}^\circ(f_X-\widehat f)=0$ identically, so
\eqref{eq:first_stage_functional_clt} holds with
$\mathcal R^\circ=0$. The limiting statements therefore follow from
Theorem~\ref{thm:ZT_limit_weak_centering}.
\end{proof}

\section{Additional Sensitivity Analysis}
\label{sec:appendix_bandwidth}

This appendix provides an additional sensitivity analysis of the 
empirical-centering procedure with respect to the kernel bandwidth and
regularization parameters. Throughout, the sample size is fixed at $T=100$, and rejection rates are computed from $500$ Monte Carlo repetitions for each configuration. The regularization parameters are varied over the same grids as in Section~\ref{sec:sensitivity}.
Three bandwidth specifications are considered:
\begin{equation}
\tau=\sqrt{2},
\qquad
\tau=\sqrt{2}\ \text{after standardizing the data},
\qquad
\tau=\text{median pairwise distance}.
\end{equation}
For the median-distance choice, the bandwidths for $K_1(x,x')$ and $\widetilde K_2((x,y),(x',y'))$ are computed separately as the median pairwise distances of the corresponding univariate and bivariate observations. 

In addition to evaluating finite-sample performance, the simulations provide insight into the distinct roles of the bandwidth and regularization parameters, leading to practical guidelines for their selection; see Section~\ref{sec:tuning_discussion} for further discussion.

\subsection{Null Model}

For the null model, data are generated from the bivariate nonlinear VAR model with
\begin{equation}
g_X(x,y)
=
a\sin(x),
\qquad
g_Y(x,y)
=
0.4\cos(x)+0.4\tanh(y),
\end{equation}
where $a\in\{0.5,1,1.5\}$.
The innovations are generated independently from normal distributions with standard deviation $\sigma\in\{0.5,1,1.5\}$.
By varying both $a$ and $\sigma$, we investigate the sensitivity of the procedure to changes in the strength of the self-dependence and the noise level. Figure~\ref{fig:null_bandwidth} reports the empirical rejection rates under the null model for the three bandwidth specifications. Under the null hypothesis, the rejection rates should be close to the nominal significance level of $0.05$.

\begin{figure}[p]
\centering

\IfFileExists{Figure_3a.png}{\includegraphics[height=0.31\textheight]{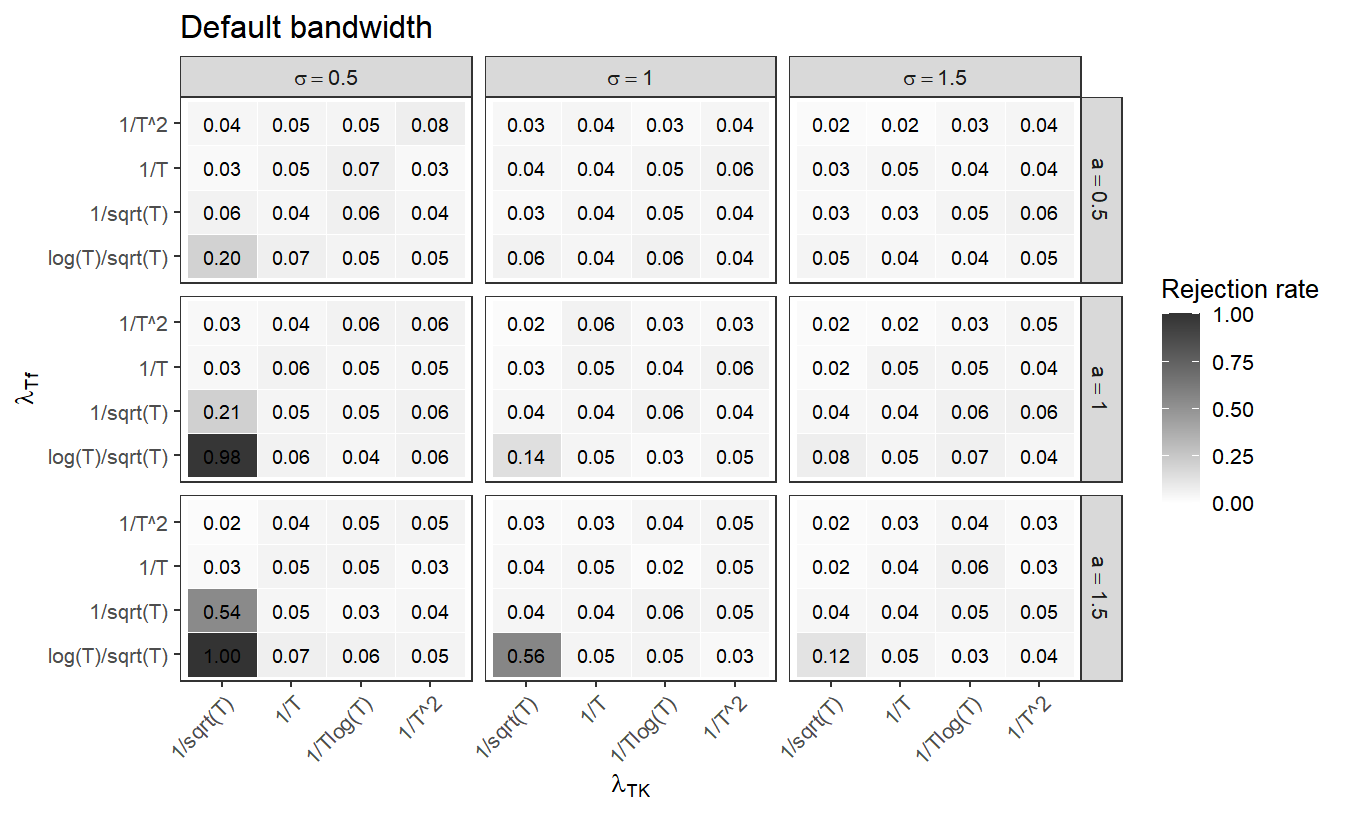}}{\fbox{\parbox[c][0.31\textheight][c]{0.29\textwidth}{\centering Figure omitted.}}}

\IfFileExists{Figure_3b.png}{\includegraphics[height=0.31\textheight]{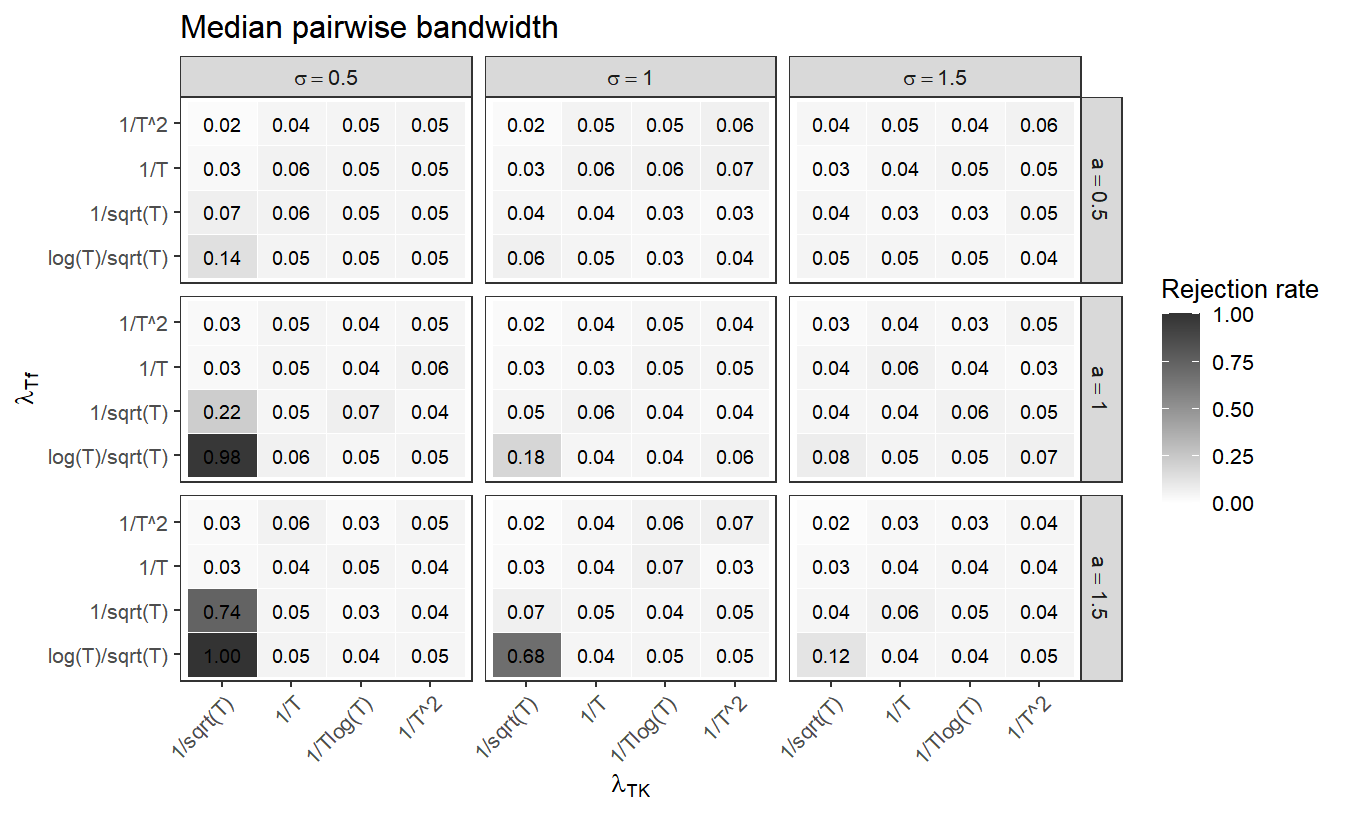}}{\fbox{\parbox[c][0.31\textheight][c]{0.29\textwidth}{\centering Figure omitted.}}}

\IfFileExists{Figure_3c.png}{\includegraphics[height=0.31\textheight]{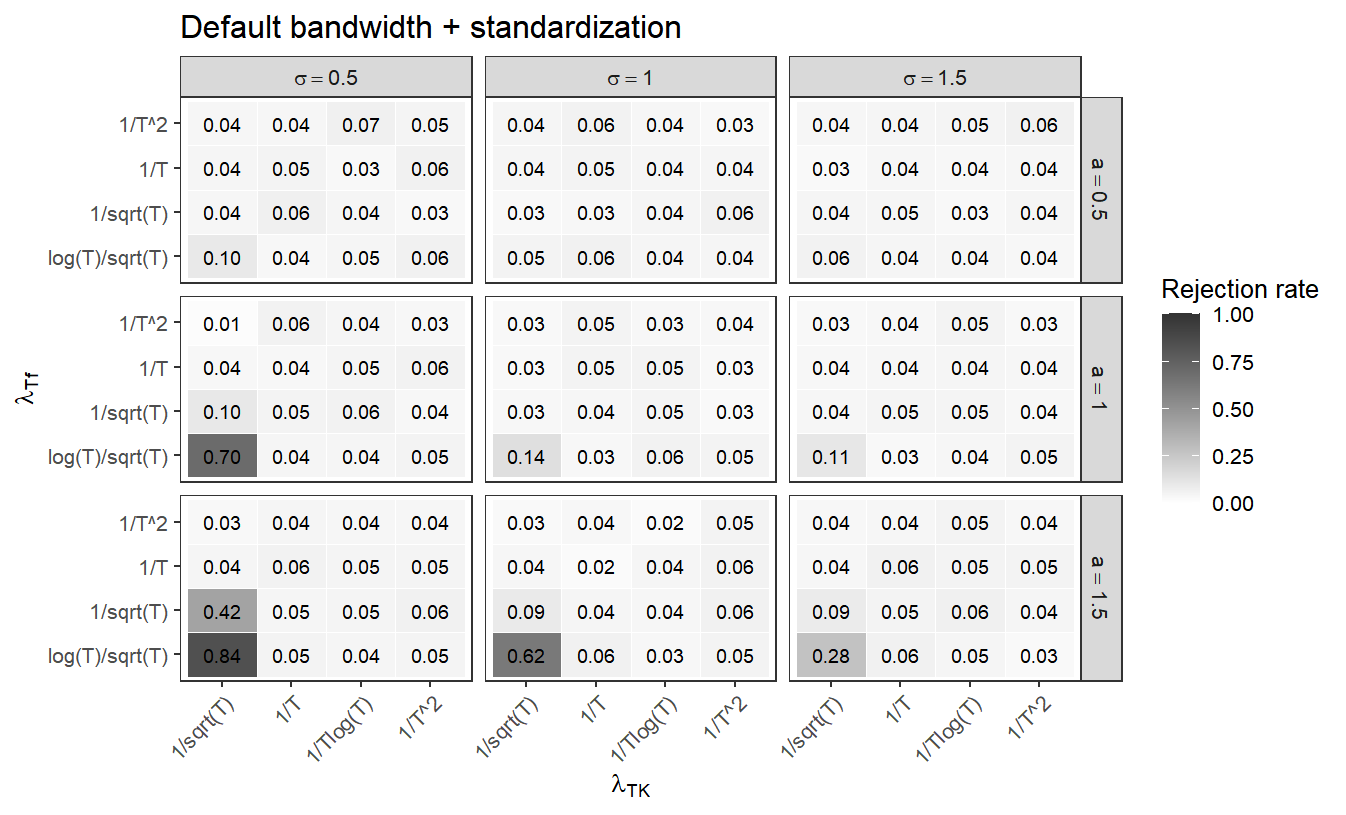}}{\fbox{\parbox[c][0.31\textheight][c]{0.29\textwidth}{\centering Figure omitted.}}}

\caption{Rejection rates under the null model for the three bandwidth specifications.}
\label{fig:null_bandwidth}
\end{figure}

\subsection{Alternative Model}

For the alternative model, data are generated from the bivariate nonlinear VAR model with
\begin{align}
g_X(x,y)
&=
0.5\sin(x)
+
a\tanh(2y)\frac{1+\tanh(5x)}{2},
\\
g_Y(x,y)
&=
0.4\cos(x)+0.4\tanh(y),
\end{align}
where $a\in\{0.25,0.5,0.75,1\}$.
The innovations are generated independently from normal distributions with standard deviation $\sigma\in\{1,1.5,2\}$.
By varying both $a$ and $\sigma$, we investigate the sensitivity of the procedure to changes in the strength of the Granger-causal signal and the noise level. Figure~\ref{fig:alt_bandwidth} reports the empirical rejection rates under the alternative model for the three bandwidth specifications. A powerful procedure should yield rejection rates close to $1$.

\begin{figure}[p]
\centering

\IfFileExists{Figure_4a.png}{\includegraphics[height=0.31\textheight]{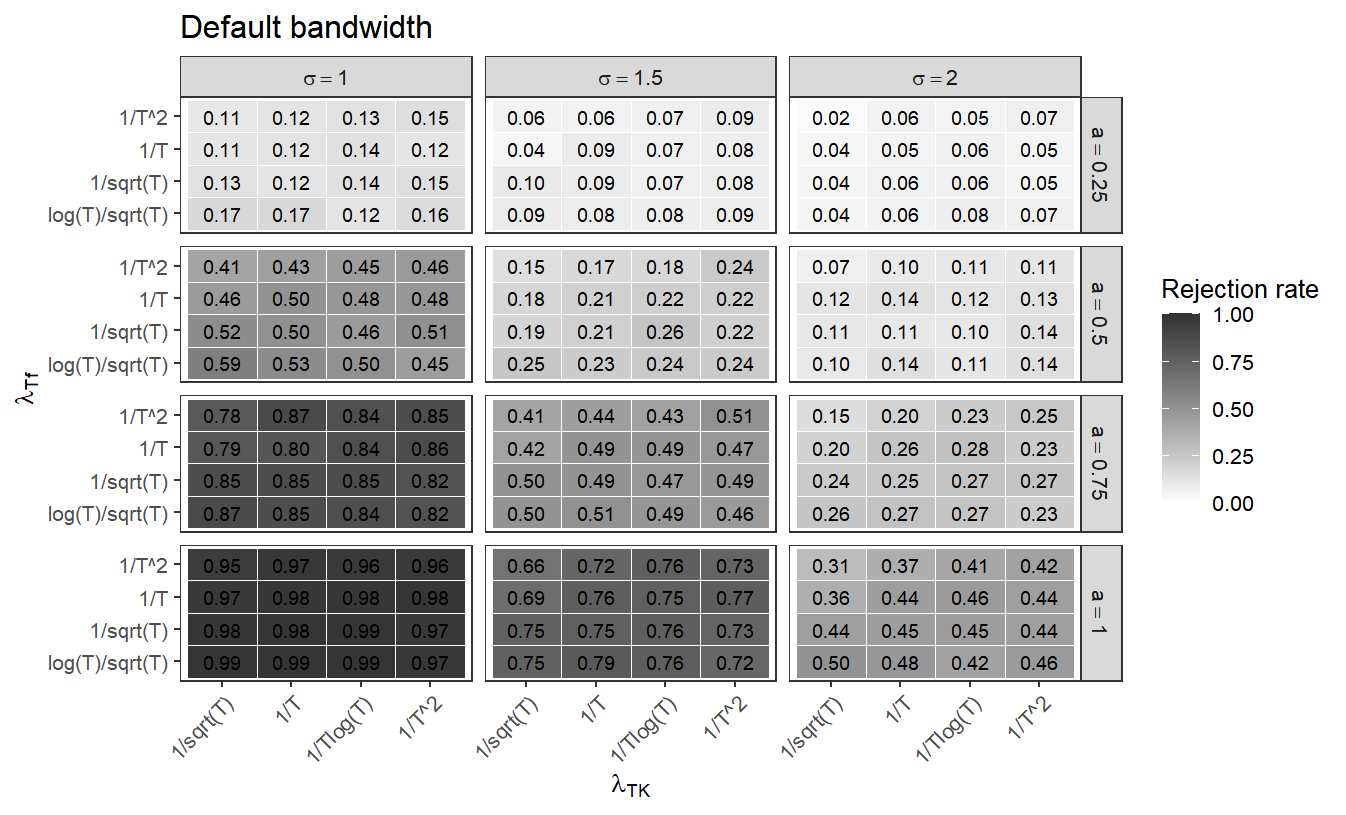}}{\fbox{\parbox[c][0.31\textheight][c]{0.29\textwidth}{\centering Figure omitted.}}}

\IfFileExists{Figure_4b.png}{\includegraphics[height=0.31\textheight]{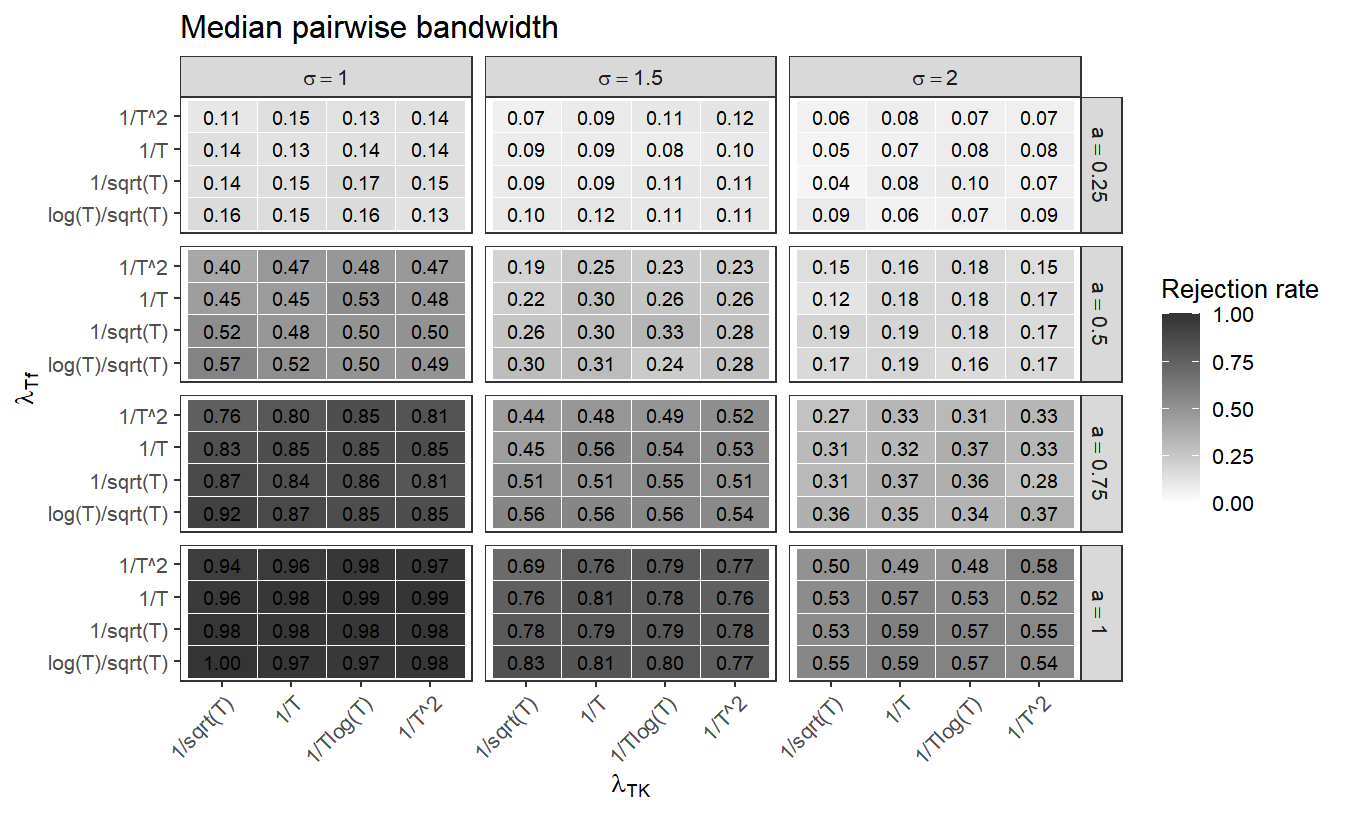}}{\fbox{\parbox[c][0.31\textheight][c]{0.29\textwidth}{\centering Figure omitted.}}}

\IfFileExists{Figure_4c.png}{\includegraphics[height=0.31\textheight]{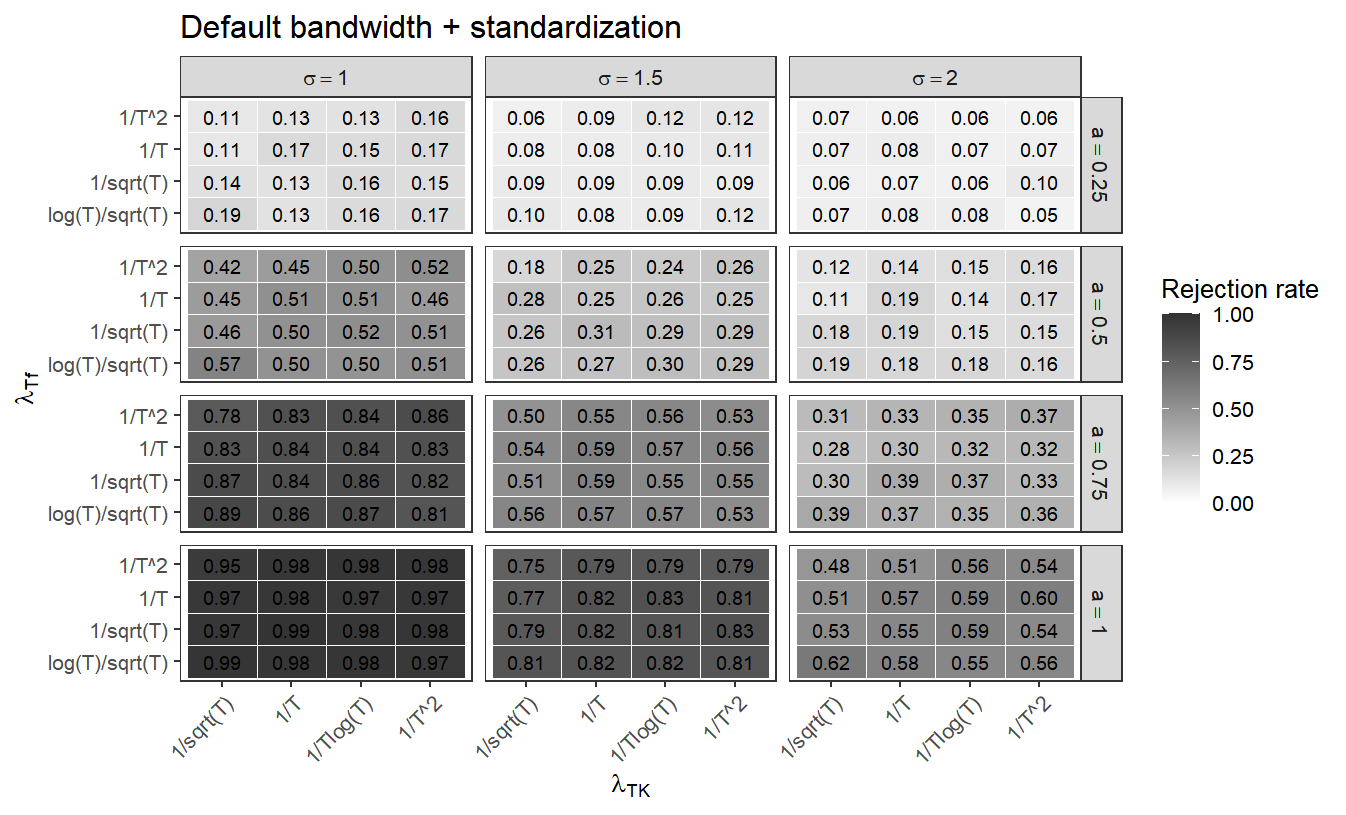}}{\fbox{\parbox[c][0.31\textheight][c]{0.29\textwidth}{\centering Figure omitted.}}}

\caption{
Rejection rates under the alternative model for the three bandwidth specifications.
}
\label{fig:alt_bandwidth}
\end{figure}

\subsection{Discussion of Tuning Parameter Choices}
\label{sec:tuning_discussion}

The simulations in Sections~\ref{sec:sensitivity} and
\ref{sec:appendix_bandwidth} provide additional insight into the roles of
the bandwidth parameter and the regularization parameters
$\lambda_{T,f}$ and $\lambda_{T,K}$.

For Gaussian kernels, the bandwidth determines the effective geometry of
the induced reproducing kernel Hilbert space and therefore influences how
well the $Y$-dependent component $g_{X,0}$ can be represented in finite
samples. As discussed after Theorem~\ref{thm:ZT_alt}, the asymptotic power
of the proposed procedure depends on the image of $g_{X,0}$ under the
operator $L_{22}^c$. Consequently, bandwidth choices that better capture the
structure of the underlying Granger-causal signal can improve finite-sample power.
For example, when the Granger-causal effect exhibits relatively rapid local
variation, smaller bandwidths may be preferable. More generally, when
little prior information is available, either the median pairwise distance
rule or the default choice $\tau=\sqrt{2}$ applied after
standardization provides a practical and robust choice.

The appendix simulations indicate that the bandwidth choice primarily
affects power rather than size control. The differences are most visible
for moderate signal-to-noise settings, whereas for very weak or very strong
signals the three bandwidth specifications yield similar performance.
Moreover, the proposed framework is not restricted to Gaussian kernels.
When prior information regarding the form of the nonlinear relationship is
available, alternative kernels that better represent the underlying signal
may further improve power.

The regularization parameter $\lambda_{T,f}$ controls the shrinkage of the
first-stage kernel ridge regression estimator
$\widehat f$ in~\eqref{eq:fhat_repr}. Consequently, it governs the tradeoff
between leakage of the own-history component $f_X$ and suppression of the
$Y$-dependent component $g_{X,0}$. Larger values of $\lambda_{T,f}$ produce
stronger shrinkage, allowing more of $f_X$ to remain in the residuals while
simultaneously preserving more of the signal contained in $g_{X,0}$. This
behavior is reflected in both the null and alternative simulations, where
larger values of $\lambda_{T,f}$ tend to increase power but may also lead to
size inflation.

The default finite-sample choice
\begin{equation}
\lambda_{T,f}
=
\frac{\log(T)}{\sqrt{T}}
\end{equation}
should be interpreted as an implementation rule for the 
procedure, rather than as the only asymptotically admissible first-stage
regularization. The simulations suggest that this choice is relatively
aggressive in finite samples, yielding increased power at the expense of
some size distortion.

The second regularization parameter $\lambda_{T,K}$ arises in the kernel
centering procedure through \eqref{eq:SK_residualizer_def} and
\eqref{eq:K2hatc_residualized_def}. Smaller values of $\lambda_{T,K}$ reduce
the regularization imposed on the empirical centering fit and allow a larger
portion of the feature variation explained by $X$ to be removed. The
simulations suggest that $\lambda_{T,K}$ mainly affects size control, whereas
$\lambda_{T,f}$ governs the primary power-size tradeoff. In particular, in
the settings considered here, choosing $\lambda_{T,K}\leq 1/T$ mitigates much
of the size inflation associated with aggressive choices of
$\lambda_{T,f}$ while preserving most of the corresponding power gain. This
highlights a practical merit of the two-stage construction, as relatively
aggressive first-stage regularization can be used to increase power, while
the subsequent empirical-centering step controls the accompanying size
inflation without removing most of that gain.

Taken together, these simulations support the default choices
\begin{equation}
\lambda_{T,f}
=
\frac{\log(T)}{\sqrt{T}},
\qquad
\lambda_{T,K}
=
\frac{1}{T\log(T)},
\end{equation}
as finite-sample tuning rules for the procedure, providing a
favorable empirical balance between power and size control. These choices are
not covered by the asymptotic regimes in
Assumption~\ref{ass:full_empirical_centering_tuning}; admissible logarithmic
and polynomial sequences are given in
Proposition~\ref{prop:full_tuning_choices}.

For practitioners seeking a simpler finite-sample tuning rule, the simulation
results also suggest using $\lambda_{T,f}=\lambda_{T,K}=1/T$
for the procedure. This recommendation is empirical and is not an
asymptotic claim under Proposition~\ref{prop:full_tuning_choices}.

\section*{\normalsize{Funding}}

This work was supported by the Deutsche Forschungsgemeinschaft (DFG, German Research Foundation) grant number 585494951.

\section*{\normalsize{Declaration of generative AI and AI-assisted technologies}}

During the preparation of this work, the authors used ChatGPT (OpenAI) to assist with drafting and revising text, standardizing notation, and reviewing and clarifying mathematical arguments and proofs. After using this tool, the authors reviewed and edited the content as needed and take full responsibility for the content of the article.

\bibliographystyle{plainnat}
\bibliography{bib}

\end{document}